\newcommand{\lyxmathsym}[1]{\ifmmode\begingroup\def\b@ld{bold}
  \text{\ifx\math@version\b@ld\bfseries\fi#1}\endgroup\else#1\fi}
\theoremstyle{definition}
 \newtheorem{example}{\protect\examplename}
\theoremstyle{plain}
\newtheorem{thm}{\protect\theoremname}
\theoremstyle{plain}
\newtheorem{cor}{\protect\corollaryname}
\theoremstyle{plain}
\newtheorem{lem}{\protect\lemmaname}
\providecommand{\U}[1]{\protect \rule{.1in}{.1in}}
\DeclareMathOperator{\E}{\mathbb{E}}
\DeclareMathOperator{\Prob}{\mathbb{P}}
\DeclareMathOperator{\R}{\mathbb{R}}
\newcommand{\euler}{\mathrm{e}}
\providecommand{\corollaryname}{Corollary}
\providecommand{\examplename}{Example}
\providecommand{\lemmaname}{Lemma}
\providecommand{\theoremname}{Theorem}
\begin{document}
\title{Fuzzy Conventions}
\author{Marcin Pęski}
\date{\today}
\begin{abstract}
We study binary coordination games with random utility played in networks.
A typical equilibrium is fuzzy - it has positive fractions of agents
playing each action. The set of average behaviors that may arise in
an equilibrium typically depends on the network. The largest set (in
the set inclusion sense) is achieved by a network that consists of
a large number of copies of a large complete graph. The smallest set
(in the set inclusion sense) is achieved on a lattice-type network.
It consists of a single outcome that corresponds to a novel version
of risk dominance that is appropriate for games with random utility.
\thanks{TBA}
\end{abstract}

\maketitle
\smallskip{}

\section{Introduction}

An individual's behavior in social or economic situations is often
positively influenced by similar decisions made by their friends,
acquaintances, or neighbors. Examples include the decision to maintain
a neat front yard, to obey speed limits or tax laws, or to engage
in criminal activity. A substantial literature has shown that the
details of the network of social interactions may affect which of
the equilibria is more likely to arise (see, for example, references
in \cite{jackson_games_2015}). A typical result in this literature
establishes conditions under which a particular behavior is adopted
by everybody and becomes a convention (see \cite{young_evolution_1993},
\cite{ellison_learning_1993}, among many others). At the same time,
a completely uniform behavior is very rare in the real world. Even
in situations which clearly involve positive externalities, there
will often be interactions in which neighbors make the opposite choices. 

An obvious reason for heterogeneous behavior is that individuals are
different and their tastes and unique circumstances play just as important
of a role in determining their decisions as the behavior of their
neighbors. The goal of this paper is to analyze the impact of heterogeneity
in a systematic way. A natural question is how adding heterogeneity
in tastes affects our ability to predict the unique outcome. What
can we say about the set of possible equilibrium conventions and how
does it depend on the network, and other parameters of the model,
like taste distribution?

To address these questions, we study a random utility coordination
game played in a network. Each player chooses a binary action and
the relative gain from the action is increasing in the fraction of
neighbors who make the same choice. Additionally, as in the literature
on random choice, payoffs are subject to individual i.i.d. shocks.
The independence assumption is key for our results and it is appropriate
for some, but not all applications. An individual's equilibrium action
as well as the aggregate distribution of equilibrium actions depend
on the realization of the entire profile of payoff shocks. We are
interested in the asymptotics of the average (i.e., aggregate) behavior
as the network becomes arbitrarily large and, importantly, the graph
becomes sufficiently fine, i.e., the weight of the largest neighbor
in an neighborhood of each player becomes sufficiently small. The
latter ensures that no single individual has a disproportionate impact
on another and it is the second key assumption in our model. 

In contrast to simple model of coordination games, a typical equilibrium
in our model is fuzzy - it has positive fractions of populations playing
each action. Also, despite there being only two potential actions,
a coordination game may have many more than two equilibria. To illustrate
the latter point, consider a continuum toy version of the model, in
which individual payoffs depend on the fraction $x$ of agents choosing
the high action in the entire population. Let $P\left(x\right)$ be
the probability of a payoff shock for which the agent best response
is to choose the high action as well. Function $P$ has values between
0 and 1 and is increasing in $x$, but is otherwise arbitrary. An
example is illustrated on Figure \ref{fig:Continuum-best-response}.
Fixed points of $P$, i.e., intersections of the graph of $P$ with
$45^{\circ}$ diagonal, correspond to equilibria of the toy model.
\begin{figure}
\begin{centering}
\begin{tikzpicture} [scale = 0.6, dot/.style={circle,inner sep=1.3pt,fill},   extended line/.style={shorten >=-10,shorten <=-10}] 

	\draw [->, thick] (0,0)  -- (11,0) node [below right, scale=1] {$x$};
	\draw [->, thick] (0,0)  -- (0,11) node [above left, scale=1] {$P(x)$};
	\draw (10,0) node [below, scale=1] {$1$} -- (10,10);
	\draw (0,10) node [left, scale=1] {$1$} -- (10,10);
	\draw [thin] (0,0) -- (10,10);
	\draw (0,0.8) .. controls (0.4,0.8) .. (1,1) 
					.. controls (1.6,1) .. (2,2)
					.. controls (2.7,6) and (4,6) .. (6,6)
					.. controls (8,6.25) and (7.5,6) .. (8,8)
					.. controls (8.5,8.9)  .. (9.2,9.2)
					.. controls (9.6,9.3) .. (10,9.4);
	\draw [dashed] (1,1) -- (1,0) node [below, scale=1] {$x_{\min}$};
	\draw [dashed] (6,6) -- (6,0) node [below, scale=1] {$x^*$};
	\draw [dashed] (9.2,9.2) -- (9.2,0) node [below, scale=1] {$x_{\max}$};
\end{tikzpicture}
\par\end{centering}
\caption{\label{fig:Continuum-best-response}Continuum best response function
$P$}
\end{figure}
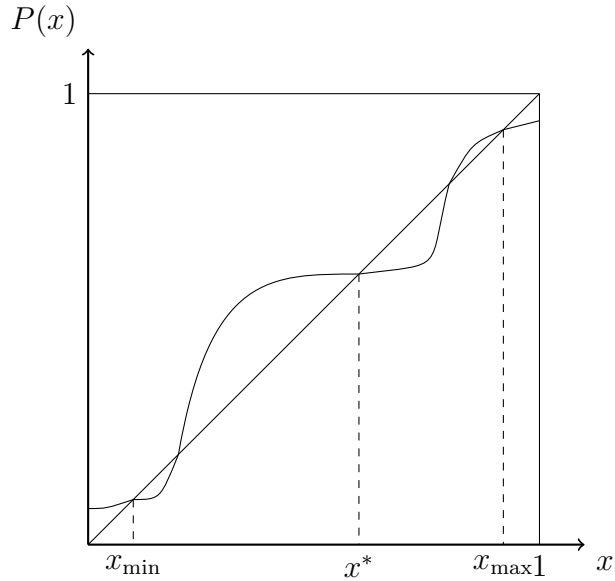

The goal of this paper is to study the set of all possible equilibrium
conventions or, more precisely, the set of equilibrium average actions.
Our results characterize the asymptotic upper and lower bounds \emph{in
the sense of set inclusion} on the equilibrium sets, across all networks.
Two results characterize the upper bound:
\begin{itemize}
\item Theorem \ref{thm:Complete graph} shows that if players live on a
sufficiently large complete graph, all stable fixed points of $P$
(essentially, fixed points where the graph of $P$ crosses the diagonal
from above) are arbitrarily close to average actions in some equilibrium.
(This and all subsequent results are stated ``with a probability
arbitrarily close to 1.'' ) That, generically, includes the largest
$x_{\max}$ and the smallest $x_{\min}$ fixed point of $P$. The
proof of Theorem \ref{thm:Complete graph} is straightforward. \\
A corollary to the Theorem shows that when players live on sufficiently
many disjoint copies of sufficiently large complete graphs, different
equilibria on component networks can be mixed and matched so that
the total average approximates arbitrary point on the interval $\left[x_{\min},x_{\max}\right]$
. 
\item Theorem \ref{thm:Largest equilibrium} shows that for all sufficiently
large and fine networks, there are no equilibria with average payoffs
above $x_{\max}$ or below $x_{\min}$. Although the statement is
very intuitive, our proof is surprisingly complicated. The difficulty
is to show that none of the profiles with average payoffs outside
of the range is an equilibrium. There are many such candidate profiles
and the claim must simultaneously address all of them. The difficulty
is compounded by the lack of additional assumptions on the network. 
\end{itemize}
Together, the two theorems show that the interval $\left[x_{\min},x_{\max}\right]$
is a tight upper bound on the sets of equilibrium average actions
across all networks. In this way, we obtain the strongest \emph{partial
identification} theory possible: without any further information about
the network, an econometrician who uses observed average behavior
$x$, can conclude that the parameters of the model must be such that
the parameter-dependent set $\left[x_{\min},x_{\max}\right]$ contains
$x$.

In particular, $x_{\min}=x_{\max}$ is a sufficient condition for
the existence of a unique heterogeneous equilibrium convention, regardless
of the network. As the subsequent results show, this condition is
not necessary for some networks. 

In order to characterize the lower bound on the equilibrium sets,
define a random utility-dominant, or $RU$-dominant, outcome $x^{*}$
as a solution to the maximization problem
\[
x^{*}\in\arg\max_{x}\intop_{0}^{x}\left(y-P^{-1}\left(y\right)\right)dy.
\]
(See Figure \ref{fig:Continuum-best-response}.) An $RU$-dominant
outcome is generically a stable fixed point of $P$. The notion of
RU-dominance is one of the contributions of this paper. When the impact
of payoff shocks on an individual utility converges to 0, the RU-dominant
outcome converges to the risk-dominant outcome (as in \cite{harsanyi_general_1988})
of the deterministic coordination $2\times2$ game. 

We have two results:
\begin{itemize}
\item Theorem \ref{thm:RUdominant selection on lattice} shows that there
exist networks where the average payoff in each equilibrium is arbitrarily
close to $x^{*}$. One example of such a network is a 2-dimensional
lattice. The idea of the proof is to show that for each profile with
an average behavior that is not $RU$-dominant, contagion-like best
response dynamics would bring the behavior close to $x^{*}$. The
proof uses an idea from \cite{morris_contagion_2000} to show how
a contagion wave spreads across lattice networks. This is supplemented
with explicit calculations of (a) the likelihood that a favorable
configuration of payoff shocks may initiate such a wave, and (b) the
likelihood that such a wave would not be stopped by an unfavorable
configuration of payoff shocks. The problem with the latter is the
reason why the 1-dimensional network of \cite{ellison_learning_1993}
is not a good example for the result and a 2-dimensional lattice is
needed.
\item Theorem \ref{thm:RU dominant always} shows that any sufficiently
large and fine network has an equilibrium with average payoffs close
to $x^{*}$. The starting point of the proof is a beautiful idea from
\cite{morris_contagion_2000} where it is shown that it is not possible
to spread risk-dominated actions by contagion. This idea is adapted
to work for $RU$-dominance, random utilities, etc. 
\end{itemize}
The two results together show that the single-element set $\left\{ x^{*}\right\} $
is a tight lower bound on all sets of equilibrium average payoffs
across all networks. This leads to an \emph{equilibrium selection}
theory: only outcome $x^{*}$ is robust to changes in the underlying
network.

Coordination games form one of three main approaches in the literature
that studies games in networks (\cite{jackson_games_2015}). The second
set of results of this paper is very closely related, and it greatly
benefits from the literature on contagion in networks, especially
from two beautiful papers, \cite{ellison_learning_1993} and \cite{morris_contagion_2000}.
\cite{ellison_learning_1993} (see also \cite{ellison_basins_2000})
was the first to show that a risk-dominant action can spread from
a small initial set of deviators to an entire 1-dimensional lattice
network by a simple best response process. \cite{morris_contagion_2000}
describes properties of networks for which Ellison's contagion wave
exists. Among others, any contagion wave from 1-dimensional lattices
can also be used in higher dimensions. \cite{morris_contagion_2000}
also shows that risk-dominated actions cannot spread through a best
response process no matter what is the geometry of the network. 

Evolutionary game theory (\cite{kandori_learning_1993}, \cite{young_evolution_1993},
\cite{blume_statistical_1993}, \cite{newton_conventions_2021}, and
many others) studies the long-run behavior of perturbed best response
processes, where players commit mistakes with a small probability,
and instead of choosing a best response, take some other action. One
of the key results of this literature is that the risk-dominant coordination
is (uniquely) stochastically stable regardless of the underlying network
(\cite{peski_generalized_2010}). Our current results (specifically,
Theorems \ref{thm:RUdominant selection on lattice} and \ref{thm:RU dominant always})
are closely related, but with some key differences. On the one hand,
there is a relation between ``noise'' in the behavioral rules of
the evolutionary literature and ``noise'' in the payoffs of the
current paper. On the other hand, there are two important differences:
We are interested here in static equilibria instead of a dynamic adjustment
process and our payoff shocks are permanent instead of temporary mistakes.
Finally, the evolutionary literature is subject to the criticism that
one may need to wait for a really long time before reaching a stochastically
stable outcome (\cite{ellison_learning_1993}). That criticism does
not apply to our model. 

Section \ref{sec:Model} contains the model. The next four sections
state and discuss the four theorems mentioned above. The last section
concludes. 

\section{Model\label{sec:Model}}

\subsection{Coordination game in a network}

There are $N$ agents $i=1,...,N$ who live in the nodes of a network.
The network is defined as an undirected weighted graph with weights
$g_{ij}=g_{ji}\geq0$ for $i,j\leq N$. We assume that $g_{ii}=0$
and that $g_{i}=\sum_{j}g>0$ for each player $i$. Let 
\begin{align*}
d\left(g\right) & =\max_{i,j}\frac{g_{ij}}{g_{i}}\text{ and }w\left(g\right)=\frac{\max_{i}g_{i}}{\min_{i}g_{i}},
\end{align*}
where $d\left(g\right)\in\left[0,1\right]$ is a bound on the importance
of a single player in another player's neighborhood and it describes
how fine the network is, and $w\left(g\right)\geq1$ is a rough measure
of the degree inequality. A network is\emph{ balanced} if all players
have the same degree $g_{i}=g_{j}$ for each $i,j$. In balanced networks,
$w\left(g\right)=1$. 

The agents play a binary action coordination game. Each agent chooses
an action $a_{i}\in\left\{ 0,1\right\} $ and receives a payoff 
\begin{equation}
\frac{1}{g_{i}}\sum_{j}g_{ij}u\left(a_{i},a_{-i},\varepsilon_{i}\right),\label{eq:payoffs}
\end{equation}
which depends on the actions of her neighbors and a payoff shock $\varepsilon_{i}\in\R$
drawn i.i.d. from a distribution $F\left(.\right)$. The payoffs are
supermodular in actions: for each $\varepsilon,$ 
\[
u\left(1,1,\varepsilon\right)+u\left(0,0,\varepsilon\right)>u\left(1,0,\varepsilon\right)+u\left(0,1,\varepsilon\right).
\]
Mixed actions are represented by the probability $a\in\left[0,1\right]$
of pure action 1. Due to expected utility, payoffs are linear in mixed
actions. We refer to the tuple $\left(u,F\right)$ as the \emph{random
utility game}. 
\begin{example}
\label{exa:Additive-payoff}In an additive payoff shock model, the
payoffs of player $i$ from interaction with $j$ are equal to 
\begin{equation}
u\left(a_{i},a_{j}\right)+\Lambda\varepsilon_{i}\mathbf{1}\left(a_{i}=1\right),\label{eq:model of rug}
\end{equation}
where $u$ is a symmetric $2\times2$ coordination game. Although
(\ref{eq:payoffs}) seems more general than (\ref{eq:model of rug}),
the two models are equivalent in the sense that the payoff shocks
can be matched so that the best responses to mixed strategies in both
models are identical. Parameter $\Lambda$ measures the importance
of the payoff shocks. When $\Lambda\rightarrow0$, the model converges
to the deterministic game. 
\end{example}

\subsection{Equilibria\label{subsec:Equilibria}}

We assume that the payoff shocks are publicly observable, i.e., players
know each others' preferences. Each network $g$, and each realization
of payoff shocks $\varepsilon$ leads to a many-player complete information
static game $G\left(g,\varepsilon\right)$. Let $\left(a_{i}\right)$
be a (possibly, mixed) profile of actions. Let
\begin{align*}
\text{Av}\left(a\right) & =\frac{1}{\sum_{i}g_{i}}\sum_{i}g_{i}a_{i}
\end{align*}
be the average action weighted by each player's neighborhood size.
This turns out to be the natural notion of average behavior. If $g_{i}\in\left\{ 0,1\right\} $,
then $g_{i}$ is a count of the interactions in which agent $i$ participates,
and $\text{Av}\left(a\right)$ is the average number of interactions
in which action 1 is played. 

Denote the set of average behaviors attained in Nash equilibria as
\[
\text{Eq}\left(g,\varepsilon\right)=\left\{ \text{Av}\left(a\right):a\text{ is a Nash eq. of }\text{G\ensuremath{\left(g,\varepsilon\right)}}\right\} \subseteq\left[0,1\right].
\]
$\text{Eq}\left(g\right)$ as a set-valued random variable, i.e.,
mapping from the space of payoff shock profiles to subsets of $\left[0,1\right]$.
The goal of the paper is to analyze the behavior of $\text{Eq}\left(g\right)$
as the network becomes larger and the importance of individual players
decreases, $d\left(g\right)\rightarrow0$. 

For any $x\in\left[0,1\right]$ and any two compact subsets $A,B\subseteq\left[0,1\right]$,
say \emph{$A$ is $\eta$-included in $B$}, write $A\subseteq_{\eta}B$,
if $\max_{x\in A}\min_{y\in B}\left|x-y\right|\leq\eta$. If $A\subseteq_{\eta}B$
and $B\subseteq_{\eta}A$, then we write $A=_{\eta}B$. 

\subsection{Continuum best response function}

For each payoff shock $\varepsilon$, define the best response threshold
$\beta\left(\varepsilon\right)$ as the fraction of people that would
make the player with payoff shock $\varepsilon$ indifferent between
the two actions:
\begin{equation}
u\left(1,\beta\left(\varepsilon\right),\varepsilon\right)=u\left(0,\beta\left(\varepsilon\right),\varepsilon\right).\label{eq:definition of beta(epsilon)}
\end{equation}
For each $x\in\left[0,1\right]$, let 
\[
P\left(x\right)=F\left(\beta\left(\varepsilon\right)\leq x\right).
\]
 $P\left(x\right)$ is the ex-ante probability that action 1 is a
best response if a player faces $x$ fraction of opponents who also
play 1. A typical graph of $P$ is illustrated on Figure \ref{fig:Continuum-best-response}.
The assumptions imply that $P$ is increasing, right-continuous, and
that $P\left(x\right)\in\left[0,1\right]$. We do not assume that
$P$ is invertible (and it won't be, if, for instance, $F$ has atoms).
Instead, we define $P^{-1}\left(y\right)=\inf\left\{ \left(x:P\left(x\right)\geq y\right)\right\} $. 

It is helpful to think about $P\left(x\right)$ as a best response
function in a continuum toy version of the game, where each agent's
payoff depends on the fraction of the entire population who choose
to play 1. Due to the continuum law of large numbers, $P\left(x\right)$
is the fraction of the population for whom $1$ is a best response.
Fixed points of $P$, i.e., intersections of the graph on Figure \ref{fig:Continuum-best-response}
with $45\lyxmathsym{\textdegree}$-line, correspond to Nash equilibria
in the continuum version of the game.

\section{Equilibria on complete graphs}

In this section, we consider a complete graph, i.e., network $g$
such that $g_{ij}=1$ for each $i\neq j$. For large $N$, such a
graph should approximate well the continuum toy model. 

We say that a fixed point $x=P\left(x\right)$ is\emph{ strongly stable}
if there exist $\gamma<1$ and a neighborhood $U\ni x$, such that
for each $y\in U$, if $y\leq x$ (resp. $y\geq x$), then $P\left(y\right)\geq P\left(x\right)+\gamma\left(y-x\right)$
(resp., $P\left(y\right)\leq P\left(x\right)+\gamma\left(y-x\right)$). 
\begin{thm}
\label{thm:Complete graph}Suppose that $x$ is a strongly stable
fixed point of $P$. Let $g^{N}$ be a complete graph with $N$ nodes.
For each $\eta>0$, there is $N>0$, such that 
\[
\Prob\left(\left\{ x\right\} \subseteq_{\eta}\text{Eq}\left(g^{N},\varepsilon\right)\right)\geq1-\eta.
\]
\end{thm}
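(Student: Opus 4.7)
The plan is to produce, for each realization of shocks in a suitable high-probability event, an explicit pure-strategy equilibrium of threshold form whose average action lies within $\eta$ of $x$. The argument has three ingredients: a definite gap between the graph of $P$ and the diagonal on either side of $x$ extracted from strong stability; concentration of the empirical best-response fraction around $P$; and a discrete intermediate-value argument that converts the resulting approximate fixed point into an exact Nash equilibrium on the complete graph.

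First, fix $0 < \delta < \eta/3$ with $[x-\delta, x+\delta]\subseteq U$, and write $y^{\pm} := x\pm\delta$. Since $P(x)=x$, strong stability yields the quantitative margin
\[
P(y^-) - y^- \;\geq\; (1-\gamma)\delta \;=:\; c \quad\text{and}\quad y^+ - P(y^+) \;\geq\; c.
\]
Next, the random variables $\beta(\varepsilon_i)$ are i.i.d.\ with $\Prob(\beta(\varepsilon_i)\le y)=P(y)$, so $\phi_N(y) := \tfrac{1}{N}\#\{i:\beta(\varepsilon_i)\le y\}$ is a Bernoulli empirical mean. Hoeffding's inequality applied at the two points $y^-$ and $y^+$ yields, for $N$ large enough,
\[
\phi_N(y^-) > y^- + \tfrac{c}{2} \quad\text{and}\quad \phi_N(y^+) < y^+ - \tfrac{c}{2}
\]
with probability at least $1-\eta$.

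To finish, order the shocks as $\beta(\varepsilon_{(1)}) \leq \dots \leq \beta(\varepsilon_{(N)})$ and, for each $k \in \{0,\dots,N\}$, consider the profile $a^k$ in which the $k$ agents with the lowest $\beta(\varepsilon_{(i)})$ play $1$ and the rest play $0$. On the complete graph, supermodularity makes $a^k$ a pure Nash equilibrium exactly when
\[
\beta(\varepsilon_{(k)}) \;\leq\; \tfrac{k-1}{N-1} \quad\text{and}\quad \beta(\varepsilon_{(k+1)}) \;\geq\; \tfrac{k}{N-1}.
\]
Define $f(k) := \beta(\varepsilon_{(k)}) - (k-1)/(N-1)$. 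On the event above and for $N$ large enough that $1/N < c/4$, one checks that $f(k^-) < 0$ at $k^- := \lceil N y^-\rceil + 1$ and $f(k^+) > 0$ at $k^+ := \lceil N y^+\rceil$: the first holds because at least $N(y^-+c/2)$ of the $\beta(\varepsilon_i)$ lie below $y^-$, so $\beta(\varepsilon_{(k^-)}) \leq y^- < (k^-{-}1)/(N-1)$; the second is the symmetric statement at $y^+$. Since $f$ takes integer-indexed values that move by at most $1/(N-1)$ plus $\beta_{(k+1)}-\beta_{(k)}$ at a time, there must be $k^{\ast} \in [k^-, k^+]$ with $f(k^\ast)\leq 0 < f(k^\ast+1)$, and this pair of inequalities is exactly the Nash condition for $a^{k^\ast}$. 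The equilibrium $a^{k^\ast}$ has average action $k^\ast/N \in [y^- - 1/N,\, y^+ + 1/N] \subseteq (x-\eta, x+\eta)$.

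The only place that genuinely requires care is the last step: an approximate fixed point of $\phi_N$ gives only an approximate Nash condition, so one has to absorb the $O(1/N)$ gap between the ``whole-population'' fraction $k/N$ and the ``fraction-of-others'' fractions $(k-1)/(N-1)$ and $k/(N-1)$ into the stability margin $c$. Once $1/N \ll c$, this is automatic. The other ingredients are standard (Hoeffding concentration, the definition of strong stability, and existence of threshold pure-strategy Nash equilibria in supermodular games), and I do not expect any substantive obstacle.
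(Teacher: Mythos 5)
Your approach is valid and, while it shares the same two analytic inputs as the paper (strong stability gives a margin between $P$ and the diagonal on each side of $x$; concentration makes the empirical threshold distribution track $P$), your execution is genuinely different and more constructive. The paper applies the uniform DKW inequality to get $\sup_y |P_\varepsilon(y) - P(y)| \leq \eta$, shows that the best-response map sends the slab $\{a : \text{Av}(a) \in [x-\delta, x+\delta]\}$ into itself, and invokes a fixed-point theorem without exhibiting the equilibrium. You instead use only two-point Hoeffding concentration, observe that on a complete graph every pure Nash profile must be a threshold profile $a^k$ in the order statistics of $\beta(\varepsilon_i)$, and locate one by a discrete intermediate-value argument. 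This yields a pure-strategy equilibrium explicitly and avoids the tie-breaking/hemicontinuity care needed to apply a fixed-point theorem to the best-response map. The maximality argument you implicitly use (take $k^\ast$ to be the last $k \in [k^-, k^+ - 1]$ with $f(k) \leq 0$) does not actually need the ``small step size'' remark; the endpoints $f(k^-) \leq 0$ and $f(k^+) > 0$ alone suffice.

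There is one concrete slip, and it sits exactly where you flagged ``the only place that genuinely requires care,'' namely the $O(1/N)$ mismatch between the population fraction $k/N$ and the neighborhood fractions $(k-1)/(N-1)$, $k/(N-1)$. With your choice $k^+ = \lceil N y^+ \rceil$ the claimed inequality $y^+ > (k^+ - 1)/(N-1)$ is not the clean mirror image of the $k^-$ case and can fail: writing $\lceil N y^+ \rceil - 1 = \lfloor N y^+ \rfloor$ when $Ny^+\notin\mathbb{Z}$, one needs $\lfloor N y^+ \rfloor \leq (N-1) y^+$, i.e. the fractional part of $N y^+$ must be at least $y^+$, which has no reason to hold. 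Since all you can extract from $\phi_N(y^+) < y^+ - c/2$ is $\beta(\varepsilon_{(k^+)}) > y^+$, with no lower bound on the gap, $f(k^+) > 0$ does not follow. The fix is local: take $k^+ = \lfloor N y^+ \rfloor$, for which $(k^+ - 1)/(N-1) \leq y^+$ is automatic (for $y^+ \leq 1$) and $k^+ > N(y^+ - c/2)$ still holds once $N > 2/c$, so $f(k^+) > 0$ as desired; everything else goes through unchanged. (The tiny boundary case $y^- = 0$ gives $f(k^-) = 0$ rather than $f(k^-) < 0$, but since the argument anyway searches for the last $k$ with $f(k) \leq 0$, this is harmless.)
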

Large complete graphs have equilibria that are close to strongly stable
points of $x$. The result is a sanity check, as it confirms our interpretation
of $P$ as a best response function on the continuum toy model. The
proof is straightforward (see Appendix \ref{sec:Proof-of-Complete graph}). 

When there are (finitely many) multiple strongly stable points, Theorem
\ref{thm:Complete graph} implies that, with a large probability,
all of them are close to the average behavior in some equilibrium.
In particular, if $x_{\min}$ and $x_{\max}$ are, respectively, the
smallest and the largest of the fixed points of $P$, then $\left\{ x_{\min},x_{\max}\right\} \subseteq_{\eta}\text{Eq}\left(g\right)$
with a large probability for a sufficiently large complete graph.

One can obtain other equilibrium averages by mixing and matching networks.
By taking a large number of disjoint copies of large complete graphs,
and considering a variety of equilibria on component networks, we
can an approximate an arbitrary point on the interval $\left[x_{\min},x_{\max}\right]$. 
\begin{cor}
\label{cor:upper bound}Suppose that $x_{\min}$ and $x_{\max}$ are
strongly stable. For each $\eta>0$, there exists a balanced network
$g$ such that 
\begin{align*}
\Prob\left(\left[x_{\min},x_{\max}\right]\subseteq_{\eta}\text{Eq}\left(g,\varepsilon\right)\right) & \geq1-\eta.
\end{align*}
\end{cor}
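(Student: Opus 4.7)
The plan is to build the network as a large disjoint union of copies of a single complete graph and to use Theorem \ref{thm:Complete graph} on each component. First I would fix parameters: given $\eta>0$, choose an integer $M$ with $M\ge (x_{\max}-x_{\min})/\eta$, so that the arithmetic averages of the form $\frac{k}{M}x_{\min}+\frac{M-k}{M}x_{\max}$ for $k=0,1,\dots,M$ form an $\eta/2$-net of the interval $[x_{\min},x_{\max}]$. Then set a per-component tolerance $\eta'=\min(\eta/2,\eta/(4M))$, and apply Theorem \ref{thm:Complete graph} twice, once for $x_{\min}$ and once for $x_{\max}$, to obtain an $N$ such that the complete graph $g^{N}$ on $N$ nodes satisfies, for each $x\in\{x_{\min},x_{\max}\}$,
\[
\Prob\bigl(\{x\}\subseteq_{\eta'}\text{Eq}(g^{N},\varepsilon)\bigr)\geq 1-\eta'.
\]

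Now let $g$ be the disjoint union of $M$ copies of $g^{N}$. This is a balanced network, since every vertex has degree $N-1$. Label the components by $m=1,\dots,M$ and let $A_m$ be the event that component $m$ admits Nash equilibria whose average actions are within $\eta'$ of both $x_{\min}$ and $x_{\max}$. By the union bound (and the independence of payoff shocks across components, which we only need to conclude that each $A_m$ occurs with probability at least $1-2\eta'$),
\[
\Prob\Bigl(\bigcap_{m=1}^{M}A_m\Bigr)\ \ge\ 1-2M\eta'\ \ge\ 1-\eta.
\]
On this good event, within each component we may independently pick an equilibrium with average either near $x_{\min}$ or near $x_{\max}$, and piecing them together yields a Nash equilibrium of the whole game.

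To approximate a target $z\in[x_{\min},x_{\max}]$, write $z=\frac{k}{M}x_{\min}+\frac{M-k}{M}x_{\max}+r$ with $|r|\le\eta/2$ and $k\in\{0,\dots,M\}$, and pick equilibria near $x_{\min}$ on exactly $k$ components and near $x_{\max}$ on the remaining $M-k$. Since all components have the same size (hence the same total weight $\sum_i g_i$ in the definition of $\text{Av}$), the resulting average equals $\frac{k}{M}\hat{x}_{\min}+\frac{M-k}{M}\hat{x}_{\max}$ for some $\hat{x}_{\min},\hat{x}_{\max}$ within $\eta'$ of $x_{\min},x_{\max}$, so it differs from $z$ by at most $|r|+\eta'\le\eta$. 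Thus $[x_{\min},x_{\max}]\subseteq_{\eta}\text{Eq}(g,\varepsilon)$ on the good event, which has probability at least $1-\eta$.

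The only potentially delicate point is balancing the two asymptotics—taking $M$ large enough to make the mixing granularity fine, while simultaneously taking $N$ large enough to control the per-component failure probability against the union bound over $M$ components—but since Theorem \ref{thm:Complete graph} lets us shrink $\eta'$ to any size we please (at the cost of a larger $N$), this is a routine order-of-quantifiers issue rather than a genuine obstacle.
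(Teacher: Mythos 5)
Your proof is correct and rests on the same core construction as the paper's (a balanced network made of disjoint copies of a large complete graph, with equilibria near $x_{\min}$ or $x_{\max}$ mixed across components via Theorem \ref{thm:Complete graph}), but the probabilistic step is handled differently. You fix the number of components $M$ first (to get the required mixing granularity), then shrink the per-component tolerance $\eta'$ so that a union bound over all $M$ components still leaves probability $\geq 1-\eta$, and you insist that \emph{every} component be good. The paper instead fixes the component size $N$ with a constant per-component success probability, takes the number of copies $K$ to be large, and invokes a law-of-large-numbers/central-limit argument to show that a fraction close to $1$ of the copies are good; it then carries the small set of bad copies through the mixing computation. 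Your route is slightly cleaner because it sidesteps the CLT step and the bookkeeping for bad components entirely; both are valid, and the essential observation you make explicit at the end — that Theorem \ref{thm:Complete graph} lets you drive the failure probability down at will, so the union bound is affordable — is exactly what licenses the simplification. One small nit: independence of payoff shocks across components is never actually used (your union bound doesn't need it, and Theorem \ref{thm:Complete graph} applied to a single component only needs the i.i.d. shocks within that component), so the parenthetical remark there is unnecessary.
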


\section{Upper bound on equilibrium set}

The next result shows that $\left[x_{\min},x_{\max}\right]$ is an
upper bound on the equilibrium set. 
\begin{thm}
\label{thm:Largest equilibrium}Suppose that $x_{\min}$ and $x_{\max}$
are strongly stable. For each $\eta>0$ and $w<\infty$, there is
$\delta>0$ such that for each network $g$, if $d\left(g\right)\leq\delta$,
$w\left(g\right)\leq w$, then
\begin{align*}
\Prob\left(\text{Eq}\left(g\right)\subseteq_{\eta}\left[x_{\min},x_{\max}\right]\right) & \geq1-\eta.
\end{align*}
\end{thm}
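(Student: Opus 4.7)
\emph{Proof plan.} The plan is to reduce the inclusion to bounds on the extremal equilibria and then control those by coupling the network best-response dynamics to the continuum iteration of $P$. By supermodularity of payoffs and Tarski--Topkis, for each $\varepsilon$ the game $G(g,\varepsilon)$ admits a pointwise greatest equilibrium $a^{\max}$ and smallest $a^{\min}$, and every Nash equilibrium $a$ satisfies $\text{Av}(a^{\min})\le\text{Av}(a)\le\text{Av}(a^{\max})$. It therefore suffices to show, each with probability at least $1-\eta/2$, that $\text{Av}(a^{\max})\le x_{\max}+\eta$ and $\text{Av}(a^{\min})\ge x_{\min}-\eta$; I describe the upper bound, the lower one being symmetric.

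Starting from $a^{0}_{i}\equiv 1$, iterate $a^{t+1}_{i}=\mathbf{1}[y^{t}_{i}\ge\beta(\varepsilon_{i})]$ with $y^{t}_{i}=\tfrac{1}{g_{i}}\sum_{j}g_{ij}a^{t}_{j}$. Monotonicity gives $a^{t}\searrow a^{\max}$, so $\text{Av}(a^{\max})\le\text{Av}(a^{T})$ for every $T$. Run in parallel the continuum iteration $x^{0}=1$, $x^{t+1}=P(x^{t})$: monotonicity of $P$, the fact that $x_{\max}$ is its largest fixed point, and strong stability at $x_{\max}$ (supplying a local contraction rate) together yield $x^{T}\le x_{\max}+\eta/3$ for some $T=T(\eta,P)$. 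Introduce the \emph{continuum shadow} profile $\tilde{a}^{t+1}_{i}=\mathbf{1}[x^{t}\ge\beta(\varepsilon_{i})]$; conditional on the deterministic $x^{t}$, its components are independent Bernoullis, and its weighted average concentrates on $P(x^{t})=x^{t+1}$ by a weighted law of large numbers, in which the weights $p_{i}=g_{i}/\sum_{k}g_{k}$ satisfy $\max_{i}p_{i}\le w/N$.

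The induction controls the weighted $L^{1}$ discrepancy $\alpha_{t}:=\sum_{i}p_{i}|a^{t}_{i}-\tilde{a}^{t}_{i}|$. Three ingredients drive the step. (i) Symmetry of $g$ (so that $p_{i}g_{ij}/g_{i}=g_{ij}/\sum_{k}g_{k}$ is symmetric in $i,j$) gives $\sum_{i}p_{i}|y^{t}_{i}-\tilde{y}^{t}_{i}|\le\alpha_{t}$, where $\tilde{y}^{t}_{i}=\tfrac{1}{g_{i}}\sum_{j}g_{ij}\tilde{a}^{t}_{j}$. (ii) The quantity $\tilde{y}^{t}_{i}$ is a weighted sum of independent Bernoullis of variance at most $d(g)/4$, so by Chebyshev the expected weighted mass of $\{i:|\tilde{y}^{t}_{i}-x^{t}|>\Delta\}$ is $O(d(g)/\Delta^{2})$; combining with (i), for all but an $O(d(g)/\Delta^{2}+\alpha_{t}/\Delta)$ weighted mass of $i$ one has $|y^{t}_{i}-x^{t}|\le 2\Delta$. (iii) For such $i$, $|a^{t+1}_{i}-\tilde{a}^{t+1}_{i}|\le\mathbf{1}\{|\beta(\varepsilon_{i})-x^{t}|\le 2\Delta\}$, whose weighted mean is within $o(1)$ of $P(x^{t}+2\Delta)-P(x^{t}-2\Delta)$ by a weighted Glivenko--Cantelli bound on $F_{N}(x)=\sum_{i}p_{i}\mathbf{1}[\beta(\varepsilon_{i})\le x]$. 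Choosing $\Delta$ as a small power of $d(g)$ and using continuity of $P$ at the iterates $x^{t}$ (automatic at $x_{\max}$ by strong stability, arranged elsewhere either by a slight perturbation of the iteration avoiding atoms or by absorbing finitely many jumps into $\eta$) propagates $\alpha_{t}$ through $T$ steps with $\alpha_{T}=o(1)$. Hence $\text{Av}(a^{T})\le\text{Av}(\tilde{a}^{T})+\alpha_{T}\le x^{T}+o(1)\le x_{\max}+\eta$.

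The main obstacle is the step recursion itself: the term $\alpha_{t}/\Delta$ from converting the weighted $L^{1}$ bound to a weighted $L^{\infty}$ bound via Markov has to be absorbed against the modulus of continuity of $P$, and it is not obvious a priori that $\Delta$ can be chosen to keep the cumulative error bounded over $T$ steps without exponential blowup in $t$. Strong stability of $x_{\max}$ is what makes $T$ depend only on $\eta$ and $P$ (not on $d(g)$), and the hypothesis $w(g)\le w$ enters only through the weighted LLN via $\max_{i}p_{i}\le w/N$. A symmetric iteration from $a^{0}\equiv 0$ to $a^{\min}$, with continuum shadow referenced to the smallest fixed point $x_{\min}$, yields the lower bound, and a union bound closes the proof.
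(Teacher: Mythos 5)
Your route is genuinely different from the paper's: the paper never runs a dynamics, but instead covers the space of neighborhood–behavior profiles by a $\delta$-net (metric entropy controlled via Sudakov's Minoration, Lemma \ref{lem:Sudakov}), shows that close profiles have highly correlated best responses (Lemmas \ref{lem:Prob bound 1}--\ref{lem:Prob bound 2}), and union-bounds over net points (Lemma \ref{lem:Uper equilibrium}). Your monotone-iteration-plus-continuum-shadow coupling, if it closed, would be shorter and more transparent, and the reduction to the greatest and smallest equilibria is a clean and correct first move.

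However, there is a genuine gap in the per-step recursion, and it is exactly the one you flag but do not resolve. The Markov conversion from the weighted $L^1$ bound $\sum_i p_i\lvert y^t_i-\tilde y^t_i\rvert\le\alpha_t$ to a weighted-mass bound on $\{i:\lvert y^t_i-\tilde y^t_i\rvert>\Delta\}$ costs a factor $1/\Delta$, leaving a recursion of the schematic form
\[
\alpha_{t+1}\ \le\ C\bigl(P(x^t+2\Delta)-P(x^t-2\Delta)\bigr)\ +\ C\,\frac{d(g)}{\Delta^{2}}\ +\ C\,\frac{\alpha_t}{\Delta}\ +\ o(1),
\]
and the coefficient $C/\Delta>1$ of $\alpha_t$ makes this \emph{expanding}, not contracting, for any $\Delta<C$. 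Even with $\alpha_0=0$, after two steps $\alpha_2\gtrsim C^{2}$, a constant independent of $d(g)$, so $\alpha_T=o(1)$ cannot be obtained however $\Delta$ is chosen. The local contraction rate $\gamma<1$ supplied by strong stability would save you \emph{if} you could bound $\mathbb{E}\bigl[\lvert a^{t+1}_i-\tilde a^{t+1}_i\rvert\bigr]$ directly by $\gamma\,\mathbb{E}\bigl[\lvert y^t_i-x^t\rvert\bigr]$ (so the $1/\Delta$ never appears), but this requires computing $\mathbb{E}\bigl[\mathbf{1}\{\beta(\varepsilon_i)\in I(y^t_i,x^t)\}\mid y^t_i\bigr]$ as $P$ evaluated at the endpoints, which needs $\varepsilon_i$ independent of $y^t_i$. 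That independence fails for $t\ge 2$: $y^2_i$ already depends on $a^1_j$ for $j$ neighboring $i$, and $a^1_j$ does not depend on $\varepsilon_i$, but $a^2_j$ does, so $y^t_i$ and $\varepsilon_i$ are correlated from $t=3$ on (and in fact $a^1_j$ can be made to depend on $\varepsilon_i$ once you pass to the upper best response process where ties go up, so the feedback can start earlier). You would need some form of self-influence control (e.g., a martingale / leave-one-out coupling showing that removing $i$'s own contribution changes $y^t_i$ by $O(t\,d(g))$) together with the Lipschitz dominance trick from Lemma \ref{lem:Dominance} to replace $P$ by a globally Lipschitz $P^{*}\ge P$; neither is present. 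A secondary, smaller issue is that Lipschitzness of $P$ at the intermediate iterates $x^{1},\dots,x^{T-1}$ (away from $x_{\max}$, where strong stability says nothing) is asserted by a perturbation heuristic rather than proved; this would also need the $P^{*}$ dominance device. As written the argument does not close.
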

The theorem yields a partial identification theory of the parameters
of the model. Consider an econometrician who studies a coordination
game on a network. The econometrician may not know the network $g$
on which the game is played, nor the parameters of the random utility
model, and she treats them as parameters. If she observes the average
behavior $x$, she may reject all parameters for which $x\notin\left[x_{\min},x_{\max}\right]$. 

Theorem \ref{thm:Largest equilibrium} and Corollary \ref{cor:upper bound}
together show that the interval $\left[x_{\min},x_{\max}\right]$
is a tight upper bound (in the sense of set inclusion) on the average
behavior across all networks. In particular, the partial identification
obtained from the result cannot be improved. 

\subsection{Proof intuition}

Our proof of Theorem \ref{thm:Largest equilibrium} is surprisingly
complicated. To explain the difficulty, fix an average payoff $x>x_{\max}+\eta$.
For each profile $a$ such that $\text{Av}\left(a\right)\geq x$,
it is relatively easy to show that the ex ante probability that $a$
is an equilibrium is small. In fact, one can bound this probability
with an exponential bound 
\begin{equation}
\leq\exp\left(-\delta_{\eta}N\right)\label{eq:simple prob bound}
\end{equation}
where $\delta_{\eta}>0$ may depend on the geometry of the network,
etc. Importantly, if $\eta$ is very small, the bound constant $\delta_{\eta}$
is very small as well. The idea is that if the average action is above
the largest fixed point, than a relatively large number of players
must be best responding significantly above the continuum best response
function, which cannot happen with a significant probability. 

The above bound applies to a particular profile $a$. In order to
obtain a bound for all profiles $a$ such that $\text{Av}\left(a\right)\geq x$,
we can multiply (\ref{eq:simple prob bound}) by the number of such
profiles. Unfortunately, this number of order 
\[
\exp\left(\left(x\log x+\left(1-x\right)\log\left(1-x\right)\right)N\right),
\]
and, if $\delta_{\eta}$ is sufficiently small, or if $x$ is sufficiently
close to $x_{\max}$, it converges to infinity much faster than (\ref{eq:simple prob bound})
converges to 0. 

In the proof, we divide the profiles $a$ into groups such that (a)
we can show that (\ref{eq:simple prob bound}) is an upper bound on
the probability that none of the profiles in a group is an equilibrium
(Lemma \ref{lem:Uper equilibrium} in the Appendix), and (b) the number
of groups grows at a much slower rate than (\ref{eq:simple prob bound})
decreases (Lemma \ref{lem:Sudakov} in the Appendix). 

The idea of the division comes from an observation that differences
between profiles matter for a player $i$ only if they lead to different
distributions of actions among neighbors of $i$. Formally, for each
profile $a$, construct a profile $\beta^{a}$ of average neighborhood
behaviors so that for each $i$, $\beta_{i}^{a}=\frac{1}{g_{i}}\sum_{j}g_{ij}a_{j}$.
Then, if $a$ is an equilibrium, it must be that $a_{i}\geq1$ if
and only if $\beta\left(\varepsilon_{i}\right)\leq\beta_{i}^{a}$
for each $i$. We define a notion of closeness of two profiles of
neighborhood behaviors as a weighted version of the Euclidean metric:
\[
d\left(\beta^{a},\beta^{b}\right)=\sqrt{\frac{1}{\sum g_{i}^{2}}\sum g_{i}^{2}\left(\beta_{i}^{a}-\beta_{i}^{b}\right)^{2}}.
\]

Property a) is a consequence of the observation that if two profiles
generate similar distributions of neighbor actions for all, or at
least for a great majority of players, they should lead to similar
best responses. Hence the question of such profiles being equilibria
is highly correlated, which makes it easier to ensure that a quantity
like (\ref{eq:simple prob bound}) provides a bound that no profile
in the entire group is an equilibrium. For step (b), let $\mathcal{B}=\left\{ \beta^{a}:a\text{ is a profile}\right\} $
be the set of all neighborhood behavior profiles. We show that the
number of balls of radius $\delta$ (in metric $d$) that is required
to cover this set, i.e., the metric entropy of $\mathcal{B}$, is
of order 
\[
\exp\left(\delta_{d\left(g\right),\delta}^{\prime}N\right),
\]
where $\delta_{d\left(g\right)}\rightarrow0$ as $d\left(g\right)\rightarrow0$.
In particular, when $d\left(g\right)$ is sufficiently small, i.e.,
no player dominates the neighborhood of another player, the above
bound converges to infinity at much slower rate than (\ref{eq:simple prob bound})
converges to 0. 

\section{RU-dominant selection}

In this section, we introduce an equilibrium selection tool appropriate
for coordination games with random utility: the random utility-, or
$RU$-dominant outcome. We show that there are networks on which the
$RU$-dominant outcome is essentially the only equilibrium average.

\subsection{$RU$-dominant outcome}

An equilibrium action $x^{*}\in\left[0,1\right]$ is \emph{RU-dominant
}if it is a maximizer of 
\begin{equation}
x^{*}\in\arg\max_{x}\intop_{0}^{x}\left(y-P^{-1}\left(y\right)\right)dy.\label{eq:RU-dominant}
\end{equation}
It is \emph{strictly RU-dominant}, if it is a unique maximizer. Generically,
any game with random utility has an \emph{RU}-dominant action. 

The following example shows that if the impact of the random utility
impact disappears, the RU-dominant outcomes converge to standard risk
dominance of \cite{harsanyi_general_1988}.
\begin{example}
\label{exa:Limit additive risk-dominant}(Cont. of Example \ref{exa:Additive-payoff})
Suppose w.l.o.g. that $0$ is the unique strictly risk-dominant action
of the coordination game with payoffs $u$. Then, each player is indifferent
between two actions if a fraction $\alpha>\frac{1}{2}$ of players
plays action $1$. When $\Lambda\rightarrow0$, $P^{-1}\left(y\right)\rightarrow\alpha$
for each $y\in\left(0,1\right)$, and we have 
\[
\intop_{0}^{x}\left(y-P^{-1}\left(y\right)\right)dy\rightarrow\intop_{0}^{x}\left(y-\alpha\right)dy=\frac{1}{2}x^{2}-\alpha x=x\left(\frac{1}{2}-\alpha\right).
\]
Hence the RU-dominant outcome(s) converge to 0, i.e., the risk-dominant
action of deterministic game $u$. 
\end{example}
The main result of this section shows that there are networks where,
with a large probability, all equilibrium averages are close to a
strictly $RU$-dominant outcome $x^{*}$.
\begin{thm}
\label{thm:RUdominant selection on lattice}Suppose that $x^{*}$
is the strictly $RU$-dominant outcome and that either $x^{*}>0$
and $P\left(0\right)>0$, or $x^{*}<1$ and $P\left(1\right)<1$.
For each $\eta>0$, there is a network $g$ such that 
\begin{align*}
\Prob\left(\text{Eq}\left(g\right)\subseteq_{\eta}\left\{ x^{*}\right\} \right) & \geq1-\eta.
\end{align*}
\end{thm}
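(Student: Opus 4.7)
The plan is to take $g$ to be a large two-dimensional torus lattice (say, $\mathbb{Z}_n^2$ with nearest-neighbor edges of unit weight, with $n$ large enough that $d(g)$ is arbitrarily small). Arguing by contradiction, I would consider a putative equilibrium $a$ with $|\text{Av}(a)-x^*|>\eta$; without loss of generality, assume $\text{Av}(a)<x^*-\eta$, which is the side accessible by contagion under the hypothesis $x^*>0$, $P(0)>0$ (the case $\text{Av}(a)>x^*+\eta$ is handled by the same construction under $x^*<1$, $P(1)<1$). The goal is to exhibit, with probability at least $1-\eta/2$ over the realization of $\varepsilon$, a chain of strict best-response improvements propagating through the lattice and pulling a macroscopic set of players to the $x^*$-side action, which contradicts $a$ being an equilibrium.

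The key observation is that strict $RU$-dominance of $x^*$ translates into a contagion-threshold condition. Unpacking (\ref{eq:RU-dominant}) yields a level $y_*<x^*$ and a margin $\gamma>0$ such that $P(y)\geq y+\gamma$ on an open interval of $y$-values between $y_*$ and $x^*$; by the law of large numbers applied to the i.i.d.\ thresholds $\beta(\varepsilon_j)$, in any large neighborhood a positive fraction of players strictly prefers the $x^*$-side action once the local frequency of $1$'s enters this window. I would first tile the lattice into many disjoint square seed blocks of intermediate side-length $k$. In each block, by independence of $\varepsilon$, there is a positive, $n$-independent probability that the realized shocks form a \emph{committed seed}: a configuration whose players play the $x^*$-side action in any completion of the profile outside the block, with an additional safety margin determined by $\gamma$. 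With $\Theta(n^2/k^2)$ blocks, the probability that at least one seed exists tends to $1$. From any seed, I would then run a Morris-style $2$-D best-response wave, adapting the argument of \cite{morris_contagion_2000}: the reached region grows because, at its boundary, enough neighbors are already committed to $x^*$ that, combined with the local density of favorable payoff shocks (the excess $\gamma$ extracted from $RU$-dominance), a generic boundary player strictly best-responds to the $x^*$-side.

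The main obstacle is controlling the spread of the wave in the presence of \emph{blocker} sites, i.e.\ players whose $\varepsilon_i$ makes them refuse to switch even when almost all of their neighbors have. This is exactly where the $2$-D geometry is essential and where the $1$-D lattice of \cite{ellison_learning_1993} would fail: on $\mathbb{Z}$ a single blocker halts the wave, so contagion cannot propagate more than a geometric number of steps. On $\mathbb{Z}^2$, however, the wave routes around individual blockers, and only a percolating cluster of blockers can stop it. I would bound the blocker density using the safety margin $\gamma$ and the freedom to choose an arbitrarily favorable seed configuration, and then a Peierls-type contour argument would show that with probability at least $1-\eta/2$ blockers do not percolate and the wave covers all but an $\eta/2$-fraction of the lattice. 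On the covered sites, any equilibrium $a$ must, by best-response consistency, play the $x^*$-side action, forcing $\text{Av}(a)$ within $\eta$ of $x^*$ and yielding the required contradiction. Combining with the analogous bound from the other side (applicable whenever the corresponding half of the hypothesis holds, and otherwise vacuous because $x^*\in\{0,1\}$ on that side), we conclude $\text{Eq}(g)\subseteq_{\eta}\{x^*\}$ with probability at least $1-\eta$.
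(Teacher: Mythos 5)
Your high-level architecture (rare seeds where one action is dominant, a two-dimensional contagion wave, and a percolation argument to show blockers do not disconnect the wave) does mirror the paper's strategy, but two of your load-bearing steps fail as stated. First, the network. On the nearest-neighbor torus $\mathbb{Z}_n^2$ every player has degree $4$, so $d(g)=1/4$ for every $n$; taking $n$ large does not make the graph fine. More importantly, your whole argument rests on approximating an individual's best response by the continuum function $P$ applied to "the local frequency of $1$'s," which is a law-of-large-numbers statement inside a single neighborhood and therefore requires neighborhoods with many members. With four neighbors the empirical frequency takes five values and a constant fraction of sites will be "blockers" in any realization, and constant-density blockers on $\mathbb{Z}^2$ can percolate, so your Peierls argument has nothing to bite on. The paper instead uses the $(M,m)$-lattice in which each agent is linked to \emph{all} agents within Euclidean distance $1$ (about $\pi m^2$ neighbors), partitions it into small cubes containing $b^2\gg 1$ agents each, and makes "bad" a large-deviation event for a whole cube — exponentially unlikely in $b^2$ — before running the percolation argument at the level of cubes (Lemmas \ref{lem:Beliefs in a cube}--\ref{lem:Good giant component}).

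Second, your "key observation" that strict $RU$-dominance yields a level $y_*<x^*$ and a uniform margin $\gamma>0$ with $P(y)\geq y+\gamma$ on an interval below $x^*$ is not a consequence of (\ref{eq:RU-dominant}), and even where such an interval exists it would not suffice. $RU$-dominance is an \emph{integral} (area) condition; between the seed level and $x^*$ the graph of $P$ may cross the diagonal several times (there can be other stable fixed points below $x^*$), so no single-interval threshold margin can carry the wave from the seed all the way to $x^*$. The paper's Lemma \ref{lem:Wave function} is precisely the substitute: it constructs a multi-step wave profile $\sigma$ with thresholds $v_0<\dots<v_L$ by a monotone fixed-point argument, and shows that if the wave stalled, multiplying the stalling inequalities by the step increments, symmetrizing using $f(x)+f(-x)=1$, and summing would contradict $\intop_{a_0}^{a}\left(Q^{-1}(x)-x\right)dx>0$. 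Without this step your proposal proves selection only for games with a genuine contagion threshold, not for $RU$-dominance, which is the content of the theorem. A smaller point: the wave front must stay smooth enough to be locally approximated by a half-plane when it routes around bad regions, which the paper arranges by excising a radius-$R$ ball around each bad cube and propagating only inside the $R$-interior of the giant good component; this needs to be made explicit rather than asserted.
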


\subsection{Proof intuition\label{subsec:Proof-intuition}}

The network constructed in the proof is a $2$-dimensional lattice,
parameterized with $M$ and $m$. There are $M^{2}$ agents located
on a square $\left[0,\frac{M}{m}\right]^{2}\subseteq\R^{2}$ at fractional
points of form $\left(\frac{k}{m},\frac{l}{m}\right)$ for some $k,l=1,...,M$.
The two agents are connected, $g_{ij}=1$, if the (Euclidean) distance
between them is no larger than $1$. (To make the network balanced
and to simplify the argument, we assume that all distance calculations
are done $\text{mod}\frac{M}{m}$, which turns the square $\left[0,\frac{M}{m}\right]^{2}$
into a torus.) The proof requires both $m$ and $\frac{M}{m}$ to
be sufficiently large. Our argument and the result extends to $K$-dimensional
lattices for $K>2$, but not to $K=1$.

The proof has three key steps. In order to illustrate the first two,
consider a version of the line network from \cite{ellison_learning_1993}.
Agents are located along a line at equally spaced and dense locations
and the weight of connection between agents $i$ and $j$ depends
only on their distance $g_{ij}=g_{i-j}=g_{j-i}$. We normalize the
weights so that $\sum g_{d}=1$. We are going to show that there cannot
be an equilibrium with average actions substantially higher than $x^{*}$.
(An analogous argument shows that there cannot be an equilibrium with
actions substantially lower than $x^{*}$.) Suppose to the contrary
that there is. The first step is to notice that if the line is sufficiently
long, then, with a probability close to 1, there will be a group of
consecutive agents with payoff shocks that render action 0 dominant.
We refer to them as the initial infectors. 

The second step it to show that, starting from the initial infectors,
a contagion best response process will spread across all ``good''
agents to bring their actions down below $x^{*}$, where a group of
agents is ``good'' if the empirical distribution of payoff shocks
in the group is close to $F$. To simplify the argument, assume that
each location in the network contains a continuum population; the
law of large numbers implies that the average equilibrium action of
agents in node $i$ is equal to $P\left(\sum_{d}g_{d}a_{i+d}\right)$.
Suppose that all locations $i\leq0$ consist of initial infectors
and have average actions not higher than $x^{*}$. Assume that, initially,
all locations $i>0$ play action 1. Consider a best response process
where each location $i>0$ is allowed to revise its average action
to its best response, but not less than $x^{*}$. The process will
either end with all locations playing $x^{*}$, or the contagion will
stop. Suppose the latter. Let $a_{i}\geq x^{*}$ be the final average
action in location $i$. Due to payoff complementarities, $a_{i}$
must be increasing in $i$. Let $a=\lim_{i\rightarrow\infty}a_{i}>x^{*}$.
Because the best response process stopped for each location $i$,
if $a_{i}>x^{*}$, the average action must be equal to the best response
action
\[
a_{i}=P\left(\sum_{d}g_{d}a_{i+d}\right).
\]
Taking inverse, we obtain
\[
P^{-1}\left(a_{i}\right)\leq\sum_{d}g_{d}a_{i+d}=x^{*}+\sum_{j}\left(\sum_{d\geq j-i}g_{d}\right)\left(a_{j+1}-a_{j}\right),
\]
where we use $a_{i}\geq x^{*}$, and the equality is due to a discrete
version of the ``integration by parts'' formula. After subtracting
$x^{*}$, multiplying by $a_{i+1}-a_{i}\geq0$, and summing up across
locations $x^{*}$ gives
\begin{equation}
\sum_{i}\left(P^{-1}\left(a_{i}\right)-x^{*}\right)\left(a_{i+1}-a_{i}\right)\leq\sum_{i,j}\left(\sum_{d\geq j-i}g_{d}\right)\left(a_{i+1}-a_{i}\right)\left(a_{j+1}-a_{j}\right).\label{eq:inequality contagion}
\end{equation}
The left-hand side of the inequality is approximately equal to $\intop_{x^{*}}^{a}\left(P^{-1}\left(y\right)-x^{*}\right)dy$.
To compute the right hand side, notice that we can switch the roles
of $i$ and $j$ in the summation, and using the fact that $\sum_{d\geq j-i}g_{d}+\sum_{d\geq i-j}g_{d}=\sum g_{d}=1$,
we have
\begin{align*}
\sum_{i,j}\left(\sum_{d\geq j-i}g_{d}\right)\left(a_{i+1}-a_{i}\right)\left(a_{j+1}-a_{j}\right) & =\frac{1}{2}\left(\sum_{i,j}\left(\sum_{d\geq j-i}g_{d}+\sum_{d\geq i-j}g_{d}\right)\left(a_{i+1}-a_{i}\right)\left(a_{j+1}-a_{j}\right)\right)\\
 & =\frac{1}{2}\left(\sum_{i,j}\left(a_{i+1}-a_{i}\right)\left(a_{j+1}-a_{j}\right)\right)=\frac{1}{2}\left(a-x^{*}\right)^{2}\\
 & =\frac{1}{2}\left(a-x^{*}\right)^{2}=\intop_{x^{*}}^{a}\left(y-x^{*}\right)dy.
\end{align*}
Putting the two sides together, inequality (\ref{eq:inequality contagion})
implies that 
\[
\intop_{x^{*}}^{a}\left(y-P^{-1}\left(y\right)\right)dy\geq0,
\]
which contradicts the fact that $x^{*}$ is the unique maximizer of
the integral on the right-hand side. Hence the contagion wave must
spread across the entire network. 

The third step is to make sure that the contagion wave is not stopped
by ``bad'' agents, whose preference shocks are more favorable towards
higher actions, or perhaps even turn action 1 into a dominant action.
A consecutive set of ``bad'' agents will not revise down their actions
in a way described by the above equations and, if sufficiently large,
may block the contagion wave from moving over them. Because a set
of ``bad'' agents has a positive probability, it is important to
compare the relative frequency of the sets of initial infectors necessary
to start the wave versus the sets of ``bad'' agents who may stop
it. Unfortunately, for some $P$s, the latter are more frequent on
the line network. As a result, the line network is not a good candidate
example for Theorem \ref{thm:RUdominant selection on lattice}.

The spread of a contagion wave from a small set of initial infectors
extends from the line to higher-dimensional lattices due to an elegant
argument from \cite{morris_contagion_2000}. The idea is that if the
front of the wave is sufficiently smooth, then it can be locally approximated
by a hyperplane; its spread in the orthogonal direction is up to some
approximations identical to the spread along a one-dimensional lattice. 

At the same time, the existence of ``bad'' sets is less of an issue
on higher-dimensional lattices. The reason is that, even if ``bad''
sets are more likely then the sets of initial infectors, in order
for them to stop the wave, they would have to be arranged so to surround
the initial infectors. We show that the likelihood of such arrangement
is very low, and if $m$ and $\frac{M}{m}$ are sufficiently large,
it is much lower than the likelihood of the set of initial infectors.
Although this observation is intuitive, a rigorous proof is lengthy
and relies on some ideas from percolation theory (\cite{bollobas_percolation_2006}).
More precisely, the proof surrounds each ``bad'' set with an open
ball of large but fixed radius. We show that, for large $\frac{M}{m}$,
the size of all such balls is small relative to the size of the network,
and that the rest of the network has a giant connected component (i.e.,
component that contains a fraction almost equal to 1 of all agents
in the network and such that all agents are connected). If the radius
of the balls isolating the ``bad'' sets is sufficiently large, we
show we can construct sufficiently smooth contagion wave. The details
are left to Appendix \ref{sec:Proof-of-Unique selection}.

\section{RU-dominant equilibrium in each network }

The previous section identified an $RU$-dominant outcome as a candidate
solution for equilibrium selection theory. Next, we ask whether there
are other potential candidates, i.e., whether there are other outcomes
that can be unique equilibria on some networks. 

The next result shows that the answer is negative.
\begin{thm}
\label{thm:RU dominant always}Suppose that $x^{*}$ is the strictly
$RU$-dominant outcome. For each $\eta>0$, there is $d>0$ such that,
for each network $g$, if $d\left(g\right)\leq d$, then 
\begin{align*}
\Prob\left(\left\{ x^{*}\right\} \subseteq_{\eta}\text{Eq}\left(g\right)\right) & \geq1-\eta.
\end{align*}
\end{thm}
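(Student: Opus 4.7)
My plan is to construct, via a Tarski-style bracket, an equilibrium whose average lies within $\eta$ of $x^{*}$, with the probabilistic slack handled by an adaptation of the ``contagion-non-spread'' idea of \cite{morris_contagion_2000}.

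\textbf{Step 1: Bracketing profiles.} Fix $\delta<\eta/4$ and choose it so that strict RU-dominance yields a strictly positive gap $\gamma>0$ with $P(x^{*}+\delta)\leq x^{*}+\delta-\gamma$ and $P(x^{*}-\delta)\geq x^{*}-\delta+\gamma$ (generically this follows from the first-order condition $P^{-1}(x^{*})=x^{*}$ at the strict maximizer, supplemented by the boundary hypothesis; otherwise I first perturb $\delta$ within a small interval on which $P$ straddles the diagonal in the correct direction). Define the threshold profiles
\[a^{-}_{i}=\mathbf{1}\!\left[\beta(\varepsilon_{i})\leq x^{*}-\delta\right],\quad a^{+}_{i}=\mathbf{1}\!\left[\beta(\varepsilon_{i})\leq x^{*}+\delta\right].\]
Then $a^{-}\leq a^{+}$ coordinate-wise, and by the LLN $\text{Av}(a^{\pm})\approx P(x^{*}\pm\delta)$ with high probability.

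\textbf{Step 2: Tarski bracket.} By supermodularity the best-response operator $T$ is monotone on the complete lattice $[a^{-},a^{+}]$. If I can show the sub-/super-fixed-point conditions $T(a^{+})\leq a^{+}$ and $T(a^{-})\geq a^{-}$ hold coordinate-wise, Tarski's theorem delivers an equilibrium $\hat{a}\in[a^{-},a^{+}]$ with $\text{Av}(\hat{a})$ squeezed within $\eta$ of $x^{*}$. Pointwise, $T(a^{+})_{i}\leq a^{+}_{i}$ whenever $\beta_{i}^{a^{+}}\leq x^{*}+\delta$, and $\beta_{i}^{a^{+}}$ concentrates (weighted Hoeffding) on its mean $P(x^{*}+\delta)\leq x^{*}+\delta-\gamma$ with tail $\exp(-c\gamma^{2}/d(g))$.

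\textbf{Step 3: Main obstacle.} The central difficulty is that the per-player concentration bound, union-bounded over all $N$ agents, produces a failure probability $N\exp(-c\gamma^{2}/d(g))$ which cannot be made small uniformly in $N$ for a fixed $d(g)$. I adapt the idea of \cite{morris_contagion_2000}: admit a ``bad set'' $B_{0}=\{i:\beta_{i}^{a^{+}}>x^{*}+\delta\}$ of small expected density, and iteratively modify $a^{+}$ by setting $a^{+}_{i}=1$ on $B_{t}$, producing a non-decreasing sequence $B_{t+1}\supseteq B_{t}$ whose limit $B_{\infty}$ gives a modified profile $\tilde{a}^{+}$ that now satisfies $T(\tilde{a}^{+})\leq\tilde{a}^{+}$ pointwise. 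The key claim, where the bulk of the technical work lies, is that $|B_{\infty}|/N\leq\eta/4$ with high probability. Otherwise the enlargement would amount to a contagion of the RU-dominated action $1$ starting from a small random seed, and the integration-by-parts calculation of Section~\ref{subsec:Proof-intuition} would force
\[\intop_{x^{*}}^{\text{Av}(\tilde{a}^{+})}\left(y-P^{-1}(y)\right)dy\geq0,\]
contradicting strict RU-dominance of $x^{*}$. A symmetric argument produces $\tilde{a}^{-}$, and Tarski's theorem applied to $T$ on $[\tilde{a}^{-},\tilde{a}^{+}]$ then yields the equilibrium whose existence is claimed.
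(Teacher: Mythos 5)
Your high-level architecture is close to the paper's: the paper also starts from a profile of (randomized) best responses to $x^{*}$, runs an upward and a downward best-response adjustment, and sandwiches an equilibrium between the two limits using complementarities, so your Tarski bracket is the right skeleton. You also correctly identify the real obstacle, namely that the per-player concentration bound cannot simply be union-bounded over $N$ agents. But the step where you resolve that obstacle is exactly where the theorem lives, and the tool you invoke for it does not apply. The ``integration-by-parts calculation of Section \ref{subsec:Proof-intuition}'' is the \emph{spatial} summation-by-parts used for the lattice result (Theorem \ref{thm:RUdominant selection on lattice}); it relies on translation invariance and the balance identity $\sum_{d\geq j-i}g_{d}+\sum_{d\geq i-j}g_{d}=1$, i.e., on the line/lattice geometry. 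Theorem \ref{thm:RU dominant always} concerns \emph{arbitrary} networks with $d\left(g\right)$ small, where no such spatial ordering of agents exists, so that calculation cannot be used to bound $\left|B_{\infty}\right|$. Your key claim is therefore asserted, not proved.

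What actually closes the gap in the paper is a \emph{temporal} telescoping of the quadratic potential $\mathcal{F}\left(p\right)=\frac{1}{2}\sum_{i,j}g_{ij}\left(p_{i}-p_{j}\right)^{2}$ along a sequential, one-player-per-period upward dynamics, which uses only the symmetry $g_{ij}=g_{ji}$ and yields $2\sum_{i}g_{i}L\left(p_{i}^{U}\right)\leq\mathcal{F}\left(p^{0}\right)+A+\text{small terms}$ with $L\left(x\right)=\intop_{x^{*}}^{x}\left(P^{-1}\left(y\right)-y\right)dy$. Two ingredients your sketch is missing are then essential: (i) $\mathcal{F}\left(p^{0}\right)$ is shown small via an averaged ($L^{1}$) concentration of $\beta_{i}^{0}$ around $x^{*}$, which sidesteps the union bound entirely; and (ii) the cross term $A=\sum_{t}\sum_{i}\left(p_{i}^{t+1}-p_{i}^{t}\right)\sum_{j}g_{ij}\left(a_{j}^{s}-p_{j}^{s}\right)$ is controlled by the observation that the probability player $j$ best-responds with $1$, \emph{conditional on the other players' shocks}, is at most $\max\left(x^{*},p_{j}^{s}\right)$ — a supermartingale property that makes Azuma--Hoeffding applicable with increments of size $g_{ij}/g_{i}\leq d\left(g\right)$. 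Your iterative bad-set construction creates precisely the feedback (each $B_{t}$ depends on $\varepsilon$ through earlier rounds) that defeats naive concentration, and without this conditional-expectation device the loop does not close. A smaller, fixable issue: strict RU-dominance does not give a pointwise gap $P\left(x^{*}+\delta\right)\leq x^{*}+\delta-\gamma$ at a prescribed $\delta$; it only guarantees (from $\intop_{x^{*}}^{x^{*}+\eta/4}\left(y-P^{-1}\left(y\right)\right)dy<0$) that such $\delta$ exist in every right-neighborhood of $x^{*}$, so $\delta$ must be selected rather than fixed in advance, and the theorem is not stated ``generically.''
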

If the network is sufficiently fine, then, for almost all realizations
of payoff shocks, there is an equilibrium with action distribution
close to the RU-dominant action. In particular, no other outcome than
the $RU$-dominant outcome can be a unique equilibrium in some network. 

Theorems \ref{thm:RUdominant selection on lattice} and \ref{thm:RU dominant always}
lead to an \emph{equilibrium selection} theory: only the $RU$-dominant
outcome $x^{*}$ is robust to changes in the underlying network. This
claim is made precise by the proof of Theorem \ref{thm:RU dominant always}.
In the proof, we consider a profile in which almost all players choose
best responses as if $x^{*}$ neighbors play action 1. We show that
any best response dynamics starting from such a profile will stop
in an equilibrium profile in which a great majority of players never
revise their actions. It follows that, if players play such an equilibrium
under one network, and then the network is changed (in a manner independent
of actions and payoff shocks), then the best response process will
end up with a very similar profile as an equilibrium. 

\subsection{Proof intuition}

We start with an initial profile $a^{0}$ in which all players choose
best responses as if fraction $x^{*}$ of their opponents plays 1,
\[
a_{i}^{0}\in\arg\max u\left(a_{i},x^{*},\varepsilon_{i}\right).
\]
Although each agent chooses depending on their payoff shock, the law
of large numbers and the fact that $x^{*}$ is an equilibrium of the
continuum game imply that the average action in the population is
unlikely to be far from $x^{*}$.

Starting from the initial profile, we consider an \emph{upper} best
response dynamics, where at each stage, a single player is allowed
to revise their action towards the best response, but only upwards,
i.e., if the best response is the action $1$. Such dynamics must
stop eventually, and the resulting profile $a^{U}$ does not depend
on the order in which players revise their actions, as long as all
players for whom 1 is the best response has the opportunity to revise.
We argue below that the average action under $a^{U}$ is not too far
from the average action under $a^{0}$, and hence from $x^{*}$. Similarly,
an analogous observation holds when we analyze a downward counterpart
of the best response dynamics. Because of payoff complementarities,
there must be an equilibrium action profile sandwiched between the
limit profiles obtained by the upward and downward best response dynamics.
The two observations imply that such an equilibrium is not far away
from $x^{*}$. 

In order to explain the key observation, it is helpful to begin with
a special case of Example \ref{exa:Additive-payoff}, or when the
game is close to being deterministic and $x^{*}$ is close to 0. In
this case, Theorem \ref{thm:RU dominant always} follows from an argument
that based on the proof of Proposition 3 in \cite{morris_contagion_2000}.
(S. Morris attributes this idea to D. McAdams.) Let $a^{t}$ be the
$t$th stage of the upward best response dynamics. At each stage,
we define the infection capacity of profile $a^{t}$ as the mass of
links that connect agents who play action 1 with agents who play action
0, 
\begin{equation}
\mathcal{F}_{0}\left(a\right)=\sum_{i,j:a_{i}^{t}=1,a_{j}^{t}=0}g_{ij}.\label{eq:simple capacity}
\end{equation}
If, at stage $t+1$, player $i$ revises her action upwards, then
(a) the capacity will increase by $\sum_{j:a_{j}^{t}=0}g_{ij}$ because
of her new out-going links, and (b) it will decrease by $\sum_{j:a_{j}^{t}=1}g_{ij}$,
i.e., by the weight of the links from player $i$ to others who choose
1 in profile $a^{t}$. Because action $1$ is a best response of player
$i$, assuming that player $i$'s payoffs are close to deterministic
utility $u$, it must be that 
\[
\sum_{j:a_{j}^{t}=1}g_{ij}\approx\alpha\sum_{j}g_{ij}=\alpha g_{i}\text{ and }\text{\ensuremath{\sum_{j:a_{j}^{t}=0}g_{ij}\approx\left(1-\alpha\right)}}g_{i}.
\]
(Recall that $\alpha>\frac{1}{2}$ is a fraction of neighbors that
makes players indifferent between two actions.) Hence the capacity
in stage $t+1$ will be $\alpha g_{i}-\left(1-\alpha\right)g_{i}=\left(2\alpha-1\right)g_{i}$
smaller. Because the capacity cannot fall below 0, this leads to a
bound on the total mass of players who switch action under the dynamics
\[
\left(2\alpha-1\right)\sum_{i:0=a_{i}^{0}<a_{i}^{U}=1}g_{i}\leq\mathcal{F}_{0}\left(a^{0}\right).
\]
Because the initial profile was close to $0$, the capacity and the
limit profile must be close to 0 as well. Hence the number of agents
who revise their actions is small.

There are two important features of the above argument: the initial
capacity is small and it must appropriately decrease with each action
revised upward. The proof of Theorem \ref{thm:RU dominant always}
preserves the two features, but with a modified notion of capacity.
We cannot use (\ref{eq:simple capacity}), because, for general payoff
shocks and $x^{*}\in\left(0,1\right)$, a substantial fraction of
the population plays each action and (\ref{eq:simple capacity}) is
too large. Instead, we replace actions $a_{i}$ by their expected
best response versions $p_{i}=P\left(\frac{1}{g_{i}}\sum_{j}g_{ij}a_{j}\right)$
and define
\begin{equation}
\mathcal{F}\left(p\right)=\frac{1}{2}\sum_{i,j}g_{ij}\left(p_{i}-p_{j}\right)^{2}.\label{eq:real capacity}
\end{equation}
(To motivate the definition, notice that if we replace $p_{i}$ by
$a_{i}$, then (\ref{eq:simple capacity}) and (\ref{eq:real capacity})
are equal.) 

The law of large numbers implies that, under the initial profile $a^{0}$,
the average action among the neighbors, $\beta_{i}^{0}=\frac{1}{g_{i}}\sum_{j}g_{ij}a_{j}^{0}$,
and hence the expected best response $p_{i}$ must also be close to
$x^{*}$. Thus, the capacity of the initial profile is appropriately
small and the first required feature of capacity is preserved. 

The second feature is preserved as well. We sketch the idea here and
leave the details to the Appendix. Due to symmetry in the weights
$g_{ij}=g_{ji}$ for each $i,j$, we have for each $t$,
\[
\mathcal{F}\left(p^{t+1}\right)-\mathcal{F}\left(p^{t}\right)=\sum_{i}g_{i}\left(p_{i}^{t+1}\right)^{2}-\sum_{i}g_{i}\left(p_{i}^{t}\right)^{2}-\sum_{i}\left(p_{i}^{t+1}-p_{i}^{t}\right)\sum_{j}g_{ij}\sum_{s=t,t+1}p_{j}^{s}.
\]
Summing across $t\leq T$, and letting $\beta_{j}=\frac{1}{g_{j}}\sum_{i}g_{ij}a_{i}$
be the average behavior of $j$'s neighbors, some algebra shows that
\begin{align}
 & \mathcal{F}\left(p^{T+1}\right)-\mathcal{F}\left(p^{0}\right)=\sum_{t\leq T}\left(\mathcal{F}\left(p^{t+1}\right)-\mathcal{F}\left(p^{t}\right)\right)\nonumber \\
= & -2\sum_{i}g_{i}\left[\intop_{p_{i}^{0}}^{p_{i}^{T+1}}\left(P^{-1}\left(y\right)-y\right)dy\right]+\sum_{t\leq T}\sum_{i}\left(p_{i}^{t+1}-p_{i}^{t}\right)\sum_{j}g_{ij}\sum_{s=t,t+1}\left(a_{j}^{s}-p_{j}^{s}\right)\label{eq:computations risk-dominant always}\\
 & +\text{ small terms.}\nonumber 
\end{align}
 The details of the calculations can be found in Appendix \ref{sec:Proof-of-AllDominant}.
The ``small terms'' depend on the stage increase in $\beta_{i}^{t+1}-\beta_{i}^{t}$,
which is small due to our assumption that at most one agent revises
her action per period and because the impact of a single agent in
the neighborhood of another is smaller than $d\left(g\right)$. They
also depend on the difference $\beta_{i}^{0}-x^{*}$, which is small
because the initial profile is close to $x^{*}$.

The second term of the right-hand side is small for probabilistic
reasons. Notice that the probability that action $1$ is a player's
$j$ best response in period $s$ is not higher than the expected
action $p_{j}^{s}$. In fact, the probability is not higher even it
is conditioned on the actions of other agents. Theis that agent $j$'s
behavior positively affects the actions of other players only after
she revises her action. This observation, together with the fact that
each agent $j$ is small in the neighborhood of $i$, allows us to
show that the second last term is small, with a large probability,
due to a version of the finite law of large numbers. 

Ignoring the (probabilistically and deterministically) small terms,
summing across $t$, and remembering that $p_{i}^{t}=P\left(\beta_{i}^{t}\right)$
and that $\beta_{i}^{0}\approx x^{*}=P\left(x^{*}\right)\approx P\left(\beta_{i}^{0}\right)$,
we obtain
\begin{align*}
\mathcal{F}\left(p^{0}\right)\geq & 2\sum_{i}g_{i}\left[\intop_{x^{*}}^{P\left(\beta_{i}^{T}\right)}\left(P^{-1}\left(y\right)-y\right)dy\right].
\end{align*}
The definition of the RU-dominant outcome implies that, at least locally,
the integral is increasing in $\beta_{i}^{T}$. Hence, if the original
capacity is small, then, for each $T$, the average behavior in the
neighborhood of a great majority of players cannot be too far away
from $x^{*}$. Hence, the limit of the upper best response dynamics
cannot be to far away from $x^{*}$, which concludes the argument. 

\section{Discussion}

\subsection{Unweighted average\label{subsec:Average-action}}

Our definition of the average action stated in Section \ref{subsec:Equilibria}
weights individuals by their neighborhood size $g_{i}$. An alternative
is to use the unweighted average 
\begin{align*}
\text{Av}_{\text{unweighted}}\left(a\right) & =\frac{1}{N}\sum_{i}a_{i}.
\end{align*}
When the network is balanced, i.e., when $g_{i}=g_{j}$ for each $i$
and $j$, the two notions of average are identical. 

Because Theorem \ref{thm:Complete graph}, Corollary \ref{cor:upper bound},
and Theorem \ref{thm:RUdominant selection on lattice} are proven
using balanced networks, they continue to hold \emph{verbatim} if
we change the notion of average to unweighted one. A version of Theorem
\ref{thm:RU dominant always} holds with the following modification
:\emph{ for each $w>0$, and each $\eta>0$, there is $d>0$ such
that, for each network $g$, if $d\left(g\right)\leq d$ and $w\left(d\right)\leq w$,
then 
\[
\Prob\left(\left\{ x^{*}\right\} \subseteq_{\eta}\text{Eq}\left(g,\right)\right)\geq1-\eta.
\]
}The required modification of the proof is very minor and it can be
found in Appendix \ref{sec:Extension-to-unweighted}. 

We were not able to find an immediate way of extending Theorem \ref{thm:Largest equilibrium}. 

\subsection{Small number of links}

The results of this paper focus on the limit case $d\left(g\right)\rightarrow0$
and they apply to networks with a large number of connections (i.e.,
large degrees), like networks of acquaintances. If $d\left(g\right)>0$,
none of the results hold. The small-degree case requires different
techniques and separate analysis and we leave it for future research. 

\subsection{Independence}

Another key assumption of the model is that the payoff shocks are
independent across agents. An alternative and natural assumption is
that the payoff shocks of directly connected agents can be correlated.
If imperfect, such a correlation dies out exponentially with the distance
between agents, making distant agents roughly independent. For this
reason, we suspect that the results of this paper continue to hold.
However, the proper analysis of this case is left to future research. 

\appendix

\section{Monotonicity}

This part of the Appendix shows that if $P$ is a continuum best response
function of random utility game $\left(u,F\right)$, then, for any
increasing and right-continuous function $P^{\prime}\geq P$, there
is a random utility game that has $P^{\prime}$ as a continuum best
response function, and such that the distribution of equilibria first-order
stochastically dominates the distribution of equilibria in the original
game. 

Formally, the space of (mixed) action profiles $\mathcal{A}=\left[0,1\right]^{N}$
is a lattice with coordinate-wise comparison: for any $a,b\in\mathcal{A}$,
we have $a\leq b$ iff $a_{i}\leq b_{i}$ for each $i$. Let $\leq_{S}$
denote the strong set order on subsets or $\R$ and, as a lattice
extension, of $\mathcal{A}$. We say that a probability distribution
$\mu\in\Delta\mathcal{A}$ is dominated by $\mu^{\prime}\in\Delta\mathcal{A}$
in the sense of first-order stochastic dominance, and write $\mu\leq_{FOS}\mu^{\prime}$,
if for each $a\in\mathcal{A}$, $\mu\left(\left\{ a^{\prime}:a^{\prime}\geq a\right\} \right)\leq\mu^{\prime}\left(\left\{ a^{\prime}:a^{\prime}\geq a\right\} \right)$.

Let $\left(u,F\right)$ be a random utility game. Let $E\left(u,\varepsilon\right)$
denote the set of equilibrium profiles in random utility game $\left(u,F\right)$.
We compare sets using the strong set order. Let $\mu\left(u,F\right)$
denote the probability distribution over the sets of equilibrium profiles
induced by distribution over profiles of payoffs shocks. We say that
random utility game $\left(u,F\right)$ is dominated by game $\left(u^{\prime},F^{\prime}\right)$
if $\mu\left(u,F\right)\leq_{FOS}\mu\left(u^{\prime},F^{\prime}\right)$. 
\begin{lem}
\label{lem:Dominance}Suppose that $P$ is a continuum best response
function of random utility game $\left(u,F\right)$. Then, for each
increasing, right-continuous $P^{\prime}\geq P$, there exists random
utility game $\left(u^{\prime},F^{\prime}\right)$ such that (a) $P^{\prime}$
is a continuum best response function of $\left(u^{\prime},F^{\prime}\right)$,
and (b) random utility game $\left(u,F\right)$ is dominated by game
$\left(u^{\prime},F^{\prime}\right)$.
\end{lem}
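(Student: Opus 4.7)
The plan is to couple the two games on a common shock space, reducing the stochastic-dominance claim to a pointwise comparison of equilibrium sets under the strong set order, and then appealing to monotone comparative statics for supermodular games (Topkis/Zhou). Concretely, I will keep the shock distribution unchanged, $F' := F$, and redesign the payoff function so that its indifference threshold $\beta'(\varepsilon)$ lies pointwise below $\beta(\varepsilon)$ while still having marginal CDF $P'$. Intuitively, every shock realization then makes action $1$ weakly more attractive in the new game, and supermodularity pushes the entire equilibrium set upward realization-by-realization.

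For the construction, assume first that $P$ is continuous and strictly increasing. Then $U := P(\beta(\varepsilon))$ is uniform on $[0,1]$ under $F$. Set
\[
\beta'(\varepsilon) := (P')^{-1}\bigl(P(\beta(\varepsilon))\bigr).
\]
Since $P' \geq P$ implies $(P')^{-1} \leq P^{-1}$ (in the right-continuous generalized sense), we get $\beta'(\varepsilon)\leq \beta(\varepsilon)$ pointwise; and since $U$ is uniform, $\beta'(\varepsilon)$ has CDF $P'$ under $F$. Take $u'$ to be the threshold payoff
\[
u'(0, x, \varepsilon) := 0, \qquad u'(1, x, \varepsilon) := x - \beta'(\varepsilon),
\]
which is strictly supermodular (its cross-difference equals $1$) and whose indifference equation (\ref{eq:definition of beta(epsilon)}) is solved precisely by $\beta'(\varepsilon)$. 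Hence the continuum best-response function of $(u', F)$ equals $P'$, establishing claim (a).

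For claim (b), couple the two games by feeding them a single realization of $(\varepsilon_i)$. In either game, player $i$'s best response to the neighborhood fraction $\beta_i^a = \frac{1}{g_i}\sum_j g_{ij} a_j$ is to play $1$ iff her indifference threshold is $\leq \beta_i^a$. Since $\beta'(\varepsilon_i)\leq \beta(\varepsilon_i)$, the best-response correspondence of the new game dominates that of the old game in the strong set order at every profile $a\in \mathcal{A}$. Combined with the intrinsic supermodularity of the game in own and opponents' actions, this parametric monotonicity places us in the setting of Zhou's extension of Topkis's theorem on lattices, which yields $E(u, \varepsilon) \leq_S E(u', \varepsilon)$ for every realization of $\varepsilon$. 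Because this pointwise inequality transfers directly to the induced laws on subsets of $\mathcal{A}$, we conclude $\mu(u, F) \leq_{FOS} \mu(u', F')$.

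The only genuine technical obstacle is the case where $P$ has atoms or flat segments, so that $P(\beta(\varepsilon))$ is not literally uniform and the quantile identity relating $\beta'$ to $P'$ may fail at discontinuities. The standard remedy is to enlarge the shock space to $\R \times [0,1]$ with an independent auxiliary uniform $\zeta$, and to replace $U$ by the randomized-quantile transform $U^* := P(\beta(\varepsilon)-) + \zeta\bigl(P(\beta(\varepsilon))-P(\beta(\varepsilon)-)\bigr)$, which is uniform on $[0,1]$; then $\beta'(\varepsilon,\zeta) := (P')^{-1}(U^*)$ still satisfies $\beta'\leq \beta$ almost surely and has CDF exactly $P'$, and the coupling argument of the previous paragraph carries over verbatim.
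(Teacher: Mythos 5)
Your proposal is correct and follows essentially the same route as the paper: realize both games on a common shock space so that the indifference thresholds are pointwise ordered ($\beta'\leq\beta$), then use supermodularity and monotone comparative statics to order the equilibrium sets in the strong set order realization by realization, which gives the first-order stochastic dominance of the induced laws. The only (cosmetic) difference is the choice of coupling: the paper transfers both games to the canonical uniform space $\left(\left[0,1\right]^{N},\lambda^{N}\right)$ with $u_{P}\left(1,x,\varepsilon\right)=x-P^{-1}\left(\varepsilon\right)$, invoking that the law of equilibria depends only on $P$, whereas you keep the original $F$ and transport via the (randomized) quantile transform, which handles the atom case a bit more explicitly than the paper does.
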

\begin{proof}
First, observe that any two random utility models with the same continuum
best response function $P$ have the same distributions over sets
of equiluibria. Second, we show that we can construct different models
over the same probability space. Let $\Omega=\left[0,1\right]^{N}$
and let $\lambda^{N}$ be the product uniform measure on $\Omega$.
For each increasing, right-continuous $P$, define utility function
$u_{P}$ so that 
\[
u_{P}\left(1,x,\varepsilon\right)=x-P^{-1}\left(\varepsilon\right)\text{ and }u_{P}\left(0,x,\varepsilon\right)=0.
\]
Then, the continuum best response function of $\left(u_{P},\lambda\right)$
is equal to $P$. Finally, notice that if $P^{\prime}\geq P$, and
we consider two games $u_{P}$ and $u_{P}^{\prime}$ on the same probability
space $\left(\Omega,\lambda^{N}\right)$, then the best resposne of
each player in the second model is always higher (in the sense of
strong set order) than the best response of the player in the first
model. A consequence is that, for each $\varepsilon,$ $E\left(u_{P},\varepsilon\right)\leq E\left(u_{P^{\prime}},\varepsilon\right)$,
which concludes the proof of the result. 
\end{proof}

\section{Proof of Theorem \ref{thm:Complete graph} and Corollary \ref{cor:upper bound}\label{sec:Proof-of-Complete graph}}

\subsection{Proof of Theorem \ref{thm:Complete graph}}

Let $U$ be an open set from the definition of a strongly stable $x$.
Fix $\delta>0$ and $N<\infty$ such that $\left[x-2\delta,x+2\delta\right]\subseteq U$
and $\gamma\frac{1}{N-1}\leq\frac{1}{2}\left(1-\gamma\right)\delta$.
Let $\eta=\frac{1}{2}\left(1-\gamma\right)\delta$. Then, 
\[
x-\delta\leq x-\delta+\left(\left(1-\gamma\right)\delta-\gamma\frac{1}{N-1}\right)-\eta\leq P\left(x\right)-\gamma\left(\delta+\frac{1}{N-1}\right)-\eta\leq P\left(x-\delta-\frac{1}{N-1}\right)-\eta,
\]
and similarly, $P\left(x+\delta+\frac{1}{N-1}\right)+\eta\leq x+\delta$.
Additionally, choose a sufficiently large $N$ so that $2\exp\left(-2N\eta^{2}\right)\leq\eta$.

Let 
\[
P_{\varepsilon}\left(x\right)=\frac{1}{N}\sum_{i}\mathbf{1}\left(\beta\left(\varepsilon_{i}\right)\leq x\right),
\]
be the empirical distribution of best response thresholds. Define
event $\mathcal{P}=\left\{ \sup_{x}\left|P_{\varepsilon}\left(x\right)-P\left(x\right)\right|\leq\eta\right\} $.
By the Dvoretsky-Kiefer--Wolfowitz--Massart inequality, for each
$\eta>0$,
\[
\text{Prob}\left(\text{not }\mathcal{P}\right)\leq2\exp\left(-2N\eta^{2}\right)\leq\eta.
\]

For each profile $a$, define $\beta_{i}^{a}=\frac{1}{N-1}\sum_{j\neq i}a_{j}$
as the average action in player $i$'s neighborhood. The average action
is not far from the average action in the population, $\left|\beta_{i}^{a}-\text{Av}\left(a\right)\right|\leq\frac{1}{N-1}.$ 

Suppose that event $\mathcal{P}$ holds. Let $b\left(a,\varepsilon\right)$
be the best response profile to profile $a$, where, in a case of
a tie, we assume that an agent chooses $1$. Then, 
\[
\text{Av}\left(b\left(a,\varepsilon\right)\right)=\frac{1}{N}\sum\mathbf{1}\left\{ \beta\left(\varepsilon_{i}\right)\leq\beta_{i}^{a}\right\} .
\]
If $\text{Av}\left(a\right)\in\left[x-\delta,x+\delta\right]$, the
above inequalities imply that 
\begin{align*}
 & x-\delta\\
\leq & P\left(\text{Av}\left(a\right)-\frac{1}{N-1}\right)-\eta\leq\frac{1}{N}\sum\mathbf{1}\left\{ \beta\left(\varepsilon_{i}\right)\leq\text{Av}\left(a\right)-\frac{1}{N-1}\right\} \\
\leq & \text{Av}\left(b\left(a,\varepsilon\right)\right)\\
\leq & \frac{1}{N}\sum\mathbf{1}\left\{ \beta\left(\varepsilon_{i}\right)\leq\text{Av}\left(a\right)+\frac{1}{N-1}\right\} \leq\eta+P\left(\text{Av}\left(a\right)+\frac{1}{N-1}\right)\\
\leq & x-\delta.
\end{align*}
Hence, mapping $b\left(,.\varepsilon\right)$ maps the set of profiles
$a$ s.t. $\text{Av}\left(a\right)\in\left[x-\delta,x+\delta\right]$
into itself. The result follows from the fixed-point theorem. 

\subsection{Proof of Corollary \ref{cor:upper bound}}

By Theorem \ref{thm:Complete graph}, for each $\eta>0$ and for sufficiently
large $N$, 
\[
\Prob\left(\left\{ x_{\min},x_{\max}\right\} \subseteq_{\frac{1}{8}\eta}\text{Eq}\left(g^{N},\varepsilon\right)\right)\leq\sum_{x\in\left\{ x_{\min},x_{\max}\right\} }\Prob\left(\left\{ x\right\} \subseteq_{\frac{1}{8}\eta}\text{Eq}\left(g^{N},\varepsilon\right)\right)\leq\frac{1}{4}\eta.
\]

Let $g=g^{K,N}$ be a balanced network that consists of $K$ copies
of complete $N$-person graphs. Let $g_{k}$ denote the $k$th copy.
Let $A\left(g,\varepsilon\right)=\left\{ k:\left\{ x_{\min},x_{\max}\right\} \subseteq_{\frac{1}{8}\eta}\text{Eq}\left(g_{k},\varepsilon\right)\right\} $
be the set of copies that contain equilibria with averages close to
the largest and the smallest of the fixed points. By the choice of
$N$ and the Central Limit Theorem, for sufficiently large $K$,
\[
\text{Prob}\left(\frac{1}{K}\left|A\left(g,\varepsilon\right)\right|\leq1-\frac{3}{8}\eta\right)\leq\eta.
\]

Let $\psi_{\max},\psi_{\min}:\left\{ 1,...,K\right\} \rightarrow\left[0,1\right]$
be functions such that $\psi_{s}\left(k\right)\in\text{Eq}\left(g_{k},\varepsilon\right)$
for each $s=\max,\min$ and each $k$, and, if $k\in A\left(g,\varepsilon\right)$,
then $\left|\psi_{s}\left(k\right)-x_{s}\right|\leq\frac{1}{8}\eta$.
Then, for each subset $B\subseteq\left\{ 1,...,K\right\} $ of copies,
there is an equilibrium $a$ with average payoffs equal to 
\[
\text{Av}\left(a\right)=\frac{1}{K}\left(\sum_{k\in B}\psi_{\max}\left(k\right)+\sum_{k\notin B}\psi_{\min}\left(k\right)\right).
\]
Because of the choice of $\psi_{.}\left(.\right)$, 
\[
\frac{\left|B\right|}{K}x_{\max}+\frac{K-\left|B\right|}{K}x_{\min}-\frac{1}{2}\eta\leq\text{Av}\left(x\right)\leq\frac{\left|B\right|}{K}x_{\max}+\frac{K-\left|B\right|}{K}x_{\min}+\frac{1}{2}\eta.
\]
If $K\geq\frac{2}{\eta}$, for any $x\in\left[x_{\min},x_{\max}\right]$,
we can choose $B$, and hence arrive at equilibrium $a$, so that
the average payoffs in $a$ are at most $\eta$-far from $x$, $\left|\text{Av}\left(a\right)-x\right|\leq\eta$. 

\section{Proof of Theorem \ref{thm:Largest equilibrium}\label{sec:Proof-Largest Eq}}

The first subsection introduces notation and metric $d$. Section
\ref{subsec:Deterministic-bounds} derives various deterministic inequalities
connecting metric $d$ and average behavior. Section \ref{subsec:Bounds-on-a}
derives probabilistic bounds. The next two sections contain steps
(a) and (b) described in the introduction. The last section concludes
the proof of the theorem. 

We begin with preliminary remarks. It is enough to establish one side
of the probability bound: for each $\eta>0$ and $w<\infty$, there
is $\delta>0$, such that for each network $g$, if $d\left(g\right)\leq\delta$,
$w\left(g\right)\leq w$, then
\[
\Prob\text{\ensuremath{\left(\max\text{Eq}\left(g^{N},\varepsilon\right)\geq x_{\max}+\eta\right)}	\ensuremath{\leq\eta}.}
\]
 The proof of the other probability bound is analogous and the two
bounds together combine to the statement of the theorem. 

Say that $a$ is an upper equilibrium if, whenever indifferent, each
agent plays action $1$. Because of supermodularity, if $a$ is an
equilibrium, there exists $a^{\prime}\geq a$ that is an upper equilibrium.
Thus, it is enough to show the above probability bound when set $\text{Eq}\left(g,\varepsilon\right)$
contains only the average payoffs in all upper equilibria. 

Because $x_{\max}$ is strongly stable, there exists a constant $\gamma<1$
such that for each $x$,
\[
P\left(x\right)\leq\max\left(x_{\max},x_{\max}+\left(1-\gamma\right)\left(x-x_{\max}\right)\right)=P^{*}\left(x\right).
\]
(Such constant exists locally due to the definition of strong stability.
The existence for all $x$ follows from compactness and the fact that
$x_{\max}$ is the largest fixed point of $P$.) Because $P^{*}$
is increasing and right-continuous, Lemma \ref{lem:Dominance} implies
that there exists a random utility game $\left(u^{*},F^{*}\right)$
with continuum best response function $P^{*}$ that dominates $\left(u,F\right)$.
In particular, it is enough to show the second claim in Theorem \ref{thm:Largest equilibrium}
for game $\left(u^{*},F^{*}\right)$. Henceforth, we assume that $P^{*}$
is the continuum best response function. Notice that $P^{*}$ is Lipschitz
with a Lipschitz constant equal to $\gamma$.

\subsection{Notation}

For each profile $a$, let $\beta_{i}^{a}=\frac{1}{g_{i}}\sum g_{ij}a_{j}$
be the average behavior of neighbors of $i$. Let $b_{i}\left(a,\varepsilon\right)=\max\left(\arg\max_{a_{i}}u_{i}\left(a_{i},\beta_{i}^{a},\varepsilon\right)\right)$
be the largest best response action of agent $i$ against $a_{-i}$
given payoff shock $\varepsilon_{i}$. Let $b\left(a,\varepsilon\right)$
be the profile of best responses. If $a$ is an upper equilibrium
given $\varepsilon$, then $b\left(a,\varepsilon\right)=a$. Also,
we denote $p^{a}=\left(P^{*}\left(\beta_{i}^{a}\right)\right)_{i}$
to be the profile of expected best responses. 

Let $\mathcal{A}=\left[0,1\right]^{N}$ be the space of (mixed) action
profiles. Let 
\[
\mathcal{B}=\left\{ \beta^{a}:a\in\mathcal{A}\right\} 
\]
be the set of profiles $\beta^{a}=\left(\beta_{i}^{a}\right)$ of
neighborhood behaviors that can be generated from the profiles. We
assume that $\mathcal{A}$ is a subset of a normed space $\R^{N}$
with a norm-induced metric
\[
d\left(a,b\right)=\sqrt{\frac{1}{\sum g_{i}^{2}}\sum g_{i}^{2}\left(a_{i}-b_{i}\right)^{2}}.
\]
 This is a weighted Euclidean metric normalized so that the diameter
of $\mathcal{A}$ for a balanced graph is equal to $\text{diam}\mathcal{A}=1$. 

Let $g_{\min}=\min_{i}g_{i}$ and $g_{\max}=\max_{i}g_{i}$. 

\subsection{Deterministic relationships\label{subsec:Deterministic-bounds}}
\begin{lem}
\label{lem:Averages}For each profile $a\in\mathcal{A}$,
\[
\text{Av}\left(a\right)=\text{Av}\left(\beta^{a}\right).
\]
\end{lem}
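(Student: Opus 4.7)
The plan is to unfold the definitions of $\text{Av}$ and $\beta^a$ and rearrange the resulting double sum using the symmetry of the graph weights. Specifically, I would begin from
\[
\text{Av}(\beta^a) \;=\; \frac{1}{\sum_i g_i}\sum_i g_i\,\beta_i^a \;=\; \frac{1}{\sum_i g_i}\sum_i g_i \cdot \frac{1}{g_i}\sum_j g_{ij} a_j,
\]
where the weight $g_i$ on the outside cancels against the $1/g_i$ coming from the definition of $\beta_i^a$. This cancellation is the only substantive step; it is exactly the reason $\text{Av}$ is defined with the $g_i$-weighting rather than uniformly.

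After the cancellation the expression becomes $\frac{1}{\sum_i g_i}\sum_{i,j} g_{ij}a_j$. I would then swap the order of summation and use the identity $\sum_i g_{ij} = g_j$, which holds because the network is undirected ($g_{ij} = g_{ji}$) and $g_j = \sum_i g_{ji}$ by definition. This yields
\[
\text{Av}(\beta^a) \;=\; \frac{1}{\sum_i g_i}\sum_j a_j \sum_i g_{ij} \;=\; \frac{1}{\sum_i g_i}\sum_j g_j a_j \;=\; \text{Av}(a),
\]
which is exactly the claim. There is no real obstacle here; the lemma is a bookkeeping identity whose whole content is that the $g_i$-weighted average of neighborhood averages equals the $g_i$-weighted average of actions, thanks to the symmetry of the weighted adjacency structure. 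The lemma is recorded at this point because it lets later arguments move freely between statements about $\text{Av}(a)$ and statements about $\text{Av}(\beta^a)$ when combining per-agent best-response inequalities into aggregate ones.
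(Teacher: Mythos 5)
Your proof is correct and is essentially identical to the paper's: both cancel the $g_i$ weight against the $1/g_i$ in $\beta_i^a$, swap the summation order, and use $\sum_i g_{ij}=g_j$ (via symmetry) to recover $\text{Av}(a)$. Nothing further to add.
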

\begin{proof}
Notice that 
\[
\text{Av}\left(\beta^{a}\right)=\frac{1}{\sum_{i}g_{i}}\sum_{i}g_{i}\frac{1}{g_{i}}\sum_{j}g_{ij}a_{j}=\frac{1}{\sum_{i}g_{i}}\sum_{i}\sum_{j}g_{ij}a_{j}=\frac{1}{\sum_{i}g_{i}}\sum_{j}a_{j}g_{j}=\text{Av}\left(a\right).
\]
\end{proof}
\begin{lem}
\label{lem:Lipschitz averages}For any profiles $a,b\in\mathcal{A},$
\begin{align*}
 & \left|\text{Av}\left(P^{*}\left(a\right)\right)-\text{Av}\left(P^{*}\left(b\right)\right)\right|\leq\left|\text{Av}\left(a\right)-\text{Av}\left(b\right)\right|.
\end{align*}
\end{lem}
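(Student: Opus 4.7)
My plan is to leverage the two properties of $P^{*}$ established just before the lemma: $P^{*}$ is increasing, and it is Lipschitz with constant $1-\gamma\leq 1$. By linearity of the weighted average,
\[
\text{Av}(P^{*}(a))-\text{Av}(P^{*}(b))=\frac{1}{\sum_{i}g_{i}}\sum_{i}g_{i}\bigl(P^{*}(a_{i})-P^{*}(b_{i})\bigr).
\]
Writing each term as $P^{*}(a_{i})-P^{*}(b_{i})=\int_{b_{i}}^{a_{i}}(P^{*})'(t)\,dt$ (interpreted as a signed integral when $a_{i}<b_{i}$) and swapping sum and integral recasts the right-hand side as $\int_{0}^{1}(P^{*})'(t)\,m(t)\,dt$, where the signed density
\[
m(t)=\frac{1}{\sum_{i}g_{i}}\sum_{i}g_{i}\bigl(\mathbf{1}_{[b_{i},a_{i})}(t)-\mathbf{1}_{[a_{i},b_{i})}(t)\bigr)
\]
satisfies $\int_{0}^{1}m(t)\,dt=\text{Av}(a)-\text{Av}(b)$ (which follows from applying Lemma \ref{lem:Averages} to the profile $a-b$, using the linearity of $\text{Av}$).

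To finish, I would use the very explicit two-piece form of $P^{*}$: since $(P^{*})'\equiv 0$ on $[0,x_{\max}]$ and $(P^{*})'\equiv 1-\gamma$ on $(x_{\max},1]$, the weighted integral collapses to $(1-\gamma)\int_{x_{\max}}^{1}m(t)\,dt$. I would then relate this tail integral of $m$ back to its full integral $\text{Av}(a)-\text{Av}(b)$, absorbing any slack into the factor $1-\gamma<1$ to conclude $\lvert\text{Av}(P^{*}(a))-\text{Av}(P^{*}(b))\rvert\leq\lvert\text{Av}(a)-\text{Av}(b)\rvert$. When the components of $a-b$ share a common sign the density $m$ is of one sign and the argument is immediate; the bound $0\leq(P^{*})'\leq 1-\gamma$ combined with monotonicity of $P^{*}$ gives the required estimate directly.

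The main obstacle I anticipate is the mixed-sign case, where $m$ itself changes sign across $[0,1]$. A termwise triangle-inequality bound of the form $|P^{*}(a_{i})-P^{*}(b_{i})|\leq|a_{i}-b_{i}|$ only delivers $\text{Av}(|a-b|)$ on the right, which is in general strictly larger than $|\text{Av}(a)-\text{Av}(b)|$. One therefore cannot pass through componentwise absolute values and must exploit the piecewise-constant structure of $(P^{*})'$ so that the weighted integral of $m$ reduces to a \emph{signed} average over $(x_{\max},1]$, which inherits the cancellation present in $\int_{0}^{1}m(t)\,dt$ rather than destroying it.
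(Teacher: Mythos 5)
You have correctly isolated the crux --- a termwise Lipschitz bound only yields $\text{Av}\left(\left|a-b\right|\right)$, which does not control $\left|\text{Av}\left(a\right)-\text{Av}\left(b\right)\right|$ --- but the final step of your plan, relating the tail integral $\int_{x_{\max}}^{1}m\left(t\right)dt$ back to the full integral $\int_{0}^{1}m\left(t\right)dt$, cannot be carried out: a tail integral of a signed density is not bounded by its full integral. In fact the inequality as stated is false. Take $N=2$ on the complete graph (so $g_{1}=g_{2}=1$), $x_{\max}=\tfrac{1}{2}$, $\gamma=\tfrac{1}{2}$, so that $P^{*}\left(x\right)=\max\left(\tfrac{1}{2},\tfrac{1}{2}+\tfrac{1}{2}\left(x-\tfrac{1}{2}\right)\right)$, and let $a=\left(0,1\right)$, $b=\left(\tfrac{1}{2},\tfrac{1}{2}\right)$. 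Then $\text{Av}\left(a\right)=\text{Av}\left(b\right)=\tfrac{1}{2}$, so the right-hand side is $0$, while $P^{*}\left(a\right)=\left(\tfrac{1}{2},\tfrac{3}{4}\right)$ and $P^{*}\left(b\right)=\left(\tfrac{1}{2},\tfrac{1}{2}\right)$ give $\left|\text{Av}\left(P^{*}\left(a\right)\right)-\text{Av}\left(P^{*}\left(b\right)\right)\right|=\tfrac{1}{8}$. In your notation $m=\tfrac{1}{2}\left(\mathbf{1}_{[1/2,1)}-\mathbf{1}_{[0,1/2)}\right)$ has full integral $0$ but tail integral $\tfrac{1}{4}$ over $\left(x_{\max},1\right]$: the mixed-sign cancellation you flagged as the main obstacle is not a technicality to be absorbed into the factor $1-\gamma$, it is fatal. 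Your observation that the claim holds when the differences $a_{i}-b_{i}$ share a sign (or when $P^{*}$ is affine) is correct, but that is the only regime in which it holds.

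For what it is worth, the paper's own proof is the one-line assertion that the inequality ``follows from $P^{*}$ being Lipschitz,'' which has exactly the gap you identified, so you should not feel obliged to manufacture a proof of a false statement. What is actually needed at the unique point where the lemma is invoked (the chain of estimates in Lemma \ref{lem:Uper equilibrium}) is the weaker, correct bound
\[
\left|\text{Av}\left(P^{*}\left(a\right)\right)-\text{Av}\left(P^{*}\left(b\right)\right)\right|\leq\frac{1}{\sum_{i}g_{i}}\sum_{i}g_{i}\left|P^{*}\left(a_{i}\right)-P^{*}\left(b_{i}\right)\right|\leq\frac{1}{\sum_{i}g_{i}}\sum_{i}g_{i}\left|a_{i}-b_{i}\right|\leq\sqrt{w\left(g\right)}d\left(a,b\right),
\]
where the middle step uses that $P^{*}$ is Lipschitz with constant $1-\gamma\leq1$ and the last step is the Jensen/Cauchy--Schwarz computation from Lemma \ref{lem:avergaes vs metric}. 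Applied to $\beta^{a}$ and $\beta^{a_{0}}$ with $d\left(\beta^{a},\beta^{a_{0}}\right)\leq\delta$, this recovers the bound $\sqrt{w\left(g\right)}\delta$ that the downstream argument actually uses, so the right fix is to restate the lemma in this form rather than to pursue your signed-density decomposition further.
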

\begin{proof}
The inequality follows from $P^{*}$ being Lipschitz with a constant
$\gamma<1$. 
\end{proof}
\begin{lem}
\label{lem:avergaes vs metric}For any profiles $a,b\in\mathcal{A},$
\begin{align*}
\left|\text{Av}\left(a\right)-\text{Av}\left(b\right)\right| & \leq\sqrt{w\left(g\right)}d\left(a,b\right).
\end{align*}
\end{lem}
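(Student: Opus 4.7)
The plan is to bound $|\text{Av}(a) - \text{Av}(b)|$ by a single application of the Cauchy--Schwarz inequality, producing a weighted $\ell^2$-norm of $(a_i - b_i)$ with weights $g_i$, and then to translate this quantity into the metric $d$, whose weights are $g_i^2$, via elementary comparisons involving $g_{\min}$ and $g_{\max}$.

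First I would write
$$|\text{Av}(a)-\text{Av}(b)| \;=\; \tfrac{1}{\sum_i g_i}\bigl|\sum_i g_i(a_i - b_i)\bigr|$$
and apply Cauchy--Schwarz with the symmetric split $g_i(a_i-b_i) = g_i^{1/2}\cdot g_i^{1/2}(a_i-b_i)$. This yields
$$|\text{Av}(a)-\text{Av}(b)|^2 \;\leq\; \frac{\sum_i g_i (a_i-b_i)^2}{\sum_i g_i}.$$

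Next I would compare the right-hand side to $d(a,b)^2 = \tfrac{\sum_i g_i^2 (a_i-b_i)^2}{\sum_i g_i^2}$ by showing
$$\frac{\sum_i g_i (a_i-b_i)^2}{\sum_i g_i}\cdot\frac{\sum_i g_i^2}{\sum_i g_i^2 (a_i-b_i)^2}\;\leq\; w(g).$$
The factor $\tfrac{\sum_i g_i^2}{\sum_i g_i}$ is a weighted average of $g_i$ with weights $g_i$, hence is bounded above by $g_{\max}$. For the remaining factor, set $\alpha_i = (a_i - b_i)^2$; it equals $\tfrac{\sum_i g_i \alpha_i}{\sum_i g_i^2 \alpha_i}$, and the termwise estimate $g_i \leq g_i^2 / g_{\min}$ bounds it by $1/g_{\min}$. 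Multiplying the two bounds gives $g_{\max}/g_{\min} = w(g)$, and taking square roots completes the proof.

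There is no real obstacle; the lemma is essentially a weighted Cauchy--Schwarz calculation. The one subtlety worth emphasizing is the choice of split: the naive split $1\cdot g_i(a_i-b_i)$ introduces an extraneous factor of $\sqrt{N}$ and leads to the weaker bound $w(g)\, d(a,b)$, whereas the symmetric split $g_i^{1/2}\cdot g_i^{1/2}$ places one power of $g_i$ on each side of the Cauchy--Schwarz inequality and delivers the sharper exponent $\tfrac12$ on $w(g)$ that the statement requires.
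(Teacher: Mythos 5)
Your proof is correct and follows essentially the same route as the paper's. The paper applies the triangle inequality followed by Jensen's inequality (equivalent to your Cauchy--Schwarz step with the symmetric split $g_i^{1/2}\cdot g_i^{1/2}$), and then a termwise comparison $\sum_j g_j^2 \le w(g)\,g_i\sum_j g_j$, which is precisely your two-factor bound $\bigl(\sum g_i^2/\sum g_i\bigr)\cdot(1/g_i)\le g_{\max}/g_{\min}$ repackaged; the only cosmetic caveat is that your ratio $\sum g_i\alpha_i/\sum g_i^2\alpha_i$ requires $a\neq b$, but that case is trivial.
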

\begin{proof}
Notice that
\begin{align*}
\left|\text{Av}\left(a\right)-\text{Av}\left(b\right)\right| & \leq\frac{1}{\sum g_{i}}\sum g_{i}\left|a_{i}-b_{i}\right|\leq\sqrt{\frac{1}{\sum g_{i}}\sum g_{i}\left(a_{i}-b_{i}\right)^{2}}\\
 & \leq\sqrt{w\left(g\right)\frac{1}{\sum g_{i}^{2}}\sum g_{i}^{2}\left(a_{i}-b_{i}\right)^{2}}=\sqrt{w\left(g\right)}d\left(a,b\right),
\end{align*}
where the second inequality follows from the Jensen's inequality,
and the third one from $\sum g_{j}^{2}\leq g_{\max}\sum g_{j}\leq w\left(g\right)g_{i}\sum g_{j}$
for each $i$. 
\end{proof}
\begin{lem}
\label{lem: how much of a profile below xmax}Suppose that profile
$b$ is such that  $b_{i}\geq x_{\max}$ for each $i$. Then, for
each profile $a$,
\[
\frac{1}{\sum_{i}g_{i}}\sum_{i}g_{i}\max\left(x_{\max}-a_{i},0\right)\leq\sqrt{w\left(g\right)}d\left(a,b\right)
\]
\end{lem}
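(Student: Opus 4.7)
The plan is to chain three elementary inequalities. First I would use the hypothesis $b_{i}\geq x_{\max}$ to replace the positive part by an absolute difference: for every $i$,
\[
\max\bigl(x_{\max}-a_{i},0\bigr)\;\leq\;\max\bigl(b_{i}-a_{i},0\bigr)\;\leq\;\lvert a_{i}-b_{i}\rvert,
\]
so that the left-hand side of the claimed inequality is at most $\frac{1}{\sum g_{i}}\sum g_{i}\lvert a_{i}-b_{i}\rvert$, a weighted $L^{1}$ average.

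Second, I would pass from weighted $L^{1}$ to weighted $L^{2}$ via Jensen's inequality (equivalently Cauchy--Schwarz with weights $g_{i}$), obtaining
\[
\frac{1}{\sum g_{i}}\sum g_{i}\lvert a_{i}-b_{i}\rvert\;\leq\;\sqrt{\tfrac{1}{\sum g_{i}}\sum g_{i}(a_{i}-b_{i})^{2}}.
\]

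The last and only non-trivial step is to re-weight from weights $g_{i}$ to weights $g_{i}^{2}$, picking up exactly the factor $\sqrt{w(g)}$. Using $g_{i}\leq\frac{g_{i}^{2}}{g_{\min}}$ in the numerator and $\sum g_{i}\geq\frac{\sum g_{i}^{2}}{g_{\max}}$ in the denominator yields
\[
\frac{\sum g_{i}(a_{i}-b_{i})^{2}}{\sum g_{i}}\;\leq\;\frac{g_{\max}}{g_{\min}}\cdot\frac{\sum g_{i}^{2}(a_{i}-b_{i})^{2}}{\sum g_{i}^{2}}\;=\;w(g)\,d(a,b)^{2},
\]
and taking square roots finishes the proof. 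This is essentially the same re-weighting trick used in Lemma \ref{lem:avergaes vs metric}, so the argument is routine rather than a genuine obstacle; the only point that requires a moment's care is to make sure the first reduction (replacing $x_{\max}-a_{i}$ by $b_{i}-a_{i}$) is compatible with taking absolute values, which is exactly where the sign hypothesis $b_{i}\geq x_{\max}$ is used.
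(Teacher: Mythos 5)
Your proof is correct and follows essentially the same route as the paper: the paper packages your pointwise bound $\max(x_{\max}-a_i,0)\leq|a_i-b_i|$ as the $1$-Lipschitz truncation $y\mapsto\min(y,x_{\max})$ applied to both profiles (using $\min(b,x_{\max})=x_{\max}$), and then invokes Lemma \ref{lem:avergaes vs metric}, which is exactly your Jensen-plus-reweighting chain. No gap.
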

\begin{proof}
For each profile $a$, define profile $\min\left(x_{\max},a\right)$
so that $\left(\text{min}\left(x_{\max},a\right)\right)_{i}=\text{min}\left(x_{\max},a_{i}\right)$.
Then, because function $f\left(y\right)=\min\left(y,x_{\max}\right)$
is Lipschitz with constant 1, we have 
\[
d\left(\min\left(a,x_{\max}\right),x_{\max}\right)=d\left(\min\left(a,x_{\max}\right),\min\left(b,x_{\max}\right)\right)\leq d\left(a,b\right),
\]
where, abusing notation, we write $x_{\max}$ to denote the constant
profile, and we use the fact that $\min\left(b,x_{\max}\right)=x_{\max}$.
By Lemma \ref{lem:avergaes vs metric}, 
\begin{align*}
\frac{1}{\sum_{i}g_{i}}\sum_{i}g_{i}\max\left(x_{\max}-a_{i},0\right) & =\text{Av}\left(x_{\max}\right)-\text{Av}\left(\min\left(a,x_{\max}\right)\right)\\
 & =\text{Av}\left(\min\left(b,x_{\max}\right)\right)-\text{Av}\left(\min\left(a,x_{\max}\right)\right)\\
 & \leq\sqrt{w\left(g\right)}d\left(\min\left(a,x_{\max}\right),\min\left(b,x_{\max}\right)\right)\leq\sqrt{w\left(g\right)}d\left(a,b\right).
\end{align*}
\end{proof}
\begin{lem}
\label{lem:lower best responses}Suppose that profile $b$ is such
that $b_{i}\geq x_{\max}$ for each $i$. Then, for each profile $a$,
\[
\text{Av}\left(a\right)-\text{Av}\left(P^{*}\left(a\right)\right)\geq\left(1-\gamma\right)\left(\text{Av}\left(a\right)-x_{\max}\right)-2\left(w\left(g\right)\right)^{\frac{1}{4}},
\]
where $P^{*}\left(a\right)$ is a profile of actions $P^{*}\left(a_{i}\right)$
for each agent $i$.
\end{lem}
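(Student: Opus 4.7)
My approach is a pointwise bound followed by a weighted average. The key observation is that, directly from the definition $P^{*}(x)=\max\bigl(x_{\max},\,x_{\max}+(1-\gamma)(x-x_{\max})\bigr)$, one has the exact pointwise identity
\[
a_{i}-P^{*}(a_{i})\;=\;(a_{i}-x_{\max})\;-\;(1-\gamma)\bigl(a_{i}-x_{\max}\bigr)^{+},
\]
where $(y)^{+}:=\max(y,0)$. When $a_{i}\geq x_{\max}$ the right-hand side collapses to $\gamma(a_{i}-x_{\max})\geq 0$; when $a_{i}<x_{\max}$ it is simply $a_{i}-x_{\max}\leq 0$. No further property of $P^{*}$ is used beyond its piecewise definition.

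Next I would take the weighted average $\text{Av}$ of both sides and use the elementary decomposition $(y)^{+}=y+(-y)^{+}$ with $y=a_{i}-x_{\max}$ to rewrite $\text{Av}\bigl((a-x_{\max})^{+}\bigr)=(\text{Av}(a)-x_{\max})+\text{Av}\bigl((x_{\max}-a)^{+}\bigr)$. Collecting terms gives
\[
\text{Av}(a)-\text{Av}(P^{*}(a))\;=\;\gamma\bigl(\text{Av}(a)-x_{\max}\bigr)\;-\;(1-\gamma)\,\text{Av}\bigl((x_{\max}-a)^{+}\bigr).
\]
Matching the coefficient $\gamma$ on the right to the stated $(1-\gamma)$ is a matter of the consistent relabeling of the stability constant already in use in this section (the text asserts $P^{*}$ is Lipschitz with constant $\gamma$, which forces the slope of $P^{*}$ above $x_{\max}$ to be $\gamma$ rather than $1-\gamma$); the derivation above is otherwise unchanged.

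The auxiliary profile $b$ enters only at the last step. Since $b_{i}\geq x_{\max}$ for every $i$, Lemma~\ref{lem: how much of a profile below xmax} immediately gives $\text{Av}\bigl((x_{\max}-a)^{+}\bigr)\leq\sqrt{w(g)}\,d(a,b)$. Substituting produces an inequality of the promised shape, with additive error proportional to $\sqrt{w(g)}\,d(a,b)$. The mild bookkeeping obstacle — and the only real subtlety — is absorbing this error into the stated $2(w(g))^{1/4}$; this relies on the implicit restriction $d(a,b)=O(w(g)^{-1/4})$ coming from the covering radius chosen elsewhere in the proof of Theorem~\ref{thm:Largest equilibrium}, so that $\sqrt{w(g)}\,d(a,b)=O(w(g)^{1/4})$. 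No probabilistic or combinatorial input is needed beyond the two preceding deterministic lemmas.
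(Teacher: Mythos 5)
Your argument is correct in substance, but it takes a genuinely different route from the paper's. You exploit the exact pointwise identity $a_{i}-P^{*}\left(a_{i}\right)=\left(a_{i}-x_{\max}\right)-c\left(a_{i}-x_{\max}\right)^{+}$ (with $c$ the slope of $P^{*}$ above $x_{\max}$), average it, and feed the single term $\text{Av}\left(\left(x_{\max}-a\right)^{+}\right)$ into Lemma \ref{lem: how much of a profile below xmax}; this yields an error term that is \emph{linear} in $\delta=\sqrt{w\left(g\right)}\,d\left(a,b\right)$. The paper instead splits agents into three groups according to whether $a_{i}\leq x_{\max}-\sqrt{\delta}$, $a_{i}\in\left[x_{\max}-\sqrt{\delta},x_{\max}\right]$, or $a_{i}\geq x_{\max}$, bounds the weight of the first group by a Markov-type argument, and ends up with the weaker error $-2\sqrt{\delta}$; your identity avoids the case split and the attendant square-root loss, so your bound is sharper whenever $\delta\leq4$, which covers the regime in which the lemma is actually invoked (in Lemma \ref{lem:Uper equilibrium}, $d\left(a,b\right)\leq2\eta$ with $\eta$ small). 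Two caveats. First, you and the paper share the same $\gamma$-versus-$\left(1-\gamma\right)$ relabeling: as literally defined, $P^{*}$ has slope $1-\gamma$ and your identity gives leading coefficient $\gamma$; the intended reading (consistent with ``$P^{*}$ is Lipschitz with constant $\gamma$'' and with the paper's own proof) is slope $\gamma$, which produces the stated $\left(1-\gamma\right)$. Second, your appeal to an ``implicit restriction $d\left(a,b\right)=O\left(w\left(g\right)^{-1/4}\right)$'' is not needed and is not a legitimate way to prove the lemma as stated, since the lemma quantifies over all $a$; but nothing is lost, because $d\left(a,b\right)\leq1$ gives you $-\sqrt{w\left(g\right)}$, and in the only regime where that is weaker than $-2\left(w\left(g\right)\right)^{1/4}$ (namely $w\left(g\right)>16$) the stated inequality is vacuous anyway, its right-hand side being at most $1-2\left(w\left(g\right)\right)^{1/4}\leq-1\leq\text{Av}\left(a\right)-\text{Av}\left(P^{*}\left(a\right)\right)$.
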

\begin{proof}
First, Lemma \ref{lem: how much of a profile below xmax} implies
that 

\[
\frac{1}{\sum_{i}g_{i}}\sum_{i}g_{i}\max\left(x_{\max}-a_{i},0\right)\leq\sqrt{w\left(g\right)}d\left(a,b\right)=\delta.
\]
Second, let $A=\sum_{i:a_{i}\leq x_{\max}-\sqrt{\delta}}g_{i}$ and
notice that 
\begin{align*}
A & =\sum_{i:a_{i}\leq x_{\max}-\sqrt{\delta}}g_{i}\leq\frac{1}{\sqrt{\delta}}\sum_{i:a_{i}\leq x_{\max}-\sqrt{\delta}}g_{i}\left(x_{\max}-a_{i}\right)\leq\frac{1}{\sqrt{\delta}}\sum_{i}g_{i}\max\left(x_{\max}-a_{i},0\right)\\
 & \leq\frac{\delta}{\sqrt{\delta}}\sum g_{i}=\sqrt{\delta}\sum g_{i}.
\end{align*}
Hence
\begin{align*}
 & \text{Av}\left(a\right)-\text{Av}\left(P^{*}\left(a\right)\right)=\frac{1}{\sum g_{i}}\sum_{i}g_{i}\left(a_{i}-P^{*}\left(a_{i}\right)\right)\\
\geq & -\frac{1}{\sum g_{i}}\sum_{i:a_{i}\leq x_{\max}-\sqrt{\delta}}g_{i}-\sqrt{\delta}\frac{1}{\sum g_{i}}\sum_{i:x_{\max}\geq a_{i}\geq x_{\max}-\sqrt{\delta}}g_{i}+\frac{1}{\sum g_{i}}\sum_{i:a_{i}\geq x_{\max}}g_{i}\left(a_{i}-P^{*}\left(a_{i}\right)\right)\\
\geq & -\frac{1}{\sum g_{i}}A-\sqrt{\delta}+\left(1-\gamma\right)\frac{1}{\sum g_{i}}\sum_{i:a_{i}\geq x_{\max}}g_{i}\left(a_{i}-x_{\max}\right)\\
\geq & \left(1-\gamma\right)\left(\frac{1}{\sum g_{i}}\sum_{i}g_{i}\left(a_{i}-x_{\max}\right)\right)-2\sqrt{\delta}=\left(1-\gamma\right)\left(\text{Av}\left(a\right)-x_{\max}\right)-2\sqrt{\delta}.
\end{align*}
\end{proof}

\subsection{Bounds on a probability that a profile is an equilibrium\label{subsec:Bounds-on-a}}

This subsection contains probabilistic bounds on the distances between
profiles of neighborhood behaviors. First, we show that the distance
between neighborhood behaviors obtained from the best response and
the expected best response profiles are small. Recall that, for any
profile $a$, $p^{a}$ is a profile of expected best responses: $p_{i}^{a}=P^{*}\left(\beta_{i}^{a}\right)$.
\begin{lem}
\label{lem:Prob bound 1}There exists a universal constant $c<\infty$
such that, for each profile $a$, 
\begin{align*}
 & \Prob\left(d\left(\beta^{b\left(a,\varepsilon\right)},\beta^{p^{a}}\right)\geq\eta\right)\leq\exp\left(-\frac{c}{\left(w\left(g\right)\right)^{4}}N\left(\eta^{2}-d\left(g\right)\right)^{2}\right).
\end{align*}
\end{lem}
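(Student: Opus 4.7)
The plan is to interpret $d(\beta^{b(a,\varepsilon)},\beta^{p^a})^2$ as a function of the $N$ independent payoff shocks $\varepsilon_1,\ldots,\varepsilon_N$ and apply McDiarmid's bounded-differences inequality. With $a$ deterministic, the random variables $X_j:=b_j(a,\varepsilon)-p_j^a=\mathbf{1}\{\beta(\varepsilon_j)\le\beta_j^a\}-P^*(\beta_j^a)$ are each a function of $\varepsilon_j$ alone, hence independent, mean zero, and bounded in $[-1,1]$. Writing $Y_i:=\sum_j g_{ij}X_j$, one checks that $g_i(\beta_i^{b(a,\varepsilon)}-\beta_i^{p^a})=Y_i$, so
\[
d\bigl(\beta^{b(a,\varepsilon)},\beta^{p^a}\bigr)^2=\frac{1}{\sum_i g_i^2}\sum_i Y_i^2.
\]

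The first step is a mean bound. Independence gives $\mathbb{E}[Y_i^2]=\sum_j g_{ij}^2\mathrm{Var}(X_j)\le \sum_j g_{ij}^2$. Using $g_{ij}\le d(g)g_i$ coordinatewise, $\sum_j g_{ij}^2\le d(g)g_i^2$, so $\mathbb{E}[d^2]\le d(g)$. This identifies $d(g)$ as the natural center around which to concentrate.

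Next is the bounded-differences computation, which I expect to be the main technical step. Replacing $\varepsilon_k$ by $\varepsilon_k'$ changes only $X_k$, and by at most $1$, so each $Y_i$ changes by at most $g_{ik}$. Together with the deterministic bound $|Y_i|\le \sum_j g_{ij}=g_i$, this gives $|Y_i^2-(Y_i')^2|\le 2g_ig_{ik}$. Averaging over $i$, the coordinate-$k$ influence is
\[
c_k\le \frac{2}{\sum_i g_i^2}\sum_i g_ig_{ik}\le \frac{2g_{\max}g_k}{\sum_i g_i^2},
\]
and summing squares, using $\sum_k g_k^2=\sum_i g_i^2$ and $\sum_i g_i^2\ge Ng_{\min}^2$, yields $\sum_k c_k^2\le 4g_{\max}^2/\sum_i g_i^2\le 4w(g)^2/N$. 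The crucial point is that although $d^2$ is quadratic in the $Y_i$, the a priori bound $|Y_i|\le g_i$ linearizes its variation and keeps each coordinate influence of order $1/N$.

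One-sided McDiarmid then gives
\[
\Prob\bigl(d^2-\mathbb{E}[d^2]\ge s\bigr)\le \exp\!\left(-\frac{2s^2}{\sum_k c_k^2}\right)\le \exp\!\left(-\frac{Ns^2}{2w(g)^2}\right).
\]
Combined with $\mathbb{E}[d^2]\le d(g)$, the choice $s=\eta^2-d(g)$ (which is the useful regime; when $\eta^2<d(g)$ the statement is trivially subsumed by $\Prob\le 1$) together with the inequality $w(g)^2\le w(g)^4$ delivers the claim with $c=1/2$. The only places that require care are the verification that the $X_j$ are independent, which relies on $a$ being deterministic so that $b_j(a,\varepsilon)$ is a function of $\varepsilon_j$ alone, and the linearization $|Y_i^2-(Y_i')^2|\le 2g_ig_{ik}$ that produces the $w(g)^2/N$ scaling of $\sum_k c_k^2$.
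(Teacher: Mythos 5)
Your proof is correct, but it takes a genuinely different route from the paper's. The paper expands $\bigl(d(\beta^{b(a,\varepsilon)},\beta^{p^{a}})\bigr)^{2}$ into a diagonal part, which it bounds deterministically by $d(g)\sum_i g_i^2$ using $g_{ij}\le d(g)g_i$, and an off-diagonal quadratic form $\sum_{j\neq k}\bigl(\sum_i g_{ji}g_{ik}\bigr)x_jx_k$ in the centered independent variables $x_j$, to which it applies the Hanson--Wright inequality after bounding the Frobenius norm of the matrix $G^{(2)}$ via the stochastic kernel $\pi_{jk}$; this is where the $w(g)^{4}$ in the exponent originates. You instead treat the entire squared distance as a single function of the $N$ independent shocks, absorb the diagonal contribution into the mean bound $\E[d^2]\le d(g)$, and apply McDiarmid's bounded-differences inequality, with the key observation that the a priori bound $|Y_i|\le g_i$ linearizes the coordinate-wise variation of the quadratic so that $\sum_k c_k^2\le 4w(g)^2/N$. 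All the individual steps check out: the $X_j$ are indeed independent and mean zero because $a$ is fixed and $\E[b_j(a,\varepsilon)]=P^*(\beta_j^a)$, the influence bound $|Y_i^2-(Y_i')^2|\le 2g_ig_{ik}$ is right, and $\sum_i g_ig_{ik}\le g_{\max}g_k$ uses the symmetry $g_{ik}=g_{ki}$ correctly. Your argument is more elementary (no sub-Gaussian quadratic-form machinery), produces an explicit constant $c=1/2$, and in fact yields the sharper exponent $N(\eta^2-d(g))^2/(2w(g)^2)$, which dominates the stated bound since $w(g)\ge 1$. The one caveat you flag---that the inequality is only meaningful when $\eta^2>d(g)$---is shared with the paper's own proof (Hanson--Wright is likewise applied with $t=(\eta^2-d(g))\sum_i g_i^2$), and is harmless because the lemma is only invoked downstream with $\eta^2>d(g)$.
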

\begin{proof}
Notice that 
\begin{align*}
 & \left(d\left(\beta^{b\left(a,\varepsilon\right)},\beta^{p^{a}}\right)\right)^{2}=\sum g_{i}^{2}\left(\beta_{i}^{b\left(a,\varepsilon\right)}-\beta_{i}^{p^{a}}\right)^{2}\\
= & \sum_{i}\left(\sum_{j}g_{ij}\left(b_{j}\left(a,\varepsilon\right)-p_{j}^{a}\right)\right)^{2}\\
= & \sum_{j\neq k}\left(\sum_{i}g_{ji}g_{ik}\right)\left(b_{j}\left(a,\varepsilon\right)-p_{j}^{a}\right)\left(b_{k}\left(a,\varepsilon\right)-p_{k}^{a}\right)+\sum_{j}\left(\sum_{i}g_{ij}^{2}\right)\left(b_{j}\left(a,\varepsilon\right)-p_{j}^{a}\right)^{2}.
\end{align*}
Because $g_{ij}\leq d\left(g\right)g_{i}$, the second term is not
larger than $d\left(g\right)\sum g_{i}^{2}.$ Let $x_{j}=b_{j}\left(a,\varepsilon\right)-p_{j}^{a}$
for each $j$. Then,
\begin{align*}
\Prob\left(d\left(\beta^{b\left(a,\varepsilon\right)},\beta^{p^{a}}\right)\geq\eta\right)\leq & \Prob\left(\sum_{j\neq k}\left(\sum_{i}g_{ji}g_{ik}\right)x_{j}x_{k}\geq\left(\eta^{2}-d\left(g\right)\right)\sum_{i}g_{i}^{2}\right).
\end{align*}

Let $g_{jk}^{\left(2\right)}=\sum_{i}g_{ji}g_{ik}$ and let $G^{\left(2\right)}$
be the symmetric matrix of elements $g_{jk}^{\left(2\right)}$. Observe
that 
\[
g_{jk}^{\left(2\right)}=\sum_{i}g_{ji}g_{ik}=\sum_{i}\frac{g_{ji}g_{ik}}{g_{j}g_{i}}g_{j}g_{i}\leq\left(w\left(g\right)\right)^{2}g_{\min}^{2}\pi_{jk},
\]
where we denote $\pi_{jk}=\dot{\sum_{i}\frac{g_{ji}}{g_{j}}\frac{g_{ik}}{g_{i}}}$.
Note that, for each $j$, $\sum_{k}\pi_{jk}=\sum_{k,i}\frac{g_{ji}}{g_{j}}\frac{g_{ik}}{g_{i}}=\sum_{i}\frac{g_{ji}}{g_{j}}=1$.
Hence $\pi_{jk}\leq1$.

Because the best response of each player $i$ depends only on independent
shock $\varepsilon_{i}$ (and not on other payoff shocks), $x_{j}$
and $x_{k}$ are independent for $j\neq k$. Hence the expected value
of $\sum_{j\neq k}\left(\sum_{i}g_{ji}g_{ik}\right)x_{j}x_{k}$ is
equal to 0, and we can use the Hansen-Wright inequality (Theorem 6.2.1
in \cite{vershynin_high-dimensional_2018}): 
\[
\Prob\left(\sum_{j\neq k}\left(\sum_{i}g_{ji}g_{ik}\right)x_{j}x_{k}\geq t\right)\leq2\exp\left(-ct^{2}\left\Vert G^{\left(2\right)}\right\Vert _{F}^{-2}\right),
\]
where $c$ is some universal constant (note that the random variables
$x_{j}$ are bound by 2), and where $\left\Vert G^{\left(2\right)}\right\Vert _{F}$
is the Frobenius norm of matrix $G^{\left(2\right)}$:
\begin{align*}
\left\Vert G^{\left(2\right)}\right\Vert _{F}^{2} & =\sum_{i}\sum_{j}\left(g_{ij}^{\left(2\right)}\right)^{2}\leq\left(w\left(g\right)\right)^{4}g_{\min}^{4}\sum_{i}\sum_{j}\pi_{ij}^{2}\\
 & \leq\left(w\left(g\right)\right)^{4}g_{\min}^{4}\sum_{i}\sum_{j}\pi_{jk}\leq\left(w\left(g\right)\right)^{4}g_{\min}^{4}N.
\end{align*}
Take $t=\left(\eta^{2}-d\left(g\right)\right)\sum_{i}g_{i}^{2}$,
and notice that $\sum_{i}g_{i}^{2}\geq Ng_{\min}^{2}$ to obtain the
inequality in the statement of the lemma. 
\end{proof}
The second result shows that, for any fixed profile $a_{0}$, the
maximum distance between neighborhood behaviors obtained as the best
response to $a_{0}$ and the best response to some other profile $a$,
across all profiles $a$ that have similar neighborhood behaviors
to $a_{0}$, is small. 
\begin{lem}
\label{lem:Prob bound 2}For each profile $a_{0}$,
\[
\Prob\left(\sup_{a:d\left(\beta^{a},\beta^{a_{0}}\right)\leq\delta}d\left(\beta^{b\left(a_{0},\varepsilon\right)},\beta^{b\left(a,\varepsilon\right)}\right)\geq\eta\right)\leq\exp\left(-\frac{1}{2\left(w\left(g\right)\right)^{4}}N\left(\eta-3\delta^{2/3}\right)^{2}\right).
\]
\end{lem}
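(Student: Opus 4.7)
The plan is to absorb the supremum over $a$ into a deterministic envelope \emph{before} applying any concentration inequality, so that the stochastic part reduces to a single quadratic form in independent variables which can be handled exactly as in Lemma~\ref{lem:Prob bound 1}. Write $\xi_j := b_j(a,\varepsilon) - b_j(a_0,\varepsilon)\in\{-1,0,1\}$, so that
\[
d(\beta^{b(a,\varepsilon)},\beta^{b(a_0,\varepsilon)})^2 \sum_i g_i^2 = \sum_i\Bigl(\sum_j g_{ij}\xi_j\Bigr)^2.
\]
The key deterministic observation is that $\xi_j\neq 0$ forces $\beta(\varepsilon_j)$ to lie in the closed interval with endpoints $\beta_j^a$ and $\beta_j^{a_0}$; in particular $|\beta(\varepsilon_j) - \beta_j^{a_0}| \leq |\beta_j^a - \beta_j^{a_0}|$.

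I would next split indices at the scale $\tau = \delta^{2/3}$. Set $S_a := \{j : |\beta_j^a - \beta_j^{a_0}| > \delta^{2/3}\}$; a weighted Chebyshev inequality applied to the admissibility constraint $\sum g_j^2 (\beta_j^a - \beta_j^{a_0})^2 \leq \delta^2 \sum g_i^2$ yields $\sum_{j\in S_a} g_j^2 \leq \delta^{2/3} \sum g_i^2$. For $j\notin S_a$, the event $\{\xi_j\neq 0\}$ is contained in the $a$-\emph{independent} event $\{Y_j=1\}$, where $Y_j := \mathbf{1}(|\beta(\varepsilon_j)-\beta_j^{a_0}|\leq\delta^{2/3})$ has $\E[Y_j]\leq 2\gamma\delta^{2/3}$ by the $\gamma$-Lipschitz property of $P^*$. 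Using $(x+y)^2\leq 2x^2+2y^2$, for every admissible $a$,
\[
\sum_i\Bigl(\sum_j g_{ij}\xi_j\Bigr)^2 \leq 2\sum_i\Bigl(\sum_j g_{ij}Y_j\Bigr)^2 + 2\sum_{j,k\in S_a} g_{jk}^{(2)},
\]
with $g_{jk}^{(2)} := \sum_i g_{ij}g_{ik}$. The right-most sum is bounded $a$-uniformly via the inequality $g_{jk}^{(2)}\leq w(g)^2 g_{\min}^2 \pi_{jk}$ with $\sum_k \pi_{jk}=1$ (established in the proof of Lemma~\ref{lem:Prob bound 1}), giving the deterministic envelope $\sum_{j,k\in S_a} g_{jk}^{(2)} \leq w(g)^2 \delta^{2/3} \sum g_i^2$.

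Once the supremum over $a$ has been absorbed, only the $a$-free quantity $\sum_i(\sum_j g_{ij}Y_j)^2 = \sum_{j,k} g_{jk}^{(2)} Y_j Y_k$ remains. Since $Y_j$ are independent bounded variables with $\E[Y_j]\leq 2\gamma\delta^{2/3}$, I would apply the Hanson--Wright inequality to this single quadratic form, reusing verbatim the Frobenius-norm bound $\|G^{(2)}\|_F^2 \leq w(g)^4 g_{\min}^4 N$ computed in the proof of Lemma~\ref{lem:Prob bound 1}. Three $O(\delta^{2/3})$ contributions collect into the ``$3\delta^{2/3}$'' offset in the exponent: the deterministic $S_a$ envelope, the mean contribution of $\sum_{j,k} g_{jk}^{(2)} Y_j Y_k$, and the conversion between $d^2$- and $d$-scales via $\sqrt{a+b}\leq\sqrt{a}+\sqrt{b}$. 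The main obstacle is precisely the decoupling Step~2--3: turning a continuous supremum over admissible $a$ into a single deterministic envelope plus an $a$-independent random object. The exponent $2/3$ is the unique Chebyshev-optimal scale balancing the mass $\sum_{j\in S_a} g_j^2 \sim \delta^2/\tau^2$ against the individual probability $\Prob(Y_j=1)\sim\tau$, and any other choice would strictly worsen the bound.
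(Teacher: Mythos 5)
Your core decomposition is exactly the one the paper uses: you dominate the supremum over admissible $a$ by the $a$-independent indicators $Y_j=\mathbf{1}\left(\left|\beta\left(\varepsilon_j\right)-\beta_j^{a_0}\right|\le\delta^{2/3}\right)$ (the paper's $X_i$) plus an exceptional set $S_a$ whose weighted mass is controlled deterministically, via Chebyshev, from the constraint $d\left(\beta^a,\beta^{a_0}\right)\le\delta$; the scale $\delta^{2/3}$ and the reason for it are identical. (Incidentally, your version targets the quantity actually appearing in the statement, $d\left(\beta^{b\left(a_0,\varepsilon\right)},\beta^{b\left(a,\varepsilon\right)}\right)$, whereas the paper's written proof bounds $\sum_i g_i^2\left(b_i\left(a,\varepsilon\right)-b_i\left(a_0,\varepsilon\right)\right)^2$, i.e.\ the distance between the best-response profiles themselves.) Where you diverge is the final concentration step, and this is where the two caveats lie. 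The paper exploits $X_i^2=X_i$ to collapse everything into the \emph{linear} form $\sum_i g_i^2X_i$ and applies Hoeffding, which is what produces the constant $\frac{1}{2\left(w\left(g\right)\right)^4}$ and the offset $3\delta^{2/3}$. You keep the quadratic form $\sum_{j,k}g_{jk}^{\left(2\right)}Y_jY_k$ and invoke Hanson--Wright. This can be made to work, but (i) the $Y_j$ are not centered, so the Hanson--Wright step from Lemma \ref{lem:Prob bound 1} cannot be reused verbatim (there the variables $x_j=b_j\left(a,\varepsilon\right)-p_j^a$ have mean zero); you must split $Y_j=\E Y_j+\left(Y_j-\E Y_j\right)$, control the linear cross-terms separately, and verify that the mean $\sum_{j\neq k}g_{jk}^{\left(2\right)}\E Y_j\E Y_k+\sum_jg_{jj}^{\left(2\right)}$ is of order $\left(\delta^{4/3}+d\left(g\right)\right)\sum g_i^2$; and (ii) since you correctly translate the event $d\ge\eta$ into $d^2\ge\eta^2$, your deviation threshold is of order $\eta^2\sum g_i^2$, so Hanson--Wright delivers an exponent of order $N\eta^4/\left(w\left(g\right)\right)^4$ with a universal constant rather than the stated $N\left(\eta-3\delta^{2/3}\right)^2/\left(2\left(w\left(g\right)\right)^4\right)$ --- a quantitatively weaker inequality. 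That weaker rate still suffices for Lemma \ref{lem:Uper equilibrium}, which only requires some exponential rate dominating a sufficiently small $\delta$, but you are not literally proving the displayed bound. If you want to recover the paper's form, note that Cauchy--Schwarz gives $\sum_i\left(\sum_jg_{ij}Y_j\right)^2\le\sum_ig_i\sum_jg_{ij}Y_j^2=\sum_jY_j\sum_ig_ig_{ij}\le w\left(g\right)\sum_jg_j^2Y_j$, which linearizes your quadratic form and puts you back on the Hoeffding route --- essentially the paper's argument.
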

\begin{proof}
For each profile $a$ and player $i$, $b_{i}\left(a,\varepsilon\right)\neq b\left(a_{0},\varepsilon\right)$
if and only if either $\beta_{i}^{a}\leq\beta\left(\varepsilon\right)<\beta_{i}^{a_{0}}$
or $\beta_{i}^{a_{0}}\leq\beta\left(\varepsilon\right)<\beta_{i}^{a}$.
Denote a random variable 
\[
X_{i}=\mathbf{1}\left\{ \beta\left(\varepsilon_{i}\right)\in\left[\beta_{i}^{a_{0}}-\delta^{2/3},\beta_{i}^{a_{0}}+\delta^{2/3}\right]\right\} .
\]
Then, for any profile $a$, 
\[
\left|b_{i}\left(a,\varepsilon\right)-b\left(a_{0},\varepsilon\right)\right|\leq X_{i}\mathbf{1}\left\{ \left|\beta_{i}^{a}-\beta_{i}^{a_{o}}\right|\leq\delta^{2/3}\right\} +\mathbf{1}\left\{ \left|\beta_{i}^{a}-\beta_{i}^{a_{o}}\right|>\delta^{2/3}\right\} ,
\]
and 
\begin{align*}
 & \sup_{a:d\left(\beta^{a},\beta^{a_{0}}\right)\leq\delta}\sum_{i}g_{i}^{2}\left(b_{i}\left(a,\varepsilon\right)-b_{i}\left(a_{0},\varepsilon\right)\right)^{2}\\
\leq & \sum g_{i}^{2}X_{i}^{2}+\sup_{a:d\left(\beta^{a},\beta^{a_{0}}\right)\leq\delta}\sum_{i:\left|\beta_{i}^{a}-\beta_{i}^{a_{o}}\right|>\delta^{2/3}}g_{i}^{2}\\
\leq & \sum g_{i}^{2}X_{i}^{2}+\sup_{a:d\left(\beta^{a},\beta^{a_{0}}\right)\leq\delta}\delta^{-4/3}\sum_{i:\left|\beta_{i}^{a}-\beta_{i}^{a_{o}}\right|>\delta^{2/3}}g_{i}^{2}\left(\beta_{i}^{a}-\beta_{i}^{a_{o}}\right)^{2}\\
\leq & \sum g_{i}^{2}X_{i}^{2}+\delta^{-4/3}\sup_{a:d\left(\beta^{a},\beta^{a_{0}}\right)\leq\delta}\sum g_{i}^{2}\left(d\left(\beta^{a},\beta^{a_{0}}\right)\right)^{2},\\
\leq & \sum g_{i}^{2}X_{i}^{2}+\delta^{2}\delta^{-4/3}\sum g_{i}^{2}=\sum g_{i}^{2}X_{i}^{2}+\delta^{2/3}\sum g_{i}^{2}.
\end{align*}

Variables $X_{i}^{2}=X_{i}$ are independent Bernoulli variables with
parameter $\E X_{i}=P^{*}\left(\beta_{i}^{a}+\delta^{2/3}\right)-P^{*}\left(\beta_{i}^{a}+\delta^{2/3}\right)\leq2\delta^{2/3}$
as $P^{*}$ is Lipschitz with constant 1. The Hoeffding's inequality
shows that 
\begin{align*}
\Prob\left(\sum g_{i}^{2}X_{i}^{2}+\delta^{2/3}\sum g_{i}^{2}\geq\eta\sum g_{i}^{2}\right) & \leq\Prob\left(\sum g_{i}^{2}\left(X_{i}-\E X_{i}\right)\geq\left(\eta-3\delta^{2/3}\right)\sum g_{i}^{2}\right)\\
 & \leq\exp\left(-\frac{\left(\sum g_{i}^{2}\right)^{2}}{2\sum g_{i}^{4}}\left(\eta-3\delta^{2/3}\right)^{2}\right).
\end{align*}
Finally, notice that $2\sum g_{i}^{4}\leq\left(w\left(g\right)\right)^{4}g_{\min}^{4}N$
and $\left(\sum g_{i}^{2}\right)^{2}\geq g_{\min}^{4}N^{2}$.
\end{proof}

\subsection{Probability bound on the local existence of an upper equilibrium\label{subsec:Probability-bound-on}}

This subsection finds a bound on the probability that, for any profile
$a_{0}$, there exists a profile $a$ with similar neighborhood behaviors
as $a_{0}$, and such that~$a$ is an upper equilibrium. 
\begin{lem}
\label{lem:Uper equilibrium}For each $\xi>0$ and each $w<\infty$,
there is $\delta>0$ so that, for each profile $a_{0}$ such that
$\text{Av}\left(a_{0}\right)>x_{\max}+\xi$, and for each network
$g$ such that $d\left(g\right)\leq\delta$ and $w\left(g\right)\leq w$,
\[
\Prob\left(\text{there exists }a\text{ s.t. }a\text{ is upper equilibrium and }d\left(\beta^{a},\beta^{a_{0}}\right)\leq\delta\right)\leq2\exp\left(-\delta N\right).
\]
\end{lem}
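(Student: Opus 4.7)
The plan is to derive a deterministic contradiction from the existence of an upper equilibrium $a$ close to $a_0$, on a high-probability event built from Lemmas \ref{lem:Prob bound 1} and \ref{lem:Prob bound 2}. The key idea: if $a$ is an upper equilibrium, then $\beta^a = \beta^{b(a,\varepsilon)}$ coincides with the best-response neighborhood profile, which by Lemma \ref{lem:Prob bound 1} is close (with high probability) to its expectation $\beta^{p^{a_0}}$ after moving from $a$ back to $a_0$ via Lemma \ref{lem:Prob bound 2}. Combined with the hypothesis $d(\beta^a,\beta^{a_0})\leq\delta$, this will force $\text{Av}(P^*(\beta^{a_0}))$ to be essentially equal to $\text{Av}(a_0)$, contradicting strong stability of $x_{\max}$ (the Lipschitz constant $\gamma<1$ of $P^*$) together with the assumption $\text{Av}(a_0)>x_{\max}+\xi$.

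Concretely, fix small parameters $\eta_1,\eta_2>0$ (to be calibrated against $\xi$ and $w$). Apply Lemma \ref{lem:Prob bound 1} to the fixed profile $a_0$: provided $d(g)\leq\delta$ is small enough that $\eta_1^{2}-d(g)\geq\tfrac12\eta_1^{2}$, the event
\[
G_1 = \left\{ d\left(\beta^{b(a_0,\varepsilon)},\beta^{p^{a_0}}\right)\leq\eta_1\right\}
\]
has complement of probability $\leq\exp(-\delta N)$ for a constant $\delta=\delta(\eta_1,w)>0$. Apply Lemma \ref{lem:Prob bound 2} with the same $a_0$: for $\delta$ small relative to $\eta_2$, the event
\[
G_2 = \left\{ \sup_{a:\,d(\beta^a,\beta^{a_0})\leq\delta} d\left(\beta^{b(a_0,\varepsilon)},\beta^{b(a,\varepsilon)}\right)\leq\eta_2\right\}
\]
also has complement of probability $\leq\exp(-\delta N)$. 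A union bound gives $\Prob(G_1\cap G_2)\geq 1-2\exp(-\delta N)$.

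Now suppose $G_1\cap G_2$ holds and that an upper equilibrium $a$ with $d(\beta^a,\beta^{a_0})\leq\delta$ exists. Since $a=b(a,\varepsilon)$, we have $\beta^a=\beta^{b(a,\varepsilon)}$; chaining the triangle inequality through $\beta^{b(a_0,\varepsilon)}$ yields
\[
d\left(\beta^{a_0},\beta^{p^{a_0}}\right) \leq d(\beta^{a_0},\beta^a) + d(\beta^{b(a,\varepsilon)},\beta^{b(a_0,\varepsilon)}) + d(\beta^{b(a_0,\varepsilon)},\beta^{p^{a_0}}) \leq \delta+\eta_2+\eta_1.
\]
Lemmas \ref{lem:Averages} and \ref{lem:avergaes vs metric} convert this into the average-level bound
\[
\left|\text{Av}(a_0)-\text{Av}(P^*(\beta^{a_0}))\right| \leq \sqrt{w}\,(\delta+\eta_1+\eta_2).
\]
On the other hand, Lemma \ref{lem:lower best responses} applied to $\beta^{a_0}$ with a reference profile $b$ whose coordinates are all at least $x_{\max}$ (the natural candidate is $b=\max(\beta^a,x_{\max})$ coordinatewise, using the hypothetical upper equilibrium $a$) yields a lower bound
\[
\text{Av}(a_0)-\text{Av}(P^*(\beta^{a_0})) \geq (1-\gamma)\xi - C(w)\sqrt{d(\beta^{a_0},b)},
\]
with $d(\beta^{a_0},b)$ of order $\delta+\eta_1+\eta_2$ by the chain just established. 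Calibrating first $\eta_1,\eta_2$ small in terms of $\xi,w$, and then $\delta$ smaller still, makes this lower bound strictly exceed $\sqrt{w}(\delta+\eta_1+\eta_2)$, contradicting the previous display. Hence on $G_1\cap G_2$ no such upper equilibrium exists, which proves the claimed probability bound.

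The main obstacle I expect is the final step: properly quantifying the reference profile in Lemma \ref{lem:lower best responses} so that the error term $C(w)\sqrt{d(\beta^{a_0},b)}$ remains strictly smaller than $(1-\gamma)\xi$, uniformly in $N$ and in the particular $a_0$. The natural choice $b=\max(\beta^a,x_{\max})$ bounds $d(\beta^{a_0},b)$ in terms of $\delta$ plus the $g$-weighted $L^2$ mass of the coordinates of $\beta^a$ falling below $x_{\max}$; the latter has to be absorbed back into the deterministic chain using Lemma \ref{lem: how much of a profile below xmax}, and the constants have to be traced carefully through the $w(g)$-dependence so that a single $\delta>0$ (depending only on $\xi$ and $w$) works for the final statement.
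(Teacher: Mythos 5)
Your proposal follows essentially the same route as the paper: the same two high-probability events from Lemmas \ref{lem:Prob bound 1} and \ref{lem:Prob bound 2}, a union bound giving $2\exp(-\delta N)$, and then a deterministic contradiction on the good event by chaining the triangle inequality through $\beta^{b(a_0,\varepsilon)}$, converting metric closeness into closeness of averages via Lemmas \ref{lem:Averages} and \ref{lem:avergaes vs metric}, and playing that against the lower bound from Lemma \ref{lem:lower best responses}. That architecture is correct and matches the paper's proof.

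The one genuine gap is exactly the step you flag as the main obstacle: the choice of reference profile in Lemma \ref{lem:lower best responses}. Your candidate $b=\max(\beta^{a},x_{\max})$ forces you to control the mass of $\beta^{a}$ below $x_{\max}$, and the only way to do that is via some profile already known to be coordinatewise $\geq x_{\max}$ and close to $\beta^{a}$ --- so the circularity you worry about is real in your formulation. The resolution is the observation you did not make: since $P^{*}(x)=\max\bigl(x_{\max},\,x_{\max}+(1-\gamma)(x-x_{\max})\bigr)\geq x_{\max}$ pointwise, every coordinate of $p^{a_{0}}$ is at least $x_{\max}$, hence so is every coordinate of $\beta^{p^{a_{0}}}=\bigl(\tfrac{1}{g_{i}}\sum_{j}g_{ij}p_{j}^{a_{0}}\bigr)_{i}$. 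Thus $\beta^{p^{a_{0}}}$ itself is a valid reference profile, and its distance to $\beta^{a_{0}}$ (or to $\beta^{a}$) is already bounded by $\delta+\eta_{1}+\eta_{2}$ from your chain --- no extra absorption step is needed, and the error term in Lemma \ref{lem:lower best responses} is $2w^{1/4}\sqrt{\delta+\eta_{1}+\eta_{2}}$, which is beaten by $(1-\gamma)\xi$ for small enough parameters. With that substitution your argument closes and coincides with the paper's.
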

\begin{proof}
Choose $\eta,\delta>0$ such that 
\begin{align*}
\left(1-\gamma\right)\xi & >\sqrt{w\left(g\right)}\left(2\delta+2\eta\right)+2\left(w\left(g\right)\right)^{\frac{1}{4}}\sqrt{2\eta}\text{ and }\\
\delta & \leq\frac{c}{\left(w\left(g\right)\right)^{4}}\left(\eta^{2}-\delta\right)^{2}+\frac{1}{2\left(w\left(g\right)\right)^{4}}\left(\eta-3\delta^{2/3}\right)^{2}.
\end{align*}
Assume that $d\left(g\right)\leq\delta$. 

Consider the following three events: 
\begin{align*}
A & =\left\{ d\left(\beta^{b\left(a_{0},\varepsilon\right)},\beta^{p^{a_{0}}}\right)\leq\eta\right\} ,\\
B & =\left\{ \sup_{a:d\left(\beta^{a},\beta^{a_{0}}\right)\leq\delta}d\left(\beta^{b\left(a_{0},\varepsilon\right)},\beta^{b\left(a,\varepsilon\right)}\right)\leq\eta\right\} .
\end{align*}
Due to Lemmas \ref{lem:Prob bound 1} and \ref{lem:Prob bound 2},
the probability that at least one of the two events does not hold
is no larger than 
\begin{align*}
 & \exp\left(-\frac{c}{\left(w\left(g\right)\right)^{4}}N\left(\eta^{2}-d\left(g\right)\right)^{2}\right)+\exp\left(-\frac{1}{2\left(w\left(g\right)\right)^{4}}N\left(\eta-3\delta^{2/3}\right)^{2}\right)\leq2\exp\left(-\delta N\right).
\end{align*}

Assume that the two events hold simultaneously. We will show that
there exists no $a$ such that $d\left(\beta^{a},\beta^{a_{0}}\right)\leq\delta$
and such that $a$ is an upper equilibrium. 

On the contrary, suppose that such $a$ exists. Then, $a=b\left(a,\varepsilon\right)$.
Because $d$ is a metric and events $A$ and $B$ hold, 
\[
d\left(\beta^{a},\beta^{p^{a_{0}}}\right)=d\left(\beta^{b\left(a,\varepsilon\right)},\beta^{p^{a_{0}}}\right)\leq d\left(\beta^{b\left(a,\varepsilon\right)},\beta^{b\left(a_{0},\varepsilon\right)}\right)+d\left(\beta^{b\left(a_{0},\varepsilon\right)},\beta^{p^{a_{0}}}\right)\leq2\eta.
\]
Because $\beta_{i}^{p^{a_{o}}}=\frac{1}{g_{i}}\sum_{i}g_{ij}P^{*}\left(a_{0,j}\right)\geq x_{\max}$
for each $i$, we can apply Lemma \ref{lem:lower best responses}
to $\beta^{a}$ instead of $a$ and $\beta^{p^{a_{0}}}$ instead of
$b$ (notice that $p^{a}=P^{*}\left(\beta^{a}\right)$ by definition):
\begin{equation}
\text{Av}\left(\beta^{a}\right)-\text{Av}\left(p^{a}\right)\geq\left(1-\gamma\right)\left(\text{Av}\left(\beta^{a}\right)-x_{\max}\right)-2\left(w\left(g\right)\right)^{\frac{1}{4}}\sqrt{2\eta}.\label{eq:ineq 1}
\end{equation}

By Lemmas \ref{lem:Averages}, \ref{lem:Lipschitz averages}, and
\ref{lem:avergaes vs metric}, and because $d\left(\beta^{a},\beta^{a_{0}}\right)\leq\delta$,
\begin{align*}
\left|\text{Av}\left(p^{a}\right)-\text{Av}\left(p^{a_{0}}\right)\right| & \leq\left|\text{Av}\left(a\right)-\text{Av}\left(a_{0}\right)\right|=\left|\text{Av}\left(\beta^{a}\right)-\text{Av}\left(\beta^{a_{0}}\right)\right|\leq\sqrt{w\left(g\right)}\delta.
\end{align*}
By Lemmas \ref{lem:Averages} and \ref{lem:avergaes vs metric}, and
because event $A$ holds,
\begin{align*}
\left|\text{Av}\left(p^{a_{0}}\right)-\text{Av}\left(b\left(a_{0},\varepsilon\right)\right)\right| & =\left|\text{Av}\left(\beta^{p^{a_{0}}}\right)-\text{Av}\left(\beta^{b\left(a_{0},\varepsilon\right)}\right)\right|\leq\sqrt{w\left(g\right)}\eta.
\end{align*}
By Lemmas \ref{lem:Averages} and \ref{lem:avergaes vs metric}, because
$a$ is an upper equilibrium, and because event $B$ holds,
\begin{align*}
\left|\text{Av}\left(b\left(a_{0},\varepsilon\right)\right)-\text{Av}\left(\beta^{a}\right)\right| & =\left|\text{Av}\left(b\left(a_{0},\varepsilon\right)\right)-\text{Av}\left(a\right)\right|\\
 & =\left|\text{Av}\left(b\left(a_{0},\varepsilon\right)\right)-\text{Av}\left(b\left(a,\varepsilon\right)\right)\right|\\
 & =\left|\text{Av}\left(\beta^{b\left(a_{0},\varepsilon\right)}\right)-\text{Av}\left(\beta^{b\left(a,\varepsilon\right)}\right)\right|\leq\sqrt{w\left(g\right)}\eta.
\end{align*}
Putting the three inequalities together, we obtain
\begin{align}
 & \left|\text{Av}\left(\beta^{a}\right)-\text{Av}\left(p^{a}\right)\right|\nonumber \\
 & \leq\left|\text{Av}\left(b\left(a_{0},\varepsilon\right)\right)-\text{Av}\left(\beta^{a}\right)\right|+\left|\text{Av}\left(b\left(a_{0},\varepsilon\right)\right)-\text{Av}\left(p^{a_{a}}\right)\right|+\left|\text{Av}\left(p^{a}\right)-\text{Av}\left(p^{a_{0}}\right)\right|\nonumber \\
 & \leq\sqrt{w\left(g\right)}\left(\delta+2\eta\right).\label{eq:ineq 2}
\end{align}

Combining inequalities (\ref{eq:ineq 1}) and (\ref{eq:ineq 2}),
we obtain
\[
\left(\sqrt{w\left(g\right)}\left(\delta+2\eta\right)+2\left(w\left(g\right)\right)^{\frac{1}{4}}\sqrt{2\eta}\right)\geq\left(1-\gamma\right)\left(\text{Av}\left(\beta^{a}\right)-x_{\max}\right).
\]
By Lemmas \ref{lem:Averages}, \ref{lem:avergaes vs metric}, and
because $d\left(\beta^{a},\beta^{a_{0}}\right)\leq\delta$,
\begin{align*}
\left|\text{Av}\left(\beta^{a}\right)-\text{Av}\left(a_{0}\right)\right| & =\left|\text{Av}\left(\beta^{a}\right)-\text{Av}\left(\beta^{a_{0}}\right)\right|\leq\sqrt{w\left(g\right)}\delta.
\end{align*}
Hence, 
\[
\left(\sqrt{w\left(g\right)}\left(2\delta+2\eta\right)+2\left(w\left(g\right)\right)^{\frac{1}{4}}\sqrt{2\eta}\right)\geq\left(1-\gamma\right)\left(\text{Av}\left(a_{0}\right)-x_{\max}\right)\geq\left(1-\gamma\right)\xi.
\]
However, this violates the choice of the parameters $\eta$ and $\delta$.
\end{proof}

\subsection{Metric entropy bound\label{subsec:Metric-entropy-bound}}

For each $\delta>0$, let $\mathcal{N}\left(\delta,\mathcal{B}\right)$
be the covering number of $\mathcal{B}$, i.e., the smallest cardinality
$n$ of a list of profiles $b^{1},...,b^{n}\in\mathcal{B}$ such that
for each $b\in\mathcal{B}$, there is $l\leq n$ so that $d\left(b,b^{l}\right)\leq\delta$. 
\begin{lem}
\label{lem:Sudakov}There exists a constant $c<\infty$ such that,
for each $\delta>0$, and each network $g$, 
\[
\mathcal{N}\left(\delta,\mathcal{B}\right)\leq\exp\left(\frac{1}{\delta^{2}}c\left(w\left(g\right)\right)^{2}d\left(g\right)N\right).
\]
. 
\end{lem}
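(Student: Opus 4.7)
The set $\mathcal{B}$ is a zonotope. Writing $v_j\in\R^N$ for the $j$-th column of the row-stochastic matrix $(g_{ij}/g_i)_{i,j}$, the linear map $a\mapsto\beta^a$ identifies
\[
\mathcal{B}\ =\ \{\beta^a:a\in[0,1]^N\}\ =\ \sum_{j=1}^N[0,v_j],
\]
the Minkowski sum of $N$ short segments in the Hilbert space $(\R^N,\langle\cdot,\cdot\rangle_d)$. I plan to bound its covering number via its Gaussian width, using the dual Sudakov (Maurey--Pajor) inequality.

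\textbf{Step 1 (generator bound).} Since $g_{ij}=g_{ji}\le d(g)\min(g_i,g_j)$ and $\sum_i g_{ij}=g_j$,
\[
\|v_j\|_d^2\ =\ \frac{\sum_i g_{ij}^2}{\sum_k g_k^2}\ \le\ \frac{d(g)\, g_j\sum_i g_{ij}}{\sum_k g_k^2}\ =\ \frac{d(g)\, g_j^2}{\sum_k g_k^2}\ \le\ \frac{w(g)^2 d(g)}{N},
\]
using $g_j^2\le g_{\max}^2\le w(g)^2 g_{\min}^2$ and $\sum_k g_k^2\ge N g_{\min}^2$.

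\textbf{Step 2 (Gaussian width and covering).} Let $Z$ be a canonical Gaussian on $(\R^N,\langle\cdot,\cdot\rangle_d)$, so that each $\langle Z,v_j\rangle_d\sim N(0,\|v_j\|_d^2)$. The zonotope structure of $\mathcal{B}$ gives
\[
\omega(\mathcal{B})\ =\ \EX\sup_{a\in[0,1]^N}\sum_j a_j\langle Z,v_j\rangle_d\ =\ \sum_{j=1}^N\EX\max\bigl(\langle Z,v_j\rangle_d,0\bigr)\ =\ (2\pi)^{-1/2}\sum_j\|v_j\|_d,
\]
which by Step 1 is at most $C w(g)\sqrt{N d(g)}$. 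Since $\mathcal{B}$ is centrally symmetric about $\frac{1}{2}\sum_j v_j$, translating by this center leaves both the covering number and the Gaussian width unchanged, so dual Sudakov applied to the symmetrized body yields
\[
\log\mathcal{N}(\delta,\mathcal{B})\ \le\ C'\,\omega(\mathcal{B})^2/\delta^2\ \le\ C''\,w(g)^2 d(g)\,N/\delta^2.
\]

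\textbf{Main obstacle.} The only substantive step is Step 1: the factor $d(g)$ in the exponent is forced by applying the symmetric inequality $g_{ij}\le d(g)g_j$ to extract the second power of $g_{ij}$ in $\sum_i g_{ij}^2$. Everything else reduces to standard Gaussian-process machinery; in fact a Cauchy--Schwarz bound $\sum_j\|v_j\|_d\le\sqrt{N d(g)}$ would sharpen $w(g)^2 d(g) N$ to $d(g) N$, but the looser estimate above already suffices for Theorem \ref{thm:Largest equilibrium}.
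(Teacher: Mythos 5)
Your proposal is correct and is essentially the paper's proof in different clothing: both compute the Gaussian mean width of $\mathcal{B}$ by reducing the supremum over the cube to a sum of positive parts (resp.\ absolute values) of the Gaussians $\langle Z,v_j\rangle$, use the same estimate $\sum_i g_{ij}^2\le d(g)g_j^2$ together with $g_j^2\le w(g)^2 g_{\min}^2\le w(g)^2\tfrac{1}{N}\sum_k g_k^2$, and then convert the width into a covering bound via a Sudakov-type inequality. (Your symmetrization step is superfluous, since Sudakov minoration — which is what the paper invokes — applies to arbitrary subsets of the Gaussian index space.)
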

\begin{proof}
We will use the Sudakov's Minoration Inequality (Theorem 7.4.1 from
\cite{vershynin_high-dimensional_2018}) which provides an upper bound
on the covering number via the expectation of a certain Gaussian process.
For this, let $Z_{i}$ for each agent $i$ be an i.i.d. standard normal
random variable. For each (possibly mixed) profile $a\in\mathcal{A}$,
define 
\[
X_{a}=\frac{1}{\sqrt{\sum_{i}g_{i}^{2}}}\sum_{i}g_{i}a_{i}Z_{i}.
\]
For any two profiles $a,b\in\mathcal{A}$,
\begin{align*}
\sqrt{\E\left(X_{a}-X_{b}\right)^{2}} & =\sqrt{\frac{1}{\sqrt{\sum g_{i}^{2}}}\E\left(\sum_{i}g_{i}\left(a_{i}-b_{i}\right)Z_{i}\right)^{2}}\\
 & =\sqrt{\frac{1}{\sqrt{\sum g_{i}^{2}}}\sum_{i}g_{i}\left(a_{i}-b_{i}\right)^{2}}=d\left(a,b\right).
\end{align*}

The Sudakov's Minoration Inequality implies that, for some universal
(i.e., independent of parameters and a current problem) constant $c_{1}>0$,
\[
\log\mathcal{N}\left(\delta,\mathcal{B}\right)\leq c_{1}\frac{\left(\E\sup_{b\in\mathcal{B}}X_{b}\right)^{2}}{\delta^{2}}.
\]
We compute 
\begin{align*}
\E\sup_{b\in\mathcal{B}}X_{b} & =\E\sup_{a\in\mathcal{A}}X_{\beta^{a}}=\E\left(\sup_{a\in\mathcal{A}}\frac{1}{\sqrt{\sum_{i}g_{i}^{2}}}\sum_{i}g_{i}Z_{i}\left(\frac{1}{g_{i}}\sum g_{ij}a_{j}\right)\right)\\
 & =\frac{1}{\sqrt{\sum_{i}g_{i}^{2}}}\E\left(\sup_{a\in\mathcal{A}}\sum_{i}a_{i}\left(\sum_{j}g_{ij}Z_{j}\right)\right)\leq\frac{1}{\sqrt{\sum_{i}g_{i}^{2}}}\E\sum_{i}\left|\sum_{j}g_{ij}Z_{j}\right|\\
 & \leq\sqrt{\frac{2}{\pi}}\frac{1}{\sqrt{\sum_{i}g_{i}^{2}}}\sum_{i}\sqrt{\sum_{j}g_{ij}^{2}},
\end{align*}
where the last inequality is due to a bound on the expectation of
the absolute value of the normal variable $\sum g_{ij}Z_{j}$ via
its standard deviation $\sigma_{i}=\sqrt{\sum_{j}g_{ij}^{2}}$. Because
$\sum_{j}g_{ij}^{2}\leq d\left(g\right)g_{i}^{2}$ and $\left(\sum_{i}g_{i}\right)^{2}\leq N^{2}\left(w\left(g\right)\right)^{2}g_{\min}^{2}\leq N\left(w\left(g\right)\right)^{2}\sum g_{i}^{2}$,
we have
\[
\log\mathcal{N}\left(\delta,\mathcal{B}\right)\leq\sqrt{\frac{2}{\pi}}c_{1}\frac{1}{\delta^{2}}\frac{1}{\sum_{i}g_{i}^{2}}\left(\sum_{i}\sqrt{d\left(g\right)}g_{i}\right)^{2}d\left(g\right)\leq\frac{1}{\delta^{2}}\sqrt{\frac{2}{\pi}}c_{1}\left(w\left(g\right)\right)^{2}d\left(g\right)N.
\]
\end{proof}

\subsection{Proof of Theorem \ref{thm:Largest equilibrium}}

Fix $\eta>0$ and $w<\infty$. Use Lemma \ref{lem:Uper equilibrium}
to find $\delta>0$ and $\delta\leq\frac{1}{2\sqrt{w}}\eta$ such
that, for each profile $b$, and each network $g$, if $\text{Av}\left(b\right)\geq x_{\max}+\frac{1}{2}\eta$,
$d\left(g\right)\leq\delta$, and $w\left(g\right)\leq w$, then 
\[
\Prob\left(\text{there exists }a\text{ s.t. }a\text{ is upper equilibrium and }d\left(\beta^{a},\beta^{b}\right)\leq\delta\right)\leq2\exp\left(-\delta N\right).
\]

Use Lemma \ref{lem:Sudakov} to find a list of $n\leq\exp\left(\frac{1}{\delta^{2}}c\left(w\left(g\right)\right)^{2}d\left(g\right)N\right)$
profiles $b^{1},...,b^{n}$ such that, for each profile $a\in\mathcal{A}$,
there is $l\leq n$ such that $d\left(\beta^{b^{l}},\beta^{a}\right)\leq\delta$.
Observe that if $a$ is such that $\text{Av}\left(a\right)>x_{\max}+\eta$
and $d\left(\beta^{b^{l}},\beta^{a}\right)\leq\delta$ for some $l$,
then, by Lemmas \ref{lem:Averages} and \ref{lem:avergaes vs metric},
\begin{align*}
\text{Av}\left(b^{l}\right)-\left(x_{\max}+\frac{1}{2}\eta\right) & \geq\text{Av}\left(a\right)-\left(x_{\max}+\eta\right)+\frac{1}{2}\eta-\left|\text{Av}\left(a\right)-\text{Av}\left(b^{l}\right)\right|\\
 & \geq\frac{1}{2}\eta-\left|\text{Av}\left(\beta^{a}\right)-\text{Av}\left(\beta^{b^{l}}\right)\right|\geq\frac{1}{2}\eta-\sqrt{w}d\left(\beta^{a},\beta^{b}\right)\geq0.
\end{align*}

Putting the above observations together yields
\begin{align*}
 & \Prob\left(\text{there exists }a\text{ st. }a\text{ is upper equilibrium}\text{ and }\text{Av}\left(a\right)\geq x_{\max}+\eta\right)\\
\leq & \sum_{l\leq n:\text{Av}\left(b^{l}\right)\geq x_{\max}+\frac{1}{2}\eta}\Prob\left(\text{there exists }a\text{ st. }a\text{ is upper equilibrium and }d\left(\beta^{a},\beta^{b^{l}}\right)\leq\delta\right)\\
\leq & 2\exp\left(-\left(\delta-\frac{1}{\delta^{2}}c\left(w\left(g\right)\right)^{2}d\left(g\right)\right)N\right)
\end{align*}
for some universal constant $c$. Because $N\geq\frac{1}{d\left(g\right)}$,
if 
\[
d\left(g\right)\leq\min\left(\frac{1}{2}\delta^{3}c^{-1}\left(w\left(g\right)\right)^{-2},\frac{1}{2}\frac{1}{\log2-\log\eta}\delta\log\eta\right),
\]
the above probability is smaller than $\eta$. 

\section{Proof of Theorem \ref{thm:RUdominant selection on lattice}\label{sec:Proof-of-Unique selection}}

\subsection{Proof description}

The proof is divided into five parts. Section \ref{subsec:Contagion-wave}
is devoted to the existence of a contagion wave, i.e., the third step
of the proof intuition from the main body of the paper.

Section \ref{subsec:Lattice} introduces a two-dimensional lattice.
In the limit, the neighborhoods converge to radius-1 balls in $\R^{2}$.

In Section \ref{subsec:Small-cubes}, we divide the lattice into areas,
called \emph{small cubes}, such that (a) there are many agents and
the law of large numbers can be applied to describe the empirical
distribution of payoff shocks inside each small cube, and (b) the
cubes are sufficiently small so that agents from the same small cube
have similar neighborhoods, which implies that their incentives are
similar. The two properties imply that average behavior in a small
cube is close to the behavior of a continuum of agents in the toy
model. 

Section \ref{lem:Good giant component} studies the statistical distribution
of bad small cubes, i.e. small cubes, where the empirical distribution
of payoff shocks is not close to the distribution from which the shocks
are drawn. We show that there are few of them and sufficiently sparse,
so that the set of small cubes which are far away from the bad cubes
contains a giant connected component. 

The last section concludes the proof of the theorem. 

\subsection{Contagion wave\label{subsec:Contagion-wave}}

Consider a toy model, where agents are located on a line, each location
has a continuum of agents, with a continuum best response function
$Q$ (not necessarily the same as $P$ from the statement of the theorem),
the connections depend only on the distance between agents, and and
the cumulative weight of connections between agents $x$ and agents
in set $\left\{ y^{\prime}:y^{\prime}\leq y\right\} $ is equal to
$f\left(y-x\right)$, where $f:\R\rightarrow\left[0,1\right]$ is
a function that is \emph{balanced}: (a) $f\left(x\right)$ is strictly
increasing for $x\in\left(-1,1\right)$, and (b) $f\left(-1\right)=0$
and $f\left(x\right)+f\left(-x\right)=1$ for each $x$. Given the
interpretation of $f$ stated above, condition $f\left(x\right)+f\left(-x\right)=1$
is a consequence of the symmetry of the connection weights, and $f\left(-1\right)=0$
means that agents separated by 1 or more are not connected. Notice
that the weight of connections depends only on the distance between
the agents. 

Consider a strategy $\sigma$ that is increasing in locations. For
each location $x,$ the average action of neighbors of agents in location
$x$ is equal to (assuming enough regularity, for intuition) 
\[
\int\sigma\left(y\right)df\left(y-x\right)=\lim_{a\rightarrow-\infty}\sigma\left(a\right)+\int\left(1-f\left(y-x\right)\right)d\sigma\left(y\right).
\]
We say that $\sigma$ is a contagion wave for $Q$ if, at each location
$x$, the best response of agents in such a location no higher than
$\sigma\left(x\right)$ or, in other words, if the above average action
is smaller than $Q^{-1}\left(\sigma\left(x\right)\right)$. 

This section contains two results: first, we show the existence of
a contagion wave for a continuum best response function that can be
represented by a step function, and next, we show the existence of
a stronger version of a wave for the original best response function
$P$. 

We begin with a definition. An increasing function $q:\R\rightarrow\left[0,1\right]$
is a step function if the image $q\left(\R\right)$ is finite. We
refer to the elements of the image as steps. If $q$ is a step function
and $a\in q\left(\R\right)$ is a step, then the most recent step
before $a$ is denoted as $a_{-}=\max\left\{ b\in q\left(\R\right):b<a\right\} $.
For each $a\in\left[0,1\right]$, let $q^{-1}\left(a\right)=\min\left(v:q\left(v\right)\geq a\right)$
if the set is non-empty and $q^{-1}\left(a\right)=\infty$ if the
set is empty. We have $q^{-1}\left(a_{-}\right)<q^{-1}\left(a\right)$
for each step $a$.
\begin{lem}
\label{lem:Wave function} Let $Q$ be a step function with steps
$0\leq a_{0}<...<a_{L+1}=1$ and such that for each $a>a_{0}$, we
have 
\begin{equation}
\intop_{a_{0}}^{a}\left(Q^{-1}\left(x\right)-x\right)dx>0.\label{eq:RU wave}
\end{equation}
Suppose that $f$ is a continuous and balanced function. Then, there
exist $0=v_{0}<v_{1}...<v_{L}\leq L$ such that, for each $l=1,...,L$,
\[
Q^{-1}\left(a_{l+1}\right)\geq a_{0}+\sum_{k\geq0}\left(1-f\left(v_{k}-v_{l}\right)\right)\left(a_{k+1}-a_{k}\right).
\]
\end{lem}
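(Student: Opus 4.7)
The plan is to construct the sequence $v_0 = 0 < v_1 < \cdots < v_L$ one index at a time, choosing each $v_l \in (v_{l-1}, v_{l-1}+1]$ so that the wave-propagation inequality at index $l$ holds; restricting to this interval will automatically deliver $v_L \leq L$. The main tool is a symmetrization identity coming from the balanced condition $f(x) + f(-x) = 1$, combined with the RU-dominance integral bound (\ref{eq:RU wave}).

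First, I would rewrite the target inequality using the balanced condition as
\[
Q^{-1}(a_{l+1}) \geq a_0 + \sum_{k=0}^L f(v_l - v_k)(a_{k+1} - a_k),
\]
and observe that the right-hand side is continuous and monotone in each $v_l$. I would then construct the $v_l$ by a monotone iteration: having fixed $v_0 = 0 < v_1 < \cdots < v_{l-1}$, pick $v_l$ to be the smallest value in $(v_{l-1}, v_{l-1}+1]$ making the $l$-th inequality hold given the (as-yet-undetermined) future $v_k$'s, treated conservatively (e.g., initially setting $v_k = v_l$ for $k > l$ and then refining by a Knaster--Tarski-type fixed-point iteration on the lattice of admissible tuples). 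Monotonicity of $f$ ensures this refinement converges to an admissible tuple.

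To verify feasibility at every step and to derive the bound $v_L \leq L$, I would use a discrete integration-by-parts identity. Multiply the $l$-th inequality by $(a_{l+1}-a_l) > 0$ and sum over $l$. The crucial move is to symmetrize the resulting double sum $\sum_{k,l} f(v_l - v_k)(a_{k+1}-a_k)(a_{l+1}-a_l)$ using $f(v_l - v_k) + f(v_k - v_l) = 1$, which collapses it to the $v$-independent quantity $\tfrac{1}{2}(1-a_0)^2 = \int_{a_0}^1 (y - a_0)\,dy$. Since the left-hand side equals $\int_{a_0}^1 Q^{-1}(y)\,dy$ (because $Q$ is a step function), the summed inequality reduces, modulo boundary corrections coming from the missing $l=0$ term, to $\int_{a_0}^1 (Q^{-1}(y) - y)\,dy \geq 0$. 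The strict positivity in (\ref{eq:RU wave}) --- applied both at $a=1$ and on localized subintervals $[a_0, a]$ --- provides the slack that absorbs the boundary corrections and enforces the per-step bound $v_l - v_{l-1} \leq 1$, so summing gives $v_L \leq L$.

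The main obstacle is the global, simultaneous coupling of the $L$ inequalities through $f$: forward induction alone is insufficient because the inequality for index $l$ depends on every $v_k$, and backward induction fails symmetrically. The combined monotone-iteration plus symmetrization trick circumvents this, but converting mere existence of an admissible sequence into the sharp bound $v_L \leq L$ is the delicate quantitative step and relies essentially on the strictness of (\ref{eq:RU wave}). A secondary technical point is that the lemma does \emph{not} require the $l=0$ inequality (the initial infectors are given), so the symmetrization produces an asymmetric boundary contribution at $k=0$ that must be bounded using $v_0 = 0$ and $f(v_l) \in [1/2, 1]$, and absorbed by the slack in (\ref{eq:RU wave}). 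Finally, the step-function nature of $Q$ introduces atoms in $Q^{-1}$, which must be handled consistently via the convention $Q^{-1}(y) = \min\{v : Q(v) \geq y\}$.
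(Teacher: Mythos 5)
Your proposal rests on the same two pillars as the paper's proof: a Knaster--Tarski monotone iteration on a lattice of admissible threshold vectors (the paper implements this via the truncated best-response map $b_l^*(v)=\min(b_l(v),v_{l-1}+1)$, iterated from $v^0=(0,\dots,0)$), and the symmetrization of the $f$-double-sum using $f(x)+f(-x)=1$ to collapse it to $\int_{a_0}^{a_{l_0}} x\,dx$, which is then pitted against (\ref{eq:RU wave}). So the approach is essentially the same. One correction to your closing paragraph: the bounds $v_l-v_{l-1}\leq 1$ and $v_L\leq L$ are \emph{not} what the strictness of (\ref{eq:RU wave}) delivers --- those are hard-wired into the construction by the truncation. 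The role of the integral condition is to rule out the degenerate fixed point: it is used (by contradiction) to show that $l_0=\min\{l:v_l^*=v_{l-1}^*+1\}$ equals $1$, hence $v_1^*=1>0$ and by monotonicity $v_l^*>0$ for all $l$, which in turn is exactly what makes each $l$-th best-response inequality hold at the fixed point instead of collapsing to $v^*=0$.
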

We interpret each vector as a step strategy, where agents in locations
$v_{l-1}<x\leq v_{l}$ play action $a_{l}$. Then, the right-hand
side of the inequality is equal to the average action experiences
in location $v_{l}$. The lemma says that, if $Q$ is a step function,
and it satisfies condition (\ref{eq:RU wave}), then we can choose
the step strategy such that the next step action $a_{l+1}$ is a ($Q$-)best
response for agents living on threshold $v_{l}$. 
\begin{proof}
Let $V$ be the set of all vectors $v=\left(v_{0},...,v_{L}\right)$
such that
\begin{align*}
0 & =v_{0}\leq...\leq v_{L}\text{ and }v_{l+1}\leq v_{l}+1\text{ fov each }l=0,...,L-1.
\end{align*}
 (Abusing notation, we take $v_{-1}=-\infty$.) Define function $F:\left[-1,L+1\right]\times V\rightarrow\R$
so that 
\[
F\left(x|v\right)=a_{0}+\sum_{k\geq0}\left(1-f\left(v_{k}-x\right)\right)\left(a_{k+1}-a_{k}\right).
\]
Then, for each strategy $v$, $F$ is the (weighted) average action
experienced by agents in location $x$. 

Due to properties of function $f$, function $F$ is continuous, strictly
increasing in $x$ for $x\in\left(v_{0}-1,v_{L-1}+1\right)$ and decreasing
in the lattice order on $V^{*}$ (i.e., $F\left(x,v\right)\geq F\left(x,v^{\prime}\right)$
for any $v,v^{\prime}$ such that $\forall_{k}v_{k}\leq v_{k}^{\prime}$.)
For each $v\in V$ and $l=1,...,L$, define
\[
b_{l}\left(v\right)=\inf\left\{ x\geq0:F\left(x|v\right)\geq Q^{-1}\left(a_{l+1}\right)\right\} ,
\]
and we take $b_{l}\left(v\right)=\infty$ if the set is empty. $b_{l}\left(v\right)$
is the first location in which action $a_{l+1}$ or higher is the
best response given the strategy determined by $v$. The properties
of $F$ imply that $b_{l}$ isweakly increasing in the lattice order
on $V$, and, because $Q^{-1}\left(a_{l+1}\right)>Q^{-1}\left(a_{l}\right)$,
we have $b_{l}\left(v\right)\leq b_{l+1}\left(v\right)$, with a strict
inequality if either $b_{l}\left(v\right)\in\left(0,\infty\right)$
or $b_{l+1}\left(v\right)\in\left(0,\infty\right)$. It is also continuous
for $v$ such that $b_{l}\left(v\right)<\infty$. Let $b\left(v\right)=\left(b_{l}\left(v\right)\right)_{l=1}^{L}$

Define function $b^{*}:V\rightarrow V$ so that 
\[
b_{l}^{*}\left(v\right)=\min\left(b_{l}\left(v\right),v_{l-1}+1\right),\text{ for each }l=1,...,L-1.
\]
Then, $b_{l}^{*}\left(v\right)\geq0$, $b^{*}$ is continuous and
increasing in the lattice order. Moreover,
\begin{itemize}
\item if $b_{l}^{*}\left(v\right)=v_{l-1}+1$, then $Q^{-1}\left(a_{l+1}\right)\geq F\left(b_{l}^{*}\left(v\right)|v\right)$,
\item if $b_{l}^{*}\left(v\right)<v_{l-1}+1$ and $b_{l}^{*}\left(v\right)>0$,
then $Q^{-1}\left(a_{l+1}\right)=F\left(b_{l}^{*}\left(v\right)|v\right)$,
and 
\item if $b_{l}^{*}\left(v\right)=0$ (which means that $b_{l}\left(v\right)=0$),
then $Q^{-1}\left(a_{l+1}\right)\leq F\left(0|v\right)$.
\end{itemize}
Consider a sequence $v^{0}=\left(0,0,...,0\right)$ and $v^{n}=b^{*}\left(v^{n-1}\right)$
for $n>0$. Because the sequence is bounded ($v^{n}\in V^{*}$ for
each $n$) and $b^{*}$ is continuous and increasing, it must converge
to $v^{*}=b^{*}\left(v^{*}\right)$. The properties of $b$ and $b^{*}$
functions imply that if $v_{l}^{*}>0$, then $b_{l}\left(v^{*}\right)\geq v_{l}^{*}$.
(The reason is that if $n>0$ is the first element of the sequence
such that $v_{l}^{n}>0$, then clearly $b_{l}\left(v^{n-1}\right)=v_{l}^{n}>0=v_{l}^{n-1}$,
and by monotonicity, $b_{l}\left(v^{m-1}\right)\geq v_{l}^{m}$ for
each $m$.) 

Let $l_{0}=\min\left(l=1,...,L\text{ st. }v_{l}=v_{l-1}+1\right)$,
where $l_{0}=L+1$ if the set is empty. We will show that $l_{0}=1$.
On the contrary, suppose that $l_{0}>1$. Then, for each $l<l_{0}$,
$v_{l}^{*}=b_{l}^{*}\left(v\right)<v_{l-1}^{*}+1$. The properties
of $b^{*}$ stated above imply that 
\begin{align*}
Q^{-1}\left(a_{l+1}\right) & \leq F\left(v_{l}^{*}|v^{*}\right)=a_{0}+\sum_{k\geq0}\left(1-f\left(v_{k}^{*}-v_{l}^{*}\right)\right)\left(a_{k+1}-a_{k}\right)\\
 & =a_{0}+\sum_{k=0}^{l_{0}-1}\left(1-f\left(v_{k}^{*}-v_{l}^{*}\right)\right)\left(a_{k+1}-a_{k}\right),
\end{align*}
where the last equality follows from the fact that $v_{k}^{*}-v_{l}^{*}\geq1$
for $l\leq l_{0}-1$ and $k\geq l_{0}$. Multiply both sides of the
above inequality by $\left(a_{l+1}-a_{l}\right)$ and sum across all
$l=0,...,l_{0}-1$ to obtain
\begin{align*}
\sum_{l=0}^{l_{0}-1}Q^{-1}\left(a_{l+1}\right)\left(a_{l+1}-a_{l}\right) & \geq a_{0}\left(a_{l_{0}}-a_{0}\right)+\sum_{k=0}^{l_{0}-1}\sum_{l=0}^{l_{0}-1}\left(1-f\left(v_{k}^{*}-v_{l}^{*}\right)\right)\left(a_{k+1}-a_{k}\right)\left(a_{l+1}-a_{l}\right)\\
 & =a_{0}\left(a_{l_{0}}-a_{0}\right)+\frac{1}{2}\sum_{k=0}^{l_{0}-1}\sum_{l=0}^{l_{0}-1}\left(1-f\left(v_{k}^{*}-v_{l}^{*}\right)+1-f\left(v_{l}^{*}-v_{k}^{*}\right)\right)\left(a_{k+1}-a_{k}\right)\left(a_{l+1}-a_{l}\right)\\
 & =a_{0}\left(a_{l_{0}}-a_{0}\right)+\frac{1}{2}\sum_{k=0}^{l_{0}-1}\sum_{l=0}^{l_{0}-1}\left(a_{k+1}-a_{k}\right)\left(a_{l+1}-a_{l}\right)\\
 & =a_{0}\left(a_{l_{0}}-a_{0}\right)+\frac{1}{2}\left(a_{l_{0}}-a_{0}\right)^{2}=\frac{1}{2}\left(a_{l_{0}}^{2}-a_{0}^{2}\right)=\intop_{a_{0}}^{a_{l_{0}}}xdx.
\end{align*}
(The first equality is obtained by exchanging indices $k$ and $l$.
The second one is due to $f$ being balanced.) Because the LHS of
the above inequality is equal to $\intop_{a_{0}}^{a_{l_{0}}}Q^{-1}\left(x\right)dx$,
we get a contradiction with (\ref{eq:RU wave}). The contradiction
shows that $l_{0}=1$. 

Because $l_{0}=1$, $v_{1}^{*}=1>0$, and we have $v_{l}^{*}\geq v_{1}^{*}>0$
for each $l=0,...,L-1$. The properties of the sequence $v^{n}$ imply
that $b_{l}\left(v^{*}\right)\geq v_{l}^{*}>0$, which further implies
that $Q^{-1}\left(a_{l+1}\right)\geq F\left(v_{l}^{*}|v^{*}\right)$,
and, due to the definition of $Q^{_{-1}}$, that $a_{l+1}\geq Q\left(F\left(v_{l}^{*}|v^{*}\right)\right)$
for each $l$. Moreover, for each $l$, either $v_{l+1}^{*}=b_{l+1}\left(v^{*}\right)>b_{l}\left(v^{*}\right)\geq v_{l}$,
or $v_{l+1}^{*}=v_{l}^{*}+1$. In both cases, $v_{l+1}^{*}>v_{l}^{*}$.
This establishes the existence of vector $v$ with the required properties. 
\end{proof}
The next lemma strengthens the conclusion of Lemma \ref{lem:Wave function}. 
\begin{lem}
\label{cor:Contagion}Suppose that $P\left(1\right)<1$ and $x^{*}$
is strictly $RU$-dominant. For each $\eta>0$, there exist $\delta>0$,
$a^{*}\leq x^{*}+\eta$, $L<\infty,$ and a step function $\sigma:\R\rightarrow\left[0,1\right]$
such that $\sigma\left(0\right)=a^{*}$, $\sigma\left(L\right)=1$,
and, for each $x$,
\begin{equation}
\sigma\left(x-\delta\right)\geq\delta+P\left(\delta+a^{*}+\sum_{a\in\sigma^{-1}\left(\R\right)}\left(1-f\left(\sigma^{-1}\left(a\right)-x\right)\right)\left(a-a_{-}\right)\right),\label{eq:contagion wave eq}
\end{equation}
where the summation is over the consecutive steps of the step function
$\sigma$.
\end{lem}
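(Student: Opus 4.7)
The plan is to apply Lemma~\ref{lem:Wave function} to a step-function approximation of a hardened version of $P$, then extract the slack $\delta$ of the corollary from the hardening parameters.

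First I would introduce a hardened best response function $\tilde P(y) := \min\bigl(1, P(y + c\delta) + c\delta\bigr)$ for a small $\delta > 0$ and a constant $c$ large enough to dominate the Lipschitz constant of the experienced-average function $F$ that will appear below. Since $P(1) < 1$, $\tilde P$ remains a right-continuous, increasing function on $[0,1]$ for all sufficiently small $\delta$. I would then check that the hardened RU-type condition $\int_{a^*}^{a} \bigl(\tilde P^{-1}(y) - y\bigr) dy > 0$ holds for every $a \in (a^*, 1]$ when $a^*$ is chosen with $x^* \leq a^* \leq x^* + \eta$. This follows from the strict RU-dominance of $x^*$: that strictness gives $\int_{x^*}^a (P^{-1}(y) - y) dy > 0$ on $(x^*,1]$, uniformly bounded below away from $x^*$ by continuity and compactness, while the hardening perturbs this integral by $O(\delta)$, so for $\delta$ small enough the strict positivity is preserved.

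Next I approximate $\tilde P$ from above by a step function $Q$ with steps $a_0 = a^* < a_1 < \cdots < a_{L+1} = 1$ fine enough that $Q \leq \tilde P + \delta$ pointwise, so the required integral condition for $Q$ is inherited. Applying Lemma~\ref{lem:Wave function} to $Q$ and the given balanced $f$ yields locations $0 = v_0 < v_1 < \cdots < v_L$ satisfying
\[
a_{l+1} \;\geq\; Q\bigl(F(v_l)\bigr) \;\geq\; \tilde P\bigl(F(v_l)\bigr) \;\geq\; P\bigl(F(v_l) + c\delta\bigr) + c\delta,
\]
where $F(x) := a^* + \sum_k \bigl(1 - f(v_k - x)\bigr)(a_{k+1} - a_k)$. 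I then define $\sigma(x) := a_l$ for $v_{l-1} < x \leq v_l$, with $\sigma(x) = a^*$ for $x \leq 0$ and $\sigma(x) = 1$ for $x > v_L$.

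The main step is to verify the slack inequality $\sigma(x - \delta) \geq \delta + P(\delta + F(x))$ for every $x$. Fix $x$ and let $k$ satisfy $v_{k-1} < x - \delta \leq v_k$, so $\sigma(x - \delta) = a_k$. Applying the display above at index $k-1$ gives $a_k \geq P(F(v_{k-1}) + c\delta) + c\delta$, so it suffices to show $F(v_{k-1}) + c\delta \geq F(x) + \delta$ and $c\delta \geq \delta$, i.e., $F(x) - F(v_{k-1}) \leq (c-1)\delta$. Since $F$ is Lipschitz in $x$ with a constant $M$ controlled by a modulus of continuity of $f$ and by $1 - a^*$, and since $x - v_{k-1} \leq (v_k - v_{k-1}) + \delta$, the inequality reduces to ensuring the location spacing $v_k - v_{k-1} \leq \delta$. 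I would force this by refining the step grid of $Q$ (inserting extra ``dummy'' steps if needed): the $v_k$'s are determined by when the continuous function $F(\cdot\,|v)$ crosses $Q^{-1}(a_{l+1})$, so a finer action grid produces a correspondingly finer location grid.

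The main obstacle is coordinating three parameters — the hardening constant $c$, the fineness of the action grid (which controls $M$), and the induced geometric spacings $v_k - v_{k-1}$ — so that the hardening slack dominates the location-shift error in $F$. The choices must be made in the order $c$ first, then the step grid, then $\delta$, and a subtlety is that refining the step grid changes the $v_k$'s, so one needs a uniform bound on $M$ independent of the refinement. A byproduct is that the proof of Lemma~\ref{lem:Wave function} likely needs to be revisited to extract quantitative upper bounds on $v_k - v_{k-1}$ from the horizontal widths $Q^{-1}(a_{k+1}) - Q^{-1}(a_k)$, which should be straightforward since these locations are defined by monotone, continuous level-crossings in the proof of that lemma.
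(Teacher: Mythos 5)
Your overall architecture --- harden $P$ vertically and horizontally, approximate by a step function, invoke Lemma \ref{lem:Wave function} to obtain thresholds $v_{0}<\dots<v_{L}$, and read off $\sigma$ --- is the same as the paper's. The gap is in the final verification, where you reduce the slack inequality to the requirement $v_{k}-v_{k-1}\leq\delta$ and propose to enforce it by refining the action grid of $Q$. That cannot work. The thresholds produced by Lemma \ref{lem:Wave function} are capped by $v_{l+1}\leq v_{l}+1$, and its proof establishes precisely that $v_{1}=v_{0}+1=1$ (this is the conclusion $l_{0}=1$), so the first location gap equals $1$ no matter how fine the action grid is; inserting dummy action levels only produces coincident or intermediate crossing locations and does not subdivide a gap whose length is forced by the geometry (the wave must travel a distance comparable to the interaction radius before neighborhood averages change appreciably). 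The paper resolves the spatial shift in the opposite direction: it chooses $\delta$ \emph{after} the thresholds are known, small enough that $\delta\leq v_{l+1}-v_{l}$ for every $l$, so that $x<v_{l+1}$ whenever $\sigma\left(x-\delta\right)=a_{l}$; then $F\left(x\right)\leq F\left(v_{l+1}\right)$ holds for free by monotonicity, and the resulting action-index shift (comparing $a_{l-1}$ with $Q_{\delta_{2}}\left(F\left(v_{l}\right)\right)$, a two-step drop) is paid for with the fineness of the \emph{action} grid, $a_{l+1}-a_{l-1}\leq\frac{1}{2}\delta_{1}$, set against the vertical margin $Q\geq P+\delta_{1}$. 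Your comparison of $F\left(x\right)$ with $F\left(v_{k-1}\right)$, the threshold \emph{below} $x-\delta$, is what creates the need for a Lipschitz-in-space estimate that the construction cannot supply.

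A secondary issue: your justification that the hardened condition $\intop_{a^{*}}^{a}\left(\tilde{P}^{-1}\left(y\right)-y\right)dy>0$ survives an $O\left(\delta\right)$ perturbation because the unperturbed integral is ``uniformly bounded below away from $x^{*}$'' does not cover $a$ close to $a^{*}$, where the unperturbed integral itself vanishes. The paper handles this by taking $a^{*}$ to be the highest maximizer of the hardened objective and exploiting $\lim_{x\searrow a^{*}}Q^{-1}\left(x\right)>a^{*}$ (available because $Q$ is a step function) to extract an additional slack $\delta_{2}$ in the integrand itself, not merely in the integral.
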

We refer to $\sigma$ as a $\delta$-contagion wave for $P$. 
\begin{proof}
Define $P^{\delta}\left(x\right)=P\left(x\right)+\delta$ for $\delta\in\left(0,1-P\left(1\right)\right)$
and notice that for sufficiently small $\delta_{1}>0$, for each $\delta\leq\delta_{1}$,
if $a_{0}$ is the highest maximizer 
\[
a_{0}\in\sup\arg\max_{a}\intop_{0}^{a}\left(\left(P^{\delta_{1}}\right)^{-1}\left(x\right)-x\right)dx,
\]
 then, $a_{0}\leq x^{*}+\frac{1}{2}\eta$. Each $P_{\delta_{1}}$
can be approximated by a step function $Q$ such that (a) $Q\geq P$
(hence $P^{-1}\geq Q^{-1}$), (b) each step is bounded by $a_{l}-a_{l-1}\leq\frac{1}{4}\delta_{1}$,
for $l=1,...,L$, and (c) if $a^{*}$ is the highest maximizer of
\[
a^{*}\in\sup\arg\max_{x}\intop_{0}^{a}\left(Q^{-1}\left(x\right)-x\right)dx,
\]
then $a^{*}\leq x_{0}+\frac{1}{2}\eta=x^{*}+\eta$. (We omit the details
of finding such approximations.) Find $\delta_{2}>0$ s.t. $\delta_{2}\leq\frac{1}{2}\delta_{1}$
and, for each $a>a^{*}$, we have 
\[
\intop_{a^{*}}^{a}\left(Q^{-1}\left(x\right)-x-\delta_{2}\right)dx>0.
\]
Such $\delta_{2}$ exists because $Q$ is a step function and $\lim_{x\searrow a^{*}}Q^{-1}\left(x\right)>a^{*}.$ 

Let $Q_{\delta_{2}}=Q\left(x+\delta_{2}\right)$. Then, $Q_{\delta_{2}}$
is a step function that satisfies the hypothesis of Lemma \ref{lem:Wave function}.
Let $0=v_{0}<v_{1}...<v_{L}\leq L$ be the thresholds from Lemma \ref{lem:Wave function}.
Then, for each $l>0$,
\begin{align}
a_{l-1}\geq a_{l+1}-\frac{1}{2}\delta_{1} & \geq Q_{\delta_{2}}\left(a^{*}+\sum_{k\geq0}\left(1-f\left(v_{k}-v_{l}\right)\right)\left(a_{k+1}-a_{k}\right)\right)-\frac{1}{2}\delta_{1}\nonumber \\
 & =Q\left(\delta_{2}+a^{*}+\sum_{k\geq0}\left(1-f\left(v_{k}-v_{l}\right)\right)\left(a_{k+1}-a_{k}\right)\right)-\frac{1}{2}\delta_{1}\nonumber \\
 & \geq P\left(\delta_{2}+a^{*}+\sum_{k\geq0}\left(1-f\left(v_{k}-v_{l}\right)\right)\left(a_{k+1}-a_{k}\right)\right)+\frac{1}{2}\delta_{1}\nonumber \\
 & \geq P\left(\delta_{2}+a^{*}+\sum_{k\geq0}\left(1-f\left(v_{k}-v_{l}\right)\right)\left(a_{k+1}-a_{k}\right)\right)+\delta_{2}.\label{eq: cont wave 1}
\end{align}
The first inequality follows from $a_{l+1}-a_{l-1}\leq a_{l}-a_{l-1}+a_{l+1}-a_{l}\leq\frac{1}{2}\delta_{1}$;
the equality follows from $Q_{\delta_{2}}^{-1}\left(a\right)=Q^{-1}\left(a\right)-\delta_{2}$;
the second inequality follows from $Q\geq P+\delta_{1}$; and the
last inequality follows from $\delta_{2}\leq\frac{1}{2}\delta_{1}$. 

Define
\[
\sigma\left(x\right)=\begin{cases}
a^{*} & x<0\\
a_{l} & x\in[v_{l-1},v_{l})\text{ and }l=1,...,L\\
P\left(1\right)+\delta & x\geq v_{L}.
\end{cases}
\]
Find $\delta>0$ such that $\delta\leq\delta_{2}$ and $\delta\leq v_{l+1}-v_{l}$
for each $l=0,...,L-1$. Because the right-hand side of inequality
(\ref{eq:contagion wave eq}) is increasing in $x$, we have:
\begin{itemize}
\item If $x<\delta$, then $\sigma\left(x-\delta\right)=a^{*}=a_{0}$. Hence
inequality (\ref{eq:contagion wave eq}) follows from inequality (\ref{eq: cont wave 1})
for $l=1$ and the fact that $x\leq0=v_{0}<v_{1}$ .
\item If $v_{l-1}+\delta\leq x<v_{l}+\delta$ for $l=1,...,L$, then $\sigma\left(x-\delta\right)\geq a_{l-1}$.
Hence inequality (\ref{eq:contagion wave eq}) follows from inequality
(\ref{eq: cont wave 1}) and $x\leq v_{l}$.
\item If $x\geq v_{L}+\delta$, then $\sigma\left(x-\delta\right)\geq P^{*}\left(1\right)+\delta$.
Hence inequality (\ref{eq:contagion wave eq}) is satisfied automatically. 
\end{itemize}
\end{proof}

\subsection{Lattice\label{subsec:Lattice}}

We start by describing the candidate network. For each $M\geq m$,
the $\left(M,m\right)$-lattice is a network with 
\begin{itemize}
\item $N=M^{2}$ nodes from the set $I_{M}=\left\{ 1,...,M\right\} ^{2}$.
We define a distance on $I_{M}$ by 
\[
d\left(i,j\right)=\frac{1}{m}\sqrt{\sum_{l}\left(\left(i_{l}-j_{l}\right)\mod M\right)^{2}},
\]
and a ball in this metric as $B\left(i,r\right)=\left\{ y:d\left(x,y\right)\leq r\right\} .$
The subtraction ``$\text{mod}M$'' turns the lattice into a subset
of ``Euclidean torus'' $\left[0,\frac{M}{m}\right]^{2}$, 
\item connections $g_{i,j}=1\Longleftrightarrow j\in B\left(i,1\right)$. 
\end{itemize}
In the course of the proof, we will assume that there exists values
$b$ and $B$ such that $0\ll b\ll m\ll B\ll M$ and such that $B$
is divisible by $b$ and $M$ is divisible by $B$. This divisibility
assumption simplifies the proof. The theorem remains valid without
it, but the proof requires small modifications to take care of reminder
items. We omit the details.

For each $i\in I_{M}$, and two sets $U,W\subseteq I_{M}$, let 
\begin{align}
d\left(i,W\right) & =\min_{j\in W}d\left(i,j\right)\text{ and }d\left(U,W\right)=\min_{i\in U}\min_{j\in W}d\left(i,j\right).\label{eq: distance sets}
\end{align}
For each set $W$, and each $r$, define the $r$-neighborhood of
$W$:
\[
B\left(W,r\right)=\left\{ i:d\left(i,W\right)\leq r\right\} =\bigcup_{i\in W}B\left(i,r\right).
\]

For large $m$, the neighborhoods of each agent behave in a similar
way to open balls on a Euclidean plane. This is formalized as follows.
Let $B_{\R^{2}}\left(x,r\right)$ be the ball on the plane with center
$x\in\R^{2}$ and radius $r$. Let $\left|A\right|$ be a Lebesgue
measure of a measurable set $A\subseteq\R^{2}$. Let 
\[
f_{0}\left(d,r_{1},r_{2}\right)=\frac{1}{\pi}\left|B_{\R^{2}}\left(\left(0,0\right),r_{1}\right)\cap B_{\R^{2}}\left(\left(d,0\right),r_{2}\right)\right|
\]
be the measure of the intersection of two balls, with radii $r_{1}$
and $r_{2}$ respectively, separated by distance $d$, and divided
by the measure of the unit ball $B\left(\left(0,0\right),1\right)$. 
\begin{lem}
\label{lem:geometry believes}
\begin{enumerate}
\item For each $\rho>0$, there exists $C_{\rho}<\infty$ such that if $m\geq C_{\rho}$,
then for any two agents $i,j$, for any $r_{1}\leq1\leq r_{2}$, we
have
\[
\left|\frac{\left|B\left(i,r_{1}\right)\cap B\left(j,r_{2}\right)\right|}{\left|B\left(i,1\right)\right|}-f_{0}\left(d\left(i,j\right),r_{1},r_{2}\right)\right|\leq\rho.
\]
\item Function $f_{0}$ has the following properties: 
\begin{itemize}
\item $f_{0}$ is Lipschitz over $d$ and $r_{1}\leq1\leq r_{2}$, 
\item $f_{0}$ is decreasing in $d$, and
\item $f_{0}\left(d,r_{1},r_{2}\right)=0$ if $r_{1}+r_{2}\leq d$, and
$f_{0}\left(d,r_{1},r_{2}\right)=1$ if $r_{1}=1$ and $d\leq r_{2}-r_{1}$.
\end{itemize}
\item Functions $f_{1}\left(x,r_{1};r_{2}\right)=f_{0}\left(r_{2}-x,r_{1},r_{2}\right)$
for $r_{1}\leq1$ and $x\in\R$ converge uniformly to function $\lim_{r_{2}\rightarrow\infty}f_{1}\left(x,r_{1};r_{2}\right)=f_{2}\left(x,r_{1}\right)$.
In particular, for each $\rho>0$, there exists $R_{\rho}$ such that,
if $r_{1}\leq1$ and $r_{2}\geq R_{\rho}$, then,
\[
\sup_{r_{1}\leq1,x}\left|f_{2}\left(x,r_{1}\right)-f_{1}\left(x,r_{1};r_{2}\right)\right|\leq\rho.
\]
Functions $f_{1}$ and $f_{2}$ are Lipschitz over $d$ and $r_{1}\leq1$
and increasing in $x$.
\item Let $f\left(x\right)=f_{2}\left(x,1\right)$. Function $f$ is balanced
(in the sense of the definition from Section \ref{subsec:Contagion-wave}).
\end{enumerate}
\end{lem}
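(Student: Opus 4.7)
This lemma splits into four routine planar-geometry claims, and the plan is to treat them in order.

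For (1), the approach is to rescale the lattice by $1/m$ so that $I_M$ sits inside the Euclidean torus $[0, M/m]^2$ at spacing $1/m$. After rescaling, $B(i,r)$ becomes the set of $\tfrac{1}{m}\mathbb{Z}^2$-grid points inside a Euclidean ball of radius $r$, and by the standard Gauss circle estimate, the number of such grid points in any convex planar region of area $A$ and perimeter $p$ differs from $m^{2}A$ by $O(mp)$. In particular $|B(i,1)| = \pi m^{2} + O(m)$. The key observation is that $B(i,r_{1}) \cap B(j,r_{2})$ is convex and contained in a ball of radius $r_{1}\le 1$, and its perimeter is bounded by a constant independent of $r_{2}$: the contribution from the large circle is an arc lying inside the radius-$r_{1}$ disk, so its length is at most $2r_{1}$ by projection onto a diameter. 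Hence the lattice point count agrees with $m^{2}\pi\, f_{0}(d(i,j),r_{1},r_{2})$ up to $O(m)$, and after dividing by $|B(i,1)|$ the error is $O(1/m)$, uniformly in $r_{1}\le 1\le r_{2}$; choosing $C_{\rho}$ large enough completes the proof.

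For (2), the properties of $f_{0}$ are routine: Lipschitz continuity in $(d,r_{1})$ holds because a perturbation moves each boundary arc by $O(|\Delta d|+|\Delta r_{1}|)$ and the symmetric difference lies in a thin annular strip of the same order; monotonicity in $d$ is clear because translating one ball farther shrinks the overlap; and the two boundary formulas are the standard ``disjoint'' ($r_{1}+r_{2}\le d$) and ``nested'' ($d\le r_{2}-r_{1}$) regimes. For (3), I place the small ball at the origin and the large ball at center $(r_{2}-x,0)$ and expand the left boundary arc
\[
y_{1} = (r_{2}-x) - \sqrt{r_{2}^{2} - y_{2}^{2}} = -x + \frac{y_{2}^{2}}{2 r_{2}} + O(r_{2}^{-3})
\]
for $|y_{2}|\le 1$. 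Thus inside the unit disk the arc lies within a horizontal strip of width $O(1/r_{2})$ around the vertical line $y_{1}=-x$, so the symmetric difference between the intersection and the half-disk $\{y\in B_{\R^{2}}(0,r_{1}):y_{1}\ge -x\}$ has area $O(1/r_{2})$. This gives uniform convergence to
\[
f_{2}(x,r_{1}) = \frac{1}{\pi}\bigl|\{y\in B_{\R^{2}}(0,r_{1}): y_{1}\ge -x\}\bigr|,
\]
from which the Lipschitz and monotonicity properties of $f_{2}$ are immediate. Finally, (4) follows from this explicit form with $r_{1}=1$: applying $y\mapsto -y$ shows $f(x)=\pi^{-1}|\{y\in B_{\R^{2}}(0,1):y_{1}\le x\}|$, so $f(-1)=0$, $f(x)+f(-x)=1$ by the reflection symmetry of the unit disk, and $f'(x)=\tfrac{2}{\pi}\sqrt{1-x^{2}}>0$ on $(-1,1)$.

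The only subtle step is the uniformity in $r_{2}$ in part (1): naively applying the Gauss estimate to $B(j,r_{2})$ would yield an $O(mr_{2})$ boundary error, which is useless as $r_{2}\to\infty$. The resolution, which I emphasized above, is not to count the lattice points of $B(j,r_{2})$ separately, but to apply the estimate directly to the intersection, whose boundary consists of one arc of radius $r_{1}\le 1$ together with a short arc of the large circle passing through that small disk; the latter has length $O(r_{1})$ regardless of $r_{2}$. All remaining work is bookkeeping of constants.
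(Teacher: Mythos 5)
Your proof is correct and follows essentially the same route as the paper, which simply asserts that all the properties ``follow from their geometric interpretations and the fact that the counting measure on $I_{M}$ converges weakly to the Lebesgue measure on the torus'' and identifies $f_{2}\left(x,r_{1}\right)$ as a circular segment of height $r_{1}+x$, exactly the description you derive; you supply the quantitative discrepancy bounds the paper omits, and your observation that the lattice-point estimate must be applied to the intersection region rather than to $B\left(j,r_{2}\right)$ is indeed the right way to obtain uniformity in $r_{2}$. One small repair: the length of the large-circle arc inside the radius-$r_{1}$ disk is not bounded by $2r_{1}$ ``by projection onto a diameter'' (projection does not increase length, so that inequality runs the wrong way); instead use that a convex arc contained in a convex set has length at most that set's perimeter, giving the bound $2\pi r_{1}$, which is all you need.
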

\begin{proof}
The properties of $f_{0},f_{1},f_{2},$ and $f$ follow from their
geometric interpretations and the fact that the counting measure on
$I_{M}$ converges weakly to the Lebesgue measure on the torus. For
example, $f_{2}\left(x,r_{1}\right)$ is a circle segment of a radius
$r_{1}$ circle with height equal to $r_{1}+x$ for $x\in\left(-r_{1},r_{1}\right)$. 
\end{proof}

\subsection{Small cubes\label{subsec:Small-cubes}}

We divide the lattice into disjoint areas that we refer to as \emph{small
cubes}. Each cube is much smaller than the diameter of the neighborhood
of each node so that the neighborhoods of nodes in the same cube are
largely overlapping. At the same time, each small cube contains a
sufficiently large number of nodes so that the distribution of payoff
shocks within the cube can be probabilistically approximated by its
expected distribution. 

Let $G$ be a $\left(M,m\right)$-lattice. Take any $b>0$, where
we intend $b\ll m$. For each real number $x$, let $\left\lfloor x\right\rfloor $
be the largest integer no larger than $x$. For each node $i$, the
set of nodes
\begin{align*}
c^{b}\left(i\right) & =\left\{ j\in\left\{ 1,...,M\right\} ^{2}:\forall_{l}\left\lfloor i_{l}/b\right\rfloor =\left\lfloor j_{l}/b\right\rfloor \right\} 
\end{align*}
is referred to as a cube that contains $i$. Any two cubes are either
disjoint or identical. Each cube $c$ is uniquely identified by a
pair of numbers $c_{l}=\left\lfloor i_{l}/b\right\rfloor $ for each
$l=1,2$ and any $i\in c$. Due to the divisibility assumption, each
cube contains exactly $b^{2}$ elements, and there are $\left(\frac{M}{b}\right)^{2}$
small cubes on the $\left(M,m\right)$-lattice. 

Let $\mathcal{G}^{b}=\left\{ c^{b}\left(i\right):i\in G\right\} $
be the set of all cubes. We refer to the elements of $\mathcal{G}^{b}$
as \emph{small cubes}, to distinguish them from the large cubes introduced
in Section \ref{subsec:Percolation-argument}. Sometimes, we treat
$\mathcal{G}^{b}$ as a network with edges
\begin{equation}
g_{c,c^{\prime}}^{b}=1\text{ iff }\sum_{l}\left|\left(c_{l}-c_{l}^{\prime}\right)\text{mod}\frac{M}{b}\right|=1.\label{eq:small cube network}
\end{equation}
This way, each cube has four neighbors. We refer to $\left(\mathcal{G}^{b},g^{b}\right)$
as a network of cubes. 

For any $c,c^{\prime}\in\mathcal{G}^{b}$, let $d^{b}\left(c,c^{\prime}\right)$
denote the length of the shortest path between $c$ and $c^{\prime}$
in the network $\left(\mathcal{G}^{b},g^{b}\right)$. For any $S\subseteq\mathcal{G}^{b}$,
let $d^{b}\left(c,S\right)=\min_{c^{\prime}\in S}d^{b}\left(c,c^{\prime}\right)$. 

For each strategy profile $a=\left(a_{i}\right)_{i}$ and each small
cube $c\in\mathcal{G}^{b}$, define 
\begin{align*}
a\left(c\right) & =\frac{1}{\left|c\right|}\sum_{i\in c}a_{i},\\
\beta^{a}\left(i\right) & =\frac{1}{\left|B\left(i,1\right)\right|}\sum_{j\in B\left(i,1\right)}a_{j}=\frac{1}{\left|B\left(i,1\right)\right|}\sum_{j:d\left(i,j\right)\leq1}a_{j},\text{ and }\\
\beta^{a}\left(c\right)= & \frac{1}{\left|c\right|}\sum_{j\in c}\beta^{a}\left(j\right)=\frac{1}{\left|c\right|}\sum_{i\in c}\frac{1}{\left|B\left(i,1\right)\right|}\sum_{j:d\left(i,j\right)\leq1}a_{j},
\end{align*}
where $a\left(c\right)$ is the average action within the cube, $\beta^{a}\left(i\right)$
is the fraction of neighbors of $i$ who choose action 1, and $\beta\left(c\right)$
is the average fraction in cube $c$. 

\subsubsection{Average fractions}

The next result shows that if the cube is sufficiently small, individual
and average fractions are similar. 
\begin{lem}
\label{lem:Beliefs in a cube}There exists an universal constant $D<\infty$
such that, if $\frac{b}{m}\leq\rho$ and $m>C_{\rho}$, where $C_{\rho}$
and is a constant from Lemma \ref{lem:geometry believes}, then, for
each profile $a$, each small cube, and each $i,j\in c$, 
\[
\left|\beta^{a}\left(i\right)-\beta^{a}\left(c\right)\right|\leq D\rho.
\]
\end{lem}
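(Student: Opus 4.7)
The plan is to reduce the lemma to a uniform bound on $|\beta^a(i)-\beta^a(j)|$ for any two nodes $i,j$ in the same small cube $c$, since by the triangle inequality
\[
|\beta^a(i)-\beta^a(c)| \leq \frac{1}{|c|}\sum_{j\in c}|\beta^a(i)-\beta^a(j)|.
\]
Thus it suffices to prove that $|\beta^a(i)-\beta^a(j)|\leq D\rho$ for some universal $D$, whenever $i,j\in c$. The intuition is purely geometric: since any two points in a single cube are at distance at most $\sqrt{2}\,b/m\leq\sqrt{2}\rho$, the neighborhoods $B(i,1)$ and $B(j,1)$ are essentially the same ball, and averaging a bounded function over two nearly-identical balls gives nearly the same value.

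Concretely, I would proceed as follows. First, note that on the torus the lattice-balls $B(i,1)$ all have the same cardinality (by translation invariance modulo $M$), so there is no normalization discrepancy between $\beta^a(i)$ and $\beta^a(j)$. Second, apply Lemma \ref{lem:geometry believes}(1) with $r_1=r_2=1$ to the pair $(i,j)$ to get
\[
\left|\frac{|B(i,1)\cap B(j,1)|}{|B(i,1)|} - f_0(d(i,j),1,1)\right|\leq \rho.
\]
By Lemma \ref{lem:geometry believes}(2), $f_0$ is Lipschitz in its first argument with $f_0(0,1,1)=1$, so there is a universal $L$ such that $f_0(d(i,j),1,1)\geq 1-L\sqrt{2}\rho$. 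Combining these estimates, the symmetric difference satisfies $|B(i,1)\triangle B(j,1)|\leq 2(L\sqrt{2}+1)\rho\,|B(i,1)|$. Writing
\[
\beta^a(i)-\beta^a(j)=\frac{1}{|B(i,1)|}\Big(\sum_{k\in B(i,1)\setminus B(j,1)}a_k - \sum_{k\in B(j,1)\setminus B(i,1)}a_k\Big)
\]
and using $a_k\in[0,1]$ yields $|\beta^a(i)-\beta^a(j)|\leq 2(L\sqrt{2}+1)\rho$, so $D=2(L\sqrt{2}+1)$ works.

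The only genuine subtlety is ensuring the geometric approximation is valid, i.e.\ that the neighborhoods on the lattice behave like the Euclidean balls whose overlap is captured by $f_0$. The hypotheses $\frac{b}{m}\leq\rho$ and $m>C_\rho$ are tailored for exactly this purpose: the first guarantees $d(i,j)\leq\sqrt{2}\rho$ so that the Euclidean overlap $f_0(d(i,j),1,1)$ is close to $1$, while the second licenses the use of Lemma \ref{lem:geometry believes}(1) at the chosen precision $\rho$. Once those ingredients are combined, nothing further is required and the constant $D$ is indeed universal (it depends neither on the profile $a$, the cube $c$, nor on $\rho$, $b$, $m$, $M$).
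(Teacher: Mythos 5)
Your proof is correct and follows essentially the same path as the paper: reduce to bounding $|\beta^a(i)-\beta^a(j)|$ for $i,j\in c$, observe $d(i,j)\leq\sqrt{2}\rho$, and invoke Lemma \ref{lem:geometry believes} together with the Lipschitzness of $f_0$ and $f_0(0,1,1)=1$. The only cosmetic difference is that you make the equal-cardinality-of-balls observation explicit and package the two set differences into a symmetric difference, whereas the paper bounds the two pieces directly; the content is identical.
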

\begin{proof}
It is enough to show there exists $D<\infty$ such that $\left|\beta^{a}\left(i\right)-\beta^{a}\left(j\right)\right|\leq D\rho$
for each $i,j\in c$. Notice that 
\begin{align*}
\left|\beta^{a}\left(i\right)-\beta^{a}\left(j\right)\right| & \leq\frac{\left|B\left(i,1\right)\backslash B\left(j,1\right)\right|}{\left|B\left(i,1\right)\right|}+\frac{\left|B\left(j,1\right)\backslash B\left(i,1\right)\right|}{\left|B\left(j,1\right)\right|}.
\end{align*}
By Lemma \ref{lem:geometry believes} and the fact that $d\left(i,j\right)\leq\sqrt{2}\rho$,
the above is no larger than
\[
\leq2\rho+2\left(1-f_{0}\left(\sqrt{2}\rho,1,1\right)\right).
\]
The claim follows from the Lipschitzness of function $f_{0}$ and
the fact that $f_{0}\left(0,1,1\right)=1$. 
\end{proof}

\subsubsection{Average best response}

For each small cube $c\in\mathcal{G}^{b}$ and realization of payoff
shocks, define the empirical cdf of best response thresholds:
\[
P_{c}\left(x|\varepsilon\right)=\frac{1}{\left|c\right|}\sum_{i\in c}\mathbf{1}\left\{ \beta\left(\varepsilon{}_{i}\right)<x\right\} .
\]
(Recall that $\beta\left(\varepsilon_{i}\right)$ is the fraction
of neighbors of individual $i$ with payoff shock $\varepsilon_{i}$
that would make her indifferent between the two actions.) For $\gamma>0$,
say that a small cube $c$ is $\gamma$-bad, if there exists $x$
such that $P_{c}\left(x|\varepsilon\right)>P\left(x\right)+\gamma$;
otherwise, the cube is $\gamma$-good. 

Next, we show that if a cube is good, then the average action can
be approximated by a best response to average beliefs. 
\begin{lem}
\label{lem:Best response in a cube}There exists a constant $D<\infty$
such that if $\frac{b}{m}\leq\rho$ and $m>C_{\rho}$, where $C_{\rho}$
is a constant from Lemma \ref{lem:geometry believes}, then, for each
equilibrium profile $a$, if small cube $c$ is $\gamma$-good, then
\[
a\left(c\right)\leq\gamma+P\left(\beta^{a}\left(c\right)+D\rho\right).
\]
\end{lem}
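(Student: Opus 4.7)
The plan is to argue pointwise in each agent and then average. Since $a$ is a Nash equilibrium, player $i$ plays action $1$ with positive weight only if action $1$ is a weak best response to $\beta^{a}(i)$, which by the definition of $\beta(\varepsilon_i)$ and supermodularity means $\beta(\varepsilon_i)\leq \beta^{a}(i)$. Hence for every $i$,
\[
a_i \;\leq\; \mathbf{1}\{\beta(\varepsilon_i)\leq \beta^{a}(i)\}.
\]
Averaging this inequality over $i\in c$ gives a bound for $a(c)$ in terms of an empirical cdf of best-response thresholds within $c$, but evaluated at the \emph{individual} neighborhood fractions $\beta^{a}(i)$ rather than at the cube average $\beta^{a}(c)$.

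The second step is to replace $\beta^{a}(i)$ by $\beta^{a}(c)$. This is exactly what Lemma \ref{lem:Beliefs in a cube} provides: under the hypotheses $\frac{b}{m}\leq\rho$ and $m>C_\rho$, one has $\beta^{a}(i)\leq \beta^{a}(c)+D\rho$ for every $i\in c$, with the same universal constant $D$. Plugging this into the previous display and averaging,
\[
a(c) \;\leq\; \frac{1}{|c|}\sum_{i\in c}\mathbf{1}\{\beta(\varepsilon_i)\leq \beta^{a}(c)+D\rho\}.
\]

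The third and final step is to pass from this empirical quantity to $P_c$ and then to $P$. Because $P_c(\cdot\mid\varepsilon)$ is defined with a strict inequality, I would absorb the weak/strict discrepancy by introducing an arbitrarily small $\epsilon>0$: the indicator above is dominated by $\mathbf{1}\{\beta(\varepsilon_i)< \beta^{a}(c)+D\rho+\epsilon\}$, whose average is exactly $P_c(\beta^{a}(c)+D\rho+\epsilon\mid\varepsilon)$. The $\gamma$-good hypothesis then yields $P_c(\beta^{a}(c)+D\rho+\epsilon\mid\varepsilon)\leq P(\beta^{a}(c)+D\rho+\epsilon)+\gamma$, and letting $\epsilon\searrow 0$ together with the right-continuity of $P$ delivers the stated bound (after absorbing constants into $D$ if desired). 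The only genuinely delicate point is this last bookkeeping about strict versus weak inequalities; the rest of the proof is essentially a direct consequence of the equilibrium condition combined with the already-established smoothness of neighborhoods within a small cube.
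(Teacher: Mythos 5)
Your proof is correct and follows essentially the same three-step chain as the paper: equilibrium implies the pointwise indicator bound $a_i\le\mathbf{1}\{\beta(\varepsilon_i)\le\beta^a(i)\}$, Lemma~\ref{lem:Beliefs in a cube} replaces $\beta^a(i)$ with $\beta^a(c)+D\rho$, and averaging yields a quantity controlled via the $\gamma$-good hypothesis and the definition of $P_c$. The only difference is that you flag and repair the strict-versus-weak inequality mismatch between the definition $P_c(x\mid\varepsilon)=\frac{1}{|c|}\sum_{i\in c}\mathbf{1}\{\beta(\varepsilon_i)<x\}$ and the weak inequality $\beta(\varepsilon_i)\le\beta^a(c)+D\rho$ via an $\epsilon$-perturbation and right-continuity of $P$; the paper glosses over this and writes an equality where a strict-vs-weak discrepancy exists, so your version is actually slightly more careful than the paper's on that bookkeeping point.
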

\begin{proof}
Notice that 
\begin{align*}
a\left(c\right) & =\frac{1}{\left|c\right|}\sum_{i\in c}a_{i}\leq\frac{1}{\left|c\right|}\sum_{i\in c}\mathbf{1}\left(\beta\left(\varepsilon_{i}\right)\leq\beta_{i}^{a}\right)\leq\frac{1}{\left|c\right|}\sum_{i\in c}\mathbf{1}\left(\beta\left(\varepsilon_{i}\right)\leq\beta^{a}\left(c\right)+D\rho\right)\\
 & =P_{c}\left(\beta^{a}\left(c\right)+D\rho|\varepsilon\right)\leq\gamma+P\left(\beta^{a}\left(c\right)+D\rho\right).
\end{align*}
The first inequality comes from the fact that if $a_{i}=1$ is a best
response, then $\beta\left(\varepsilon_{i}\right)\leq\beta_{i}^{a}$,
and the second inequality is a consequence of Lemma \ref{lem:Beliefs in a cube}.
\end{proof}

\subsubsection{Behavior dominance}

The next definition and result plays an important role in extending
the contagion wave mechanics from a one-dimensional line to a two-dimensional
lattice. 

Let $\sigma$ be an increasing step function (see Section \ref{subsec:Contagion-wave})
for the definition. Let $a=\left(a_{i}\right)$ be a strategy profile.
We say that profile $a$ is \emph{$\left(W,R,\rho\right)$-dominated}
by $\sigma$ given a set $W\subseteq\mathcal{G}^{b}$ of small cubes
and $R>0$ if for each small cube $c\in\mathcal{G}^{b}$, we have
\[
a\left(c\right)\leq\sigma\left(d\left(c,W\right)-R\right)+\rho,
\]
where distance between sets is defined in (\ref{eq: distance sets}). 
\begin{lem}
\label{lem:Payoff dominance bound}There is a constant $D<\infty$
with the following property: Fix $\rho>0$. Suppose that $\frac{b}{m}<\rho$,
$R>R_{\rho}$, and $m>C_{\rho}$, where $C_{\rho}$ and $R_{\rho}$
are the constants from Lemma \ref{lem:geometry believes}. For each
increasing step function $\sigma:\R\rightarrow\left[0,1\right]$,
and for each set of small cubes $W$, if strategy profile $a$ is
$\left(W,R,\rho\right)$-dominated by $\sigma$, then for each cube
$c$, 
\[
\beta^{a}\left(c\right)\leq a^{*}+\sum_{a\in\sigma^{-1}\left(\R\right)}\left(1-f\left(\sigma^{-1}\left(a\right)+R-d\left(c,W\right)\right)\right)\left(a-a_{-}\right)+D\rho.
\]
\end{lem}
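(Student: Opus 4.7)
The plan is to bound $\beta^{a}\left(c\right)$ by writing it as a weighted average of cube actions $a\left(c^{\prime}\right)$ over cubes $c^{\prime}$ intersecting a typical ball $B\left(i_{0},1\right)$ with $i_{0}\in c$, applying the dominance hypothesis together with the step-function expansion of $\sigma$ to each term, and then recognizing the resulting geometric weights as the $f$-values appearing in the statement.

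First I would fix any $i_{0}\in c$. By Lemma~\ref{lem:Beliefs in a cube}, $\left|\beta^{a}\left(c\right)-\beta^{a}\left(i_{0}\right)\right|\leq D\rho$, so it suffices to bound $\beta^{a}\left(i_{0}\right)$. Partitioning $B\left(i_{0},1\right)$ into cubes lying entirely inside the ball plus a boundary strip of thickness $O\left(\rho\right)$ around $\partial B\left(i_{0},1\right)$, and noting that the strip contains at most an $O\left(\rho\right)$-fraction of $\left|B\left(i_{0},1\right)\right|$, I obtain
\[
\beta^{a}\left(i_{0}\right)\leq\sum_{c^{\prime}\subseteq B\left(i_{0},1\right)}\frac{\left|c^{\prime}\right|}{\left|B\left(i_{0},1\right)\right|}\,a\left(c^{\prime}\right)+O\left(\rho\right).
\]
Then I would expand $\sigma\left(x\right)=a^{*}+\sum_{a\in\sigma^{-1}\left(\R\right),\,a>a^{*}}\left(a-a_{-}\right)\mathbf{1}\left\{ x\geq\sigma^{-1}\left(a\right)\right\} $, apply the hypothesis $a\left(c^{\prime}\right)\leq\sigma\left(d\left(c^{\prime},W\right)-R\right)+\rho$ termwise, and interchange the order of summation to arrive at
\[
\beta^{a}\left(i_{0}\right)\leq a^{*}+\sum_{a\in\sigma^{-1}\left(\R\right),\,a>a^{*}}\left(a-a_{-}\right)\Phi\left(\sigma^{-1}\left(a\right)+R\right)+O\left(\rho\right),
\]
where $\Phi\left(t\right)$ is the fraction of $B\left(i_{0},1\right)$ lying in cubes $c^{\prime}$ with $d\left(c^{\prime},W\right)\geq t$.

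The final step is the geometric estimate of $\Phi\left(t\right)$. Pick $w^{*}\in W$ with $d\left(c,w^{*}\right)=d\left(c,W\right)$. Since each cube has diameter $O\left(\rho\right)$ and $B\left(w^{*},t\right)\subseteq B\left(W,t\right)$, one has
\[
\Phi\left(t\right)\leq1-\frac{\left|B\left(i_{0},1\right)\cap B\left(w^{*},t\right)\right|}{\left|B\left(i_{0},1\right)\right|}+O\left(\rho\right).
\]
Because $t\geq R\geq R_{\rho}$, Lemma~\ref{lem:geometry believes}(1) replaces the ratio by $f_{0}\left(d\left(i_{0},w^{*}\right),1,t\right)=f_{1}\left(t-d\left(i_{0},w^{*}\right),1;t\right)$ up to $\rho$, and part~(3) replaces $f_{1}$ by $f=f_{2}\left(\cdot,1\right)$ up to another $\rho$; combining this with $d\left(i_{0},w^{*}\right)=d\left(c,W\right)+O\left(\rho\right)$ and the Lipschitzness of $f$ yields $\Phi\left(t\right)\leq1-f\left(\sigma^{-1}\left(a\right)+R-d\left(c,W\right)\right)+O\left(\rho\right)$. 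Summing over $a$ and using $\sum_{a}\left(a-a_{-}\right)\leq1$ packages every error into a single term $D\rho$ with $D$ universal. I expect the main obstacle to be disciplined bookkeeping of the four independent $O\left(\rho\right)$ error sources -- within-cube variation from Lemma~\ref{lem:Beliefs in a cube}, boundary cubes of $B\left(i_{0},1\right)$, the node-versus-cube distance gap, and the $f_{1}\to f$ transition -- each of which must be shown to scale linearly in $\rho$ uniformly in $\sigma$, $W$, and the network size.
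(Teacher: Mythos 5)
Your proposal is correct and follows essentially the same route as the paper's proof: replace $\beta^{a}(c)$ by $\beta^{a}(i_{0})$ via Lemma~\ref{lem:Beliefs in a cube}, decompose the neighborhood sum into interior cubes plus an $O(\rho)$ boundary strip, expand the step function $\sigma$ and apply the dominance hypothesis termwise, and then pass the resulting ball-intersection ratios through Lemma~\ref{lem:geometry believes} (using $w^{*}\in W$ attaining $d(c,W)$ to reduce to a single ball). The $O(\rho)$ bookkeeping you flag as the main obstacle is precisely what the paper handles via the constants $D_{0},D_{1},D_{2}$; no materially different idea appears.
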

\begin{proof}
By Lemma \ref{lem:Beliefs in a cube}, there is a constant $D_{0}$
such that for any $i\in c$, 
\begin{align*}
\beta^{a}\left(c\right) & \leq\beta^{a}\left(i\right)+D_{0}\rho=a^{*}+\frac{1}{\left|B\left(i,1\right)\right|}\sum_{j\in B\left(i,1\right)}\left(a\left(j\right)-a^{*}\right)+D_{0}\rho\\
 & \leq a^{*}+\frac{1}{\left|B\left(i,1\right)\right|}\sum_{c^{\prime}:d\left(i,c^{\prime}\right)\leq1-\sqrt{2}\rho}\left|c^{\prime}\right|\left(a\left(c^{\prime}\right)-a^{*}\right)+\frac{\left|\left\{ j:1-\sqrt{2}\rho<d\left(i,j\right)<1\right\} \right|}{\left|B\left(i,1\right)\right|}+D_{0}\rho.
\end{align*}
Lemma \ref{lem:geometry believes} implies that the third term is
bounded by 
\[
\leq1-f_{0}\left(0,1-\sqrt{2}\rho,r_{2}\right)\leq D_{1}\rho
\]
for some constant $D_{1}$ to the Lipschitzness of function $f_{0}$
and $f_{0}\left(0,1,1\right)=1$. For the second term, we have
\begin{align}
 & \frac{1}{\left|B\left(i,1\right)\right|}\sum_{c^{\prime}:d\left(i,c^{\prime}\right)\leq1-\sqrt{2}\rho}\left|c^{\prime}\right|\left(a\left(c^{\prime}\right)-a^{*}\right)\nonumber \\
\leq & \rho+\frac{1}{\left|B\left(i,1\right)\right|}\sum_{c^{\prime}:d\left(i,c^{\prime}\right)\leq1-\sqrt{2}\rho}\left|c^{\prime}\right|\left(\sigma\left(d\left(c,W\right)-R\right)-a^{*}\right)\nonumber \\
\leq & \rho+\sum_{a\in\sigma\left(\R\right)}\left(a-a_{-}\right)\frac{1}{\left|B\left(i,1\right)\right|}\sum_{c^{\prime}:\begin{array}{c}
d\left(i,c^{\prime}\right)\leq1-\sqrt{2}\rho\text{ and }d\left(c^{\prime},\bigcup W\right)\geq R+\sigma^{-1}\left(a\right)\end{array}}\left|c^{\prime}\right|\nonumber \\
\leq & \rho+\sum_{a\in\sigma\left(\R\right)}\left(a-a_{-}\right)\frac{1}{\left|B\left(i,1\right)\right|}\left|\left\{ j:d\left(i,j\right)\leq1,d\left(j,\bigcup W\right)\geq R+\sigma^{-1}\left(a\right)\right\} \right|.\label{eq:frist term}
\end{align}
(Recall that $\sigma\left(\R\right)$ is the set of steps of the step
function $\sigma$.) Let $i^{*}\in\arg\min_{j\in\bigcup W}d\left(i,j\right)$.
Then, $d\left(i,i^{*}\right)=d\left(i,\bigcup W\right)$. Applied
again, Lemma \ref{lem:geometry believes} implies that
\begin{align*}
 & \frac{1}{\left|B\left(i,1\right)\right|}\left|\left\{ j:d\left(i,j\right)\leq1,d\left(j,\bigcup W\right)\geq R+\sigma^{-1}\left(a\right)\right\} \right|\\
\leq & \frac{1}{\left|B\left(i,1\right)\right|}\left|\left\{ j:d\left(i,j\right)\leq1,d\left(j,i^{*}\right)\geq R+\sigma^{-1}\left(a\right)\right\} \right|\\
= & 1-\frac{\left|B\left(i,1\right)\cap B\left(i^{*},R+\sigma^{-1}\left(a\right)-\rho\right)\right|}{\left|B\left(i,1\right)\right|}\\
\leq & 1-f_{0}\left(d\left(i,\bigcup W\right),1,R+\sigma^{-1}\left(a\right)\right)+\rho\\
= & 1-f_{1}\left(R+\sigma^{-1}\left(a\right)-\rho-d\left(i,\bigcup W\right),1;R+\sigma^{-1}\left(a\right)\right)+\rho\\
\leq & 1-f_{2}\left(R+\sigma^{-1}\left(a\right)-\rho-d\left(i,\bigcup W\right),1\right)+\rho\\
\leq & 1-f\left(R+\sigma^{-1}\left(a\right)-d\left(i,\bigcup W\right)\right)+\left(K+1\right)\rho,
\end{align*}
where $K$ is a Lipschitz constant for $f$. Hence (\ref{eq:frist term})
is not larger than 
\begin{align*}
 & \leq\sum_{a\in\sigma\left(\R\right)}\left(a-a_{-}\right)\left(1-f\left(R+\sigma^{-1}\left(a\right)-d\left(i,\bigcup W\right)\right)\right)+D_{2}\rho
\end{align*}
for some constant $D_{2}<\infty$ that may depend on the number of
steps in the step function $\sigma$. The result follows from putting
the estimates together.
\end{proof}

\subsection{Good giant component of cubes\label{subsec:Percolation-argument}}

We will show that if the lattice is sufficiently large then, with
an arbitrarily large probability, we can find a set of small cubes
that (a) contains almost all small cubes (we say that it is \emph{giant)}
(b) it is connected in the small cube network, (c) each cube in the
set is far away from bad cubes, and (d) it contains a large set of
agents for whom action 0 is dominant. Properties (b)-(c) will allow
the contagion wave to spread across the entire set $W$, property
(a) will ensure that spreading to set $W$ means spreading almost
everywhere, and property (d) will ensure that the set contains sufficiently
many ``initial infectors''. 

Formally, say that agent $x$ is \emph{extraordinary} if action 0
is strictly dominant for such an agent. A small cube $c\in\mathcal{G}^{b}$
is \emph{extraordinary} if it only consists of extraordinary agents.
In any equilibrium, $a\left(c\right)=0$ for extraordinary cube $c$. 

Say that set $W\subseteq\mathcal{G}^{b}$ of small cubes is \emph{$\left(\gamma,R\right)$-good}
if \begin{enumerate}[label=(\alph*)] 

\item the union of all small cubes in $W$ contains at least a fraction
of $\left(1-\gamma\right)$ elements of the lattice, $\left|\bigcup W\right|\geq\left(1-\gamma\right)M^{2}$,

\item it is connected as a subset of nodes on graph $\left(\mathcal{G}^{b},g^{b}\right)$
(see the definition of a small cube network in (\ref{eq:small cube network})),

\item if $c\in\mathcal{G}^{b}$ is $\gamma$-bad, then $d\left(c,c^{\prime}\right)\geq R$
for each $c^{\prime}\in W$ (in particular, each cube in $W$ is $\gamma$-good),

\item it contains a cube $c_{0}$ such that each cube $c$ s.t. $d\left(c,c_{0}\right)\leq R$
is extraordinary. \end{enumerate} 

The goal of this subsection is to prove that large good sets of small
cubes exists with a large probability: 
\begin{lem}
\label{lem:Good giant component}For each $\gamma,\rho>0,$ and $R<\infty$,
there exists constants $m_{\gamma,\rho,R},A_{\gamma,\rho,R}>0$ such
that, if $m\geq m_{\gamma,\rho,R}$ and $M\geq\left(A_{\gamma,\rho,R}\right)^{m^{6}}$,
then there exists $b$ so that $\frac{b}{m}\leq\rho$ and, if $G$
is $\left(M,m\right)$-lattice with the associated small cube network
$\mathcal{G}^{b}$, then
\[
\Prob\left(\text{there exists }\left(\gamma,R\right)\text{-good set }W\subseteq\mathcal{G}^{b}\right)\geq1-\gamma.
\]
 
\end{lem}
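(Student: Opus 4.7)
\textbf{Proof proposal for Lemma \ref{lem:Good giant component}.} Set $b = \lfloor \rho m\rfloor$, so $b/m \leq \rho$ is automatic. Call a cube $c$ \emph{safe} if no $\gamma$-bad cube lies within Euclidean distance $R$ of $c$. The plan is to define $W$ as the giant connected component of safe cubes on the torus $\mathcal G^b$ and then locate inside it a cube $c_0$ whose Euclidean $R$-ball consists entirely of extraordinary agents. Four probabilistic estimates, combined by a union bound, will give the required $1-\gamma$ probability.

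First I control safe cubes. The Dvoretsky--Kiefer--Wolfowitz inequality applied to the $b^2$ i.i.d.\ shocks inside one cube gives $\Prob(c \text{ is }\gamma\text{-bad}) \leq 2\exp(-2b^2\gamma^2)$, so by a union bound over the at most $K(Rm/b)^2$ cubes in an $R$-neighborhood (Lemma \ref{lem:geometry believes}), $\Prob(c \text{ unsafe}) \leq 2K(Rm/b)^2\exp(-2b^2\gamma^2)$, which tends to $0$ as $m\to\infty$. Since the safe indicator depends only on shocks in cubes within a bounded cube-graph radius of $c$, the safe field has finite range of dependence. A Liggett--Schonmann--Stacey stochastic-domination argument reduces it to an i.i.d.\ Bernoulli field on the $2$D torus with parameter arbitrarily close to $1$; in the supercritical regime a standard Peierls contour bound on $\mathcal G^b$ then yields, with probability at least $1-\gamma/4$, (i) a unique giant safe component $W_0$ of cube-size at least $(1-\gamma/2)(M/b)^2$, and (ii) the tail estimate that any given safe cube lies outside $W_0$ with probability at most $\gamma/4$. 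Translating (i) back to agents via $|c|=b^2$ gives $|\bigcup W_0|\geq (1-\gamma)M^2$, which will be property (a), and property (b) comes from $W_0$ being a connected component by construction.

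Second, I produce the extraordinary ball. Under the hypothesis $P(1)<1$ (the case $x^*>0$, $P(0)>0$ is symmetric after relabelling actions), $q_0 := 1-P(1) > 0$ is the probability that a single agent is extraordinary. A Euclidean $R$-ball around a cube contains at most $C R^2 m^2$ agents, so the probability all of them are extraordinary is at least $q_0^{C R^2 m^2}$. Partition the torus into $\lfloor M/(3Rm)\rfloor^2$ disjoint candidate regions and consider the independent events ``the central ball of this region is fully extraordinary''; the probability that none succeeds is at most $\bigl(1-q_0^{C R^2 m^2}\bigr)^{\lfloor M/(3Rm)\rfloor^2}$, which falls below $\gamma/4$ once $M \geq A^{m^6}$ for $A = A_{\gamma,\rho,R}$ large enough. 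Any extraordinary center $c_0$ is automatically safe, since every cube within Euclidean distance $R$ of $c_0$ is extraordinary, hence trivially $\gamma$-good, so no bad cube lies within distance $R$ of $c_0$. By the tail estimate (ii), $c_0$ lies in $W_0$ with conditional probability at least $1-\gamma/4$. Setting $W=W_0$ then delivers properties (a)--(d) simultaneously, and the four failure probabilities sum to at most $\gamma$.

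The main obstacle is the percolation step. Establishing the quantitative giant-component statement for the safe field --- both the volume bound $|W_0| \geq (1-\gamma/2)(M/b)^2$ and the tail estimate that a fixed safe cube lies in $W_0$ with probability $1-O(\gamma)$ --- despite the finite range of dependence is the delicate part. The Liggett--Schonmann--Stacey reduction to supercritical i.i.d.\ site percolation is standard but requires taking $m$ large enough that the unsafe probability sits comfortably below $1-p_c$, and the Peierls contour argument on the torus must be pushed to control not merely the average fraction of small-component cubes but also the probability that the particular random cube $c_0$ (at the center of the extraordinary ball) falls outside the giant component.
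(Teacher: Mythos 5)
Your proposal reaches the same conclusion by a genuinely different route, and in outline it is sound. Where the paper works at two scales --- small cubes grouped into large cubes, a large cube declared clean if it contains no $\gamma$-bad small cube, a self-contained combinatorial percolation argument (the count of $2$-connected sets in Lemma \ref{lem:The-number-of-connected sets} and the first-moment bound on $\sum_{j}\left|S_{j}\right|^{2}$ in Lemmas \ref{lem:giant component bound} and \ref{lem:Percolation theory}), and finally the $R$-interior construction $W\left(U_{\gamma},R\right)$ of Lemmas \ref{lem:Size of R interior} and \ref{lem:Connected R interior} --- you work at a single scale with ``safe'' cubes and outsource the percolation to Liggett--Schonmann--Stacey domination plus standard supercritical site-percolation facts. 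That is legitimate: the safe field has range of dependence of order $Rm/b\approx R/\rho$ in cube-graph units, bounded independently of $m$, and its marginal tends to $1$, so the reduction to i.i.d.\ Bernoulli near $1$ goes through; the paper's version buys explicit constants and self-containedness, yours buys brevity at the price of citing nontrivial external theorems. Your $q_{0}=1-P\left(1\right)$ is indeed the correct probability that action $0$ is strictly dominant for a given agent, matching the definition of an extraordinary agent. (Minor slip: the union bound for ``unsafe'' should run over the $O\left(\left(Rm/b\right)^{2}\right)=O\left(\left(R/\rho\right)^{2}\right)$ cubes in an $R$-neighborhood; a genuine factor of $K$ there would destroy the bound, since $K$ grows like $M$.)

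The one real gap is the step you flag yourself at the end: placing the extraordinary center $c_{0}$ inside the giant component. Your tail estimate (ii) is stated for a fixed safe cube, but $c_{0}$ is selected by the shock realization, so it cannot be applied as is. The paper sidesteps this by reversing the order: it first conditions on the giant clean component being large and then shows that, conditionally on a large cube belonging to it and on an arbitrary realization of shocks outside that cube, the cube is extraordinary with probability at least the unconditional one (because extraordinary implies clean), so sequential conditioning over the $\left(1-\gamma\right)K^{2}$ cubes of the component produces the seed (Lemma \ref{lem:Seed}). To repair your ordering you need a monotonicity argument: conditioning on the ball around a candidate center being fully extraordinary leaves the shocks outside the ball i.i.d.\ and can only enlarge the set of safe cubes (extraordinary cubes are automatically $\gamma$-good), so the Peierls bound on an unsafe circuit separating the ball from the bulk still applies conditionally. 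This is true and not hard, but it is an extra lemma rather than a citation, and as written your proof is incomplete at exactly this point.
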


\subsubsection{Large cubes}

In order to find a set $W$ that is sufficiently far from bad small
cubes, we are going to contain and separate bad small cubes in sufficiently
large sets. Let $B$ be a number that is divisible by $b$, $B=kb$,
and such that $M$ is divisible by $B$. Consider a network of cubes
$\left(\mathcal{G}^{B},g^{B}\right)$ defined in the same way as described
in Section \ref{subsec:Small-cubes}. We refer to elements of $\mathcal{G}^{B}$
as \emph{large cubes} to distinguish from the elements of $\mathcal{G}^{b}$.
Let $K=\frac{M}{B}$; then the number of large cubes is $K^{2}$. 

For each set of large cubes $U\subseteq\mathcal{G}^{B}$, and for
each $R$, define the small cube $R$-interior of $U$ as the set
of small cubes that are $R$-away from nodes that do not belong to
$U$ 
\[
W\left(U,R\right)=\left\{ c\in\mathcal{G}^{b}:d\left(c,I_{M}\backslash\left(\bigcup U\right)\right)>R\right\} .
\]
Here, $\bigcup U$ is the union of all large cubes in set $U$, and
$I_{M}\backslash\left(\bigcup U\right)$ is the set of all nodes on
$\left(M,m\right)$-lattice that do not belong to one of the large
cubes in $U$. We have the following bound on the size of set $W\left(U,R\right)$.
\begin{lem}
\label{lem:Size of R interior}Suppose that $U$ is a subset of large
cubes, $U\subseteq\mathcal{G}^{B}$. Then, \textup{
\[
\frac{1}{\left|\mathcal{G}\right|}\left|\bigcup W\left(U,R\right)\right|\geq\frac{\left|U\right|}{\left|\mathcal{G}^{B}\right|}\left(1-4\frac{1}{k}\left(\frac{Rm}{b}+1\right)\right).
\]
}.
\end{lem}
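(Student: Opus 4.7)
The proof is a direct volume-counting argument resting on a single geometric observation. For any $U_0 \in U$, the set $I_M \setminus \bigcup U$ is a union of large cubes that do not belong to $U$, so it is disjoint from $U_0$. Hence for every small cube $c \subseteq U_0$,
\[
d\bigl(c,\, I_M \setminus \bigcup U\bigr) \;\geq\; d\bigl(c,\, I_M \setminus U_0\bigr) \;=\; d(c,\, \partial U_0),
\]
and consequently $c \in W(U,R)$ as soon as $c$ lies at continuum distance strictly greater than $R$ from $\partial U_0$. The crucial feature of this bound is that the ``$R$-depth inside $U_0$'' criterion is independent of how the remaining cubes of $U$ are arranged around $U_0$, which lets me decouple the count cube-by-cube.

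Next I count the $R$-deep small cubes inside a fixed large cube $U_0$. Labeling the $k^2$ small cubes of $U_0$ by $(i_1,i_2) \in \{0,\dots,k-1\}^2$, the continuum distance from cube $(i_1,i_2)$ to $\partial U_0$ equals $\min_l \min(i_l,\, k-1-i_l)\cdot(b/m)$ up to an additive $O(1/m)$ correction arising from lattice quantization. Hence every cube with $\min_l \min(i_l,\, k-1-i_l) \geq Rm/b + 1$ is $R$-deep, and the complementary boundary layer is contained in the union of four strips of width $\lceil Rm/b\rceil + 1 \leq Rm/b + 1$. Its cardinality is at most
\[
k^2 - \bigl(k - 2(Rm/b + 1)\bigr)^2 \;\leq\; 4k\,(Rm/b + 1),
\]
provided $k \geq 2(Rm/b + 1)$; if this fails, the right-hand side of the lemma is non-positive and there is nothing to prove.

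Finally, I sum the bound $k^2 - 4k(Rm/b+1)$ across the $|U|$ large cubes of $U$, multiply by $b^2$ (nodes per small cube), and divide by $|\mathcal{G}| = M^2 = K^2(kb)^2$. Using $|\mathcal{G}^B| = K^2$, the ratio collapses to
\[
\frac{1}{|\mathcal{G}|}\Bigl|\bigcup W(U,R)\Bigr| \;\geq\; \frac{|U|}{K^2}\Bigl(1 - \tfrac{4}{k}(Rm/b + 1)\Bigr) \;=\; \frac{|U|}{|\mathcal{G}^B|}\Bigl(1 - \tfrac{4}{k}(Rm/b + 1)\Bigr),
\]
as claimed. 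No step presents a genuine obstacle; the only care required is the bookkeeping that converts the continuum threshold $R$ into an integer number of small-cube layers (the source of the ``$+1$'' in the bound), together with the decoupling observation above, which reduces a global distance computation on the lattice to a local boundary count inside each individual large cube.
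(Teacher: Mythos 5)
Your proof is correct and follows essentially the same route as the paper: both reduce the global count to a per-large-cube count by observing that a small cube sufficiently deep inside a single large cube $C \in U$ automatically belongs to $W(U,R)$, and then bound the ``$R$-deep'' cubes in each $C$ by $k^2 - 4k(Rm/b+1)$. The paper presents this as a chain of ratio identities while you phrase the decoupling step $d(c, I_M\setminus\bigcup U) \geq d(c, I_M\setminus U_0)$ explicitly, but the underlying counting argument is identical.
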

\begin{proof}
Observe that 
\[
\frac{\left|\bigcup W\left(U,R\right)\right|}{\left|\mathcal{G}\right|}=\frac{\left|\bigcup\mathcal{G}^{b}\right|}{\left|\mathcal{G}\right|}\frac{\frac{\left|\bigcup W\left(U,R\right)\right|}{\left|\bigcup\mathcal{G}^{b}\right|}}{\frac{\left|W\left(U,R\right)\right|}{\left|\mathcal{G}^{b}\right|}}\frac{\left|W\left(U,R\right)\right|}{\left|W\left(U,0\right)\right|}\frac{\left|W\left(U,0\right)\right|}{\left|\mathcal{G}^{b}\right|}.
\]
The bound is a consequence of the following observations:
\begin{itemize}
\item Because all small cubes have the same cardinality, we have $\left|\bigcup\mathcal{G}^{b}\right|=\left|\mathcal{G}\right|$
and $\frac{\left|\bigcup W\left(U,R\right)\right|}{\left|\bigcup\mathcal{G}^{b}\right|}=\frac{\left|W\left(U,R\right)\right|}{\left|\mathcal{G}^{b}\right|}$. 
\item For each regular large cube $C\in U$, $W\left(C,0\right)$ consists
of $k^{2}$ small cubes, and $W\left(C,R\right)$ consists of at least
$\left(k-2\left(\frac{Rm}{b}+1\right)\right)^{2}$ small cubes. Hence
$\frac{\left|W\left(U,R\right)\right|}{\left|W\left(U,0\right)\right|}\geq1-4\frac{1}{k}\left(\frac{Rm}{b}+1\right)$.
\item Finally, notice that $\left|W\left(U,0\right)\right|=k^{2}\left|U\right|$
and $\left|\mathcal{G}^{b}\right|=k^{2}\left|\mathcal{G}^{B}\right|$.
\end{itemize}
\end{proof}
The next result shows that if $U$ is a connected component of large
cubes, then $W\left(U,R\right)$ is a connected component of small
cubes. 
\begin{lem}
\label{lem:Connected R interior}Suppose that $R<\frac{b}{m}\left(\frac{1}{2}k-1\right)$.
If a set of large cubes $U\subseteq\mathcal{G}^{B}$ is a connected
component in the network of large cubes, then the $R$-interior set
of small cubes $W\left(U,R\right)$ is a connected component in the
network of small cubes. 
\end{lem}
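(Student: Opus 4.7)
The plan is to prove that $W(U, R)$, viewed as an induced subgraph of the small cube network $(\mathcal{G}^{b}, g^{b})$, is connected. I will introduce, for each large cube $C = (X, Y) \in U$, a \emph{central} small cube $c_{C} \in \mathcal{G}^{b}$ at grid index $(Xk + \lfloor k/2 \rfloor,\, Yk + \lfloor k/2 \rfloor)$, and reduce the claim to two facts: (i) for every $c \in W(U, R)$ there is a path in $W(U, R)$ joining $c$ to the center $c_{C(c)}$ of its containing large cube $C(c) \in U$, where $C(c)$ is the unique large cube containing the small cube $c$ (it lies in $U$ because $c \subseteq \bigcup U$); and (ii) for every pair of adjacent large cubes $C, C' \in U$, the centers $c_{C}$ and $c_{C'}$ are joined by a path in $W(U, R)$. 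Connectedness of $U$ in $(\mathcal{G}^{B}, g^{B})$ will then let me chain such paths and join any two elements of $W(U, R)$.

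As a preliminary, each $c_{C}$ itself lies in $W(U, R)$: the Euclidean distance from the nodes of $c_{C}$ to $\partial C$ is at least $\lfloor k/2 \rfloor b / m$, and since $C \subseteq \bigcup U$, the distance from $c_{C}$ to $I_{M} \setminus \bigcup U$ is at least $\lfloor k/2 \rfloor b / m > R$ by the hypothesis $R < \frac{b}{m}(\frac{k}{2} - 1)$. For (i), I would move $c$ to $c_{C(c)}$ one small-cube step at a time, monotonically in each coordinate toward the center; the intermediate cubes stay inside $C(c)$. To verify they remain in $W(U, R)$, I track, for each cardinal direction, the contribution to $d(c', I_{M} \setminus \bigcup U)$: if the relevant neighbor of $C(c)$ in $\mathcal{G}^{B}$ lies in $U$, that contribution is at least $(k - 1) b / m > R$ because one must traverse an additional large cube before leaving $\bigcup U$; otherwise the contribution is the internal distance to the corresponding boundary of $C(c)$, which under the chosen monotone move either increases or equals at least $\lfloor k/2 \rfloor b / m > R$ once the moving coordinate reaches the central band. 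For (ii), I would use a straight segment of small cubes crossing the shared face of $C$ and $C'$; an analogous tally, now exploiting that both $C$ and $C'$ lie in $U$, shows each intermediate cube sits in $C \cup C'$ and has distance to $I_{M} \setminus \bigcup U$ strictly above $R$.

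The main obstacle will be the bookkeeping in (i) when $c$ lies close to a boundary of $C(c)$ shared with another large cube of $U$, so that membership in $W(U, R)$ relies on the neighbor rather than on depth within $C(c)$; one must check that the monotone move is compatible with both sources of protection simultaneously, and in particular that the direction chosen for the first steps does not leave a transverse contribution below $R$. The numerical hypothesis $R < \frac{b}{m}(\frac{k}{2} - 1)$ is the precise quantitative input that makes this work: it ensures the central band of each $C \in U$ is uniformly more than $R$ away from every node outside $\bigcup U$, regardless of which neighbors of $C$ lie in $U$, so that the monotone path can always be routed through $W(U, R)$ to the center and, via (ii), onward to any other large cube in $U$.
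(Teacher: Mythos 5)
Your proposal is correct and follows essentially the same route as the paper's proof: decompose $W\left(U,R\right)$ by large cubes, establish connectivity of the piece inside each large cube (which the paper simply asserts as clear, and which your monotone staircase to the central small cube makes precise — for each of the eight surrounding large cubes not in $U$, the distance from an intermediate cube is bounded below by the minimum of its value at the starting cube and roughly $\frac{kb}{2m}>R$), and then bridge adjacent large cubes of $U$ across their shared face before chaining via connectedness of $U$. The bookkeeping you flag does close exactly as you describe, with $R<\frac{b}{m}\left(\frac{1}{2}k-1\right)$ supplying the needed slack.
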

\begin{proof}
For each large cube $C$, let $W\left(U,R\right)\cap\left\{ c\in\mathcal{G}^{b}:c\subseteq C\right\} $
be a part of the $R$-interior that consists of small cubes which
are contained in $C$. It is clear that $W\left(U,R\right)\cap\left\{ c\in\mathcal{G}^{b}:c\subseteq C\right\} $
is connected in the network of small cubes. If $C$ and $C^{\prime}$
are two neighboring large cubes, say $C_{1}=C_{1}^{\prime}$ and $C_{2}^{\prime}=C_{2}^{\prime}+1$,
then small cubes $c$ and $c^{\prime}$ such that $c_{1}=c_{1}^{\prime}=B\left(C_{1}-1\right)+\left(\left\lceil \frac{Rm}{b}\right\rceil +1\right)$
and $c_{2}^{\prime}=c_{2}+1=BC_{2}+1$ are neighbors and they both
belong to $W\left(U,R\right)$. Hence, set $W\left(U,R\right)=\bigcup_{C\in U}W\left(U,R\right)\cap\left\{ c\in\mathcal{G}^{b}:c\subseteq C\right\} $
is connected. 
\end{proof}

\subsubsection{Percolation theory - deterministic bounds\label{subsec:Percolation-theory-on simple lattices}}

In order to establish the existence of a giant connected component
of small cubes that are not too close to bad small cubes, we turn
to the percolation theory. The percolation theory studies properties
of graphs obtained by removal of some nodes. In this paper, we are
especially interested in the size of the largest connected component
of a so-obtained graph. 

We divide the percolation theoretic arguments into two parts: deterministic
and probabilistic. 
\begin{lem}
\label{lem:separate}For each connected $S\subseteq\mathcal{G}^{B}$
$\text{st.}$ $\left|S\right|<K$, there are connected sets $\partial S\subseteq CS\subseteq\mathcal{G}^{B}\backslash S$
such that $\left|\mathcal{G}^{B}\backslash CS\right|\leq\left|S\right|^{2}$,
\[
\left\{ c\in CS:d^{B}\left(c,S\right)\leq1\right\} \subseteq\partial S\subseteq\left\{ c\in CS:d^{B}\left(c,S\right)\leq2\right\} .
\]
\end{lem}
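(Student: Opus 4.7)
The plan is to take $CS$ to be the complement in $\mathcal{G}^B$ of the topological filling of $S$, and to take $\partial S$ to be the layer of cubes in $CS$ at $d^B$-distance at most $2$ from $S$. The hypothesis $|S|<K$ guarantees that $S$ occupies strictly fewer than $K$ columns and rows of the torus, so via the universal cover map $\mathbb{Z}^{2}\to\mathcal{G}^{B}$ we can lift $S$ to a connected subset $\tilde S\subseteq\mathbb{Z}^2$ contained in a bounding box of side at most $|S|$, with the lift injective under projection back to $\mathcal{G}^B$.

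Let $U\subseteq\mathbb{Z}^2$ be the unique unbounded component of $\mathbb{Z}^2\setminus\tilde S$ and $H=\mathbb{Z}^{2}\setminus U$, so $H$ consists of $\tilde S$ together with all bounded ``holes'' it encloses. Since $H$ is contained in the bounding box of $\tilde S$, one has $|H|\leq|S|^2$. Define $CS\subseteq\mathcal{G}^B$ as the image of $U$: then $CS$ is disjoint from $S$, connected because $U$ is, and its complement is the image of $H$, which by injectivity has size $|H|\leq|S|^2$. For $\partial S$ I take
\[
\partial S=\{c\in CS:d^B(c,S)\leq 2\},
\]
which by definition satisfies the two required sandwich inclusions. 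A useful auxiliary observation is that for $c\in CS$ one has $d^B(c,S)=d^B(c,H)$: any shortest path from $c$ into $H$ must first cross the outer vertex boundary of $H$, and any $H$-cube having a neighbor outside $H$ must lie in $S$ (else that cube and its outside neighbor would sit in a common connected component of $\mathbb{Z}^2\setminus S$, contradicting that one is in a bounded component and the other in $U$).

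The main obstacle is proving that $\partial S$ is connected. With only the distance-$1$ layer this can fail in $4$-neighbor adjacency---for instance, when $S$ is a single cube, its four neighbors are pairwise at $\ell^{1}$-distance $2$---which is precisely why thickening to distance $2$ is required. The strategy has three steps. First, show that the outer boundary $\partial_{1}=\{c\in CS:d^B(c,H)=1\}$ is $8$-connected in $\mathbb{Z}^{2}$; this is a standard planar fact for simply connected bounded regions and $H$ is simply connected by construction. Second, show that any two $8$-adjacent cubes $c,c'\in\partial_{1}$ can be joined by a $4$-path inside $\partial S$: their shared $2\times 2$ window contains the diagonal-corner cube $c''$; if $c''\in CS$ then $c''\in\partial S$ and bridges them directly, while if $c''\in H$ then one of the two axis-aligned cubes of the same window lies in $CS$ at $d^B$-distance $1$ from $H$, hence in $\partial S$, and bridges $c$ and $c'$. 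Third, every $c\in\partial S$ is within $4$-distance $1$ of $\partial_{1}$, so the $4$-connectivity of $\partial_{1}$ extends to all of $\partial S$. The delicate case is the $2\times 2$-window analysis at convex corners of $H$, which will be executed by direct inspection of the finite list of local configurations.
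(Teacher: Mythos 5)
Your construction follows the paper's own route: bound $S$ by an $\left|S\right|\times\left|S\right|$ box to get $\left|\mathcal{G}^{B}\backslash CS\right|\leq\left|S\right|^{2}$, take $CS$ to be the ``outside'' component, and take $\partial S$ to be a boundary layer thickened to distance $2$, whose connectivity rests on the standard discrete surrounding-circuit fact (the paper simply cites Lemma 1 of \cite{bollobas_percolation_2006} for this, whereas you sketch the $8$-connectivity-plus-local-bridging argument; the remaining checkerboard/corner case analysis you defer is exactly the content of that citation, so the two write-ups are comparable in rigor). One definitional slip needs fixing: you set $CS$ equal to the image under the covering map of the unbounded component $U$ of $\mathbb{Z}^{2}\backslash\tilde{S}$, but $U$ contains all the other periodic translates of $\tilde{S}$, so $\pi\left(U\right)$ meets $S$ and is in fact all of $\mathcal{G}^{B}$; the correct object is $\mathcal{G}^{B}\backslash\pi\left(H\right)$, or, as in the paper, the connected component of $\mathcal{G}^{B}\backslash S$ containing the complement of the bounding box (whose connectivity is immediate because $\left|S\right|<K$ prevents the box from wrapping around the torus). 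With that repair, your auxiliary observation $d^{B}\left(c,S\right)=d^{B}\left(c,H\right)$ on $CS$ and the distance-$1$-layer argument go through.
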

\begin{proof}
Because $\left|S\right|<K$ is smaller than the length and width of
the network of large cubes, set $S$ can be contained in a cube of
size $\left|S\right|^{2}$ in a way that the complement of the cube
is connected and it contains at least $\left|\mathcal{G}^{B}\right|\backslash\left|S\right|^{2}$
elements. Let $CS$ be the connected component of $\mathcal{G}^{B}\backslash S$
that contains the complement of the cube. Using Lemma 1 from \cite{bollobas_percolation_2006},
we can construct a finite path $c_{0},...,c_{k}$ of neighboring cubes
in $CS$ surrounding $S$ in an intuitive way such that, if $\partial S=\left\{ c_{0},...,c_{k}\right\} $,
then $\partial S$ satisfies the required inclusions. 
\end{proof}
\begin{lem}
\label{lem:giant component bound}Suppose that $S_{1},...,S_{J}$
is a collection of connected subsets of lattice $\mathcal{G}^{B}$
such that each $\left|S_{j}\right|<K$ and such that for any $i\neq j$,
$\min_{c\in S_{i},c^{\prime}\in S_{j}}d^{B}\left(c,c^{\prime}\right)>2$
. Then, graph $\mathcal{G}^{B}\backslash\bigcup S_{j}$ contains a
connected component of size not smaller than $\left|\mathcal{G}^{B}\right|\backslash\sum_{j}\left|S_{j}\right|^{2}$. 
\end{lem}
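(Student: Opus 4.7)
The plan is to apply Lemma \ref{lem:separate} to each $S_j$, producing connected sets $\partial S_j \subseteq CS_j \subseteq \mathcal{G}^B \setminus S_j$ with $|\mathcal{G}^B \setminus CS_j| \leq |S_j|^2$. Setting $T := \bigcap_{j=1}^{J} CS_j$, subadditivity of cardinality gives
\[
|T| \;\geq\; |\mathcal{G}^B| - \sum_{j=1}^{J} |\mathcal{G}^B \setminus CS_j| \;\geq\; |\mathcal{G}^B| - \sum_{j=1}^{J} |S_j|^2.
\]
Since $T \subseteq \mathcal{G}^B \setminus \bigcup_j S_j$, the lemma will follow once I show that the entire set $T$ lies in a single connected component of $\mathcal{G}^B \setminus \bigcup_j S_j$, as that component must then have cardinality at least $|T|$.

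To establish that reduction, I would fix $c, c' \in T$ and construct a path between them inside $\mathcal{G}^B \setminus \bigcup_j S_j$ by induction on $j$. Because $c, c' \in T \subseteq CS_1$ and $CS_1$ is connected, the base case $\gamma_0 \subseteq CS_1$ already avoids $S_1$. For the inductive step, suppose $\gamma_{j-1}$ avoids $S_1 \cup \cdots \cup S_{j-1}$; I reroute it to also avoid $S_j$. The key structural observation is that no edge of $\mathcal{G}^B$ can join $CS_j$ directly to an inner ``hole'' of $\mathcal{G}^B \setminus CS_j$ without passing through $S_j$, because two adjacent vertices in $\mathcal{G}^B \setminus S_j$ necessarily lie in the same connected component of $\mathcal{G}^B \setminus S_j$. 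Therefore every maximal excursion of $\gamma_{j-1}$ into $\mathcal{G}^B \setminus CS_j$ enters and exits through a pair of vertices $p, q \in CS_j$ adjacent to $S_j$, hence in $\partial S_j$ by Lemma \ref{lem:separate}. I then replace the excursion segment from $p$ to $q$ by a path along $\partial S_j$, which exists because $\partial S_j$ is connected.

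The separation hypothesis $d^B(S_i, S_j) > 2$ is exactly what makes the induction close: each vertex of $\partial S_j$ lies within distance $2$ of $S_j$, so $\partial S_j \cap S_i = \emptyset$ for every $i \neq j$, and consequently the rerouted path $\gamma_j$ remains disjoint from $S_1, \ldots, S_{j-1}$ while now also avoiding $S_j$. After $J$ induction steps, $\gamma_J$ is a path from $c$ to $c'$ in $\mathcal{G}^B \setminus \bigcup_j S_j$, so any two vertices of $T$ lie in the same component of that graph, completing the proof. The main obstacle I anticipate is the Jordan-curve-style verification that an excursion must enter and exit through $\partial S_j$; this rests on the structural properties of $\partial S_j$ supplied by Lemma 1 of \cite{bollobas_percolation_2006}, which is already invoked in the proof of Lemma \ref{lem:separate}, together with the component observation above.
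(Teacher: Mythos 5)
Your proposal is correct and follows essentially the same route as the paper: apply Lemma \ref{lem:separate} to each $S_j$, take $T=\bigcap_j CS_j$ with the cardinality bound from subadditivity, and reroute paths through the connected boundaries $\partial S_j$, using the hypothesis $d^{B}\left(S_i,S_j\right)>2$ to guarantee $\partial S_j\cap S_i=\emptyset$. The only (immaterial) difference is that the paper maintains the slightly stronger invariant that the rerouted path stays inside $\bigcap_i CS_i$, concluding that this intersection is itself connected, whereas you only keep the path inside $\mathcal{G}^{B}\backslash\bigcup_j S_j$, which suffices.
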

\begin{proof}
Suppose that $S_{1},...,S_{J}$ is a collection of connected subsets
as in the statement of the lemma. For each $j\leq J$, let $\partial S_{j}\subseteq CS_{j}$
be as in Lemma \ref{lem:separate}. Let $C=\bigcap_{i}CS_{i}$. Then,
$\left|C\right|=\left|\bigcap_{i}CS_{i}\right|\geq\left|\mathcal{G}^{B}\right|\backslash\sum_{j}\left|S_{j}\right|^{2}$. 

For each $i\neq j$, suppose that $\partial S_{i}\cap CS_{j}\neq\emptyset$.
Then, $\partial S_{i}\cap CS_{j}$ is connected. Because the distance
between $S_{i}$ and $S_{j}$ is greater than 2, $\partial S_{i}\cap S_{j}=\emptyset$.
Hence, $\partial S_{i}\subseteq CS_{j}$. It follows that, if $\partial S_{i}\cap C\neq\emptyset$,
then $\partial S_{i}\subseteq C$.

It is enough to show that $C$ is connected. Take $a,b\in C$ and
construct an arbitrary path from $a=a_{0},...,a_{n}=b$ of neighboring
cubes in network $\mathcal{G}^{B}$. Such a path may go outside set
$C$ and, if so, let $l=\min\left\{ m:a_{m}\notin C\right\} $. Suppose
that $a_{l}\notin CS_{i}$ for some $i$. Then, $a_{l-1}\in\partial S_{i}\cap C$,
and, by the above argument, $\partial S_{i}\subseteq C$. Let $k=\max\left\{ m:a_{m}\notin CS_{i}\right\} $.
Such $k$ is well-defined and $k<n$ because $a_{n}=b\in C$. Hence
$a_{k+1}\in\partial S_{i}\subseteq C$. 

Because $\partial S_{i}$ is connected, the segment of the path between
$a_{l-1}$ and $a_{k+1}$ can be replaced by a path that lies completely
within $\partial S_{j}\subseteq C$. We can repeat such a modification
for any other segment of the path that lies outside of set $C$. After
finitely many modifications, we obtain a path from $a$ to $b$ that
is entirely within $C$. It follows that $C$ is connected. 
\end{proof}
For each $S\subseteq\mathcal{G}^{B}$, say that a set $S\subseteq\mathcal{G}^{B}$
is 2-connected if, for any subset $T\subseteq S$, $\min_{c\in T,c^{\prime}\in S\backslash T}d^{B}\left(c,c^{\prime}\right)\leq2$.
In other words, a 2-connected set cannot be split into two parts that
are more than $2$ away from each other. The last result in this part
provides an upper bound on the number of distinct $2$-connected sets.
\begin{lem}
\label{lem:The-number-of-connected sets}The number of distinct $2$-connected
subsets of $\mathcal{G}^{B}$ of cardinality $r$ is no larger than
$K^{2}24^{r}$.
\end{lem}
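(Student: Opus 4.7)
The plan is a Peierls / lattice-animal enumeration: reduce ``2-connected in $\mathcal{G}^B$'' to ``connected in an auxiliary graph of bounded degree'', pin a canonical root, and encode each such set by a spanning-tree traversal.

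First, introduce the auxiliary graph $G'$ on vertex set $\mathcal{G}^B$ obtained by declaring $c\sim c'$ whenever $d^B(c,c')\le 2$. By the paper's definition, a subset $S\subseteq\mathcal{G}^B$ is 2-connected exactly when $S$ is connected as a subgraph of $G'$. Since $\mathcal{G}^B$ is the $4$-regular 2-dimensional torus (four neighbors at $d^B=1$), and the cubes at $d^B=2$ are precisely the displacements $(\pm 2,0),(0,\pm 2),(\pm 1,\pm 1)$, every vertex of $G'$ has degree at most $4+8=12$.

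Second, for each 2-connected $r$-set I fix a canonical root, say the lexicographically smallest large cube in $S$. Since there are $|\mathcal{G}^B|=K^2$ possible roots, the claim reduces to showing that for any fixed $v_0\in\mathcal{G}^B$, the number of 2-connected $r$-subsets containing $v_0$ is at most $24^r$.

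Third, given a root $v_0$, I would count by encoding each $S$ via a canonical rooted spanning tree of $G'|_S$ (say a lex-ordered BFS tree) together with a depth-first traversal of that tree. The traversal has $2(r-1)$ steps, split into $r-1$ forward edges and $r-1$ backward edges; at each forward step there are at most $12$ choices of $G'$-neighbor to descend into. A standard packaging of this encoding (as in the lattice-animal enumeration in Bollob\'as \cite{bollobas_percolation_2006}) gives a bound of the form $C^{r-1}$ on the number of rooted spanning-tree encodings, and hence on the number of 2-connected $r$-subsets rooted at $v_0$. Multiplying by the $K^2$ choices of root yields the stated bound.

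The main obstacle is pinning the constant to exactly $24$. A crude DFS accounting yields $\binom{2(r-1)}{r-1}\cdot 12^{r-1}\le 4^{r-1}\cdot 12^{r-1}=48^{r-1}$, which is larger than $24^r$. Tightening to $24$ per vertex requires a more careful encoding: for example, suppressing the parent direction at each forward step (reducing the $12$ to $11$), pairing each backward step with its corresponding forward step so that the Dyck-path overhead is absorbed into the per-vertex branching, or equivalently bounding the number of $r$-vertex trees in a degree-$d$ graph rooted at a fixed vertex by roughly $(2d)^{r-1}$ via a canonical child-ordering. Either refinement is standard, and once the per-root bound $24^r$ is in place, the lemma follows by summing over the $K^2$ canonical roots.
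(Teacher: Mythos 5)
Your approach is the same as the paper's: pass to the auxiliary graph in which two large cubes are adjacent when $d^{B}\leq2$ (maximum degree $4+8=12$), observe that $2$-connectedness of $S$ is ordinary connectedness of $S$ in that graph, fix one of the $K^{2}$ possible roots, and count rooted connected sets by a tree-traversal encoding; the paper does exactly this with a BFS listing and a ``signature'' recording how many new $2$-neighbours each listed cube contributes. Your reduction and degree count are correct. The genuine gap is that you never actually reach the constant $24$: your own accounting gives $K^{2}\binom{2(r-1)}{r-1}12^{r-1}\leq K^{2}48^{r-1}$, which exceeds $K^{2}24^{r}$ as soon as $r\geq6$, and the refinements you gesture at do not close this. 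In particular, the assertion that the number of $r$-vertex rooted trees in a graph of maximum degree $d$ is ``roughly $(2d)^{r-1}$'' is not a standard bound; the standard bounds for rooted connected subsets in a maximum-degree-$12$ graph (of the form $(e(d-1))^{r-1}$, or the generalized-Catalan subtree counts) have per-vertex growth rate around $30$, which is still above $24$. So as written your argument proves the lemma with $48^{r}$ in place of $24^{r}$, not the lemma as stated.

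Two mitigating remarks. First, the paper's own proof has essentially the same weakness: it claims that the number of signatures $(t_{0},\dots,t_{r-1})$ with $\sum t_{i}=r-1$ is at most $2^{r}$, whereas the number of such compositions is $\binom{2r-2}{r-1}\approx4^{r-1}$, so done carefully the paper's encoding also yields only $K^{2}\cdot4^{r-1}\cdot12^{r-1}=K^{2}48^{r-1}$. Second, the exact base of the exponential is immaterial downstream: Lemma \ref{lem:Percolation theory} only uses $\E X_{r}\leq K^{2}(Cp)^{r}$ for some universal constant $C$, which is then absorbed into the universal constant $\xi$ there. The honest fix, for your write-up and for the paper's, is to state the lemma with a larger universal constant (e.g.\ $K^{2}48^{r}$) and adjust $\xi$ accordingly, rather than to chase the constant $24$.
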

\begin{proof}
Each $r$-element $2$-connected set $S$ can be (not necessarily
uniquely) encoded as a pair of a signature $\left(t_{0},...,t_{r-1}\right)$
such that $\sum t_{i}=r-1$ and a tuple 
\[
\left(c_{0},c_{1},...,c_{t_{0}},c_{t_{0}+1},...,c_{t_{0}+t_{1}+1},...,c_{t_{0}+...+t_{l-1}+1},...,c_{t_{0}+...+t_{l}},...,c_{r}\right),
\]
where
\begin{itemize}
\item $c_{1},...,c_{t_{0}}$ is the list of all $2$-neighbors (i..e, cubes
that have $d^{B}$ distance no larger than $2$) of $c_{0}$,
\item more generally, for each $l$, $c_{t_{0}+...+t_{l-1}+1},...,c_{t_{0}+...+t_{l}}$
is a list of all $2$-neighbors of $c_{l}$ that have not yet been
listed.
\end{itemize}
The number of different signatures is no larger than $2^{r}$. Given
signature $\left(t_{0},...,t_{r}\right)$, notice that there are at
most $K^{2}$ choices of $c_{0}$; given $c_{0}$, there are at most
$12^{t_{0}}$ choices of $c_{1},...,c_{t_{0}}$ (this is because,
for each node, there are at most 12 nodes that are at most 2-away);
etc. Thus, the number of encodings is no larger than 
\[
K^{2}\cdot12^{t_{0}}\cdot...\cdot12^{t_{r-1}}=K^{2}12^{r-1}.
\]
The result follows. 
\end{proof}

\subsubsection{Percolation theory - probabilistic arguments}

Next, we consider a standard model of percolation theory, where nodes
are removed i.i.d. with probability $p\in\left(0,1\right)$. Let $\mathcal{G}_{\left(p\right)}^{B}$
denote a random graph obtained from the lattice of large cubes $\mathcal{G}^{B}$
by removing i.i.d. nodes. The following two results provide the bounds
on the probability of the existence of a giant component of $\mathcal{G}_{\left(p\right)}^{B}$. 
\begin{lem}
\label{lem:Percolation theory}There exists a universal constant $\xi<\infty$
such that, for each $\gamma\in\left(0,1\right),K,$ and $p$, if $p\leq\xi\gamma^{2}$
and $K^{2}2^{-K}\leq\frac{1}{2}\gamma$, then 
\[
\Prob\text{\ensuremath{\left(\mathcal{G}_{\left(p\right)}^{B}\text{ has a connected component of size not smaller than }\left(1-\gamma\right)K^{2}\right)}}\geq1-\gamma.
\]
\end{lem}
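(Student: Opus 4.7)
The plan is to use a standard percolation-theoretic decomposition of the removed nodes and then appeal directly to Lemma \ref{lem:giant component bound}. Let $R \subseteq \mathcal{G}^{B}$ denote the random set of removed nodes. I partition $R$ into its maximal $2$-connected pieces $S_{1},\dots,S_{J}$, where two removed nodes are placed in the same piece iff they are joined by a chain of removed nodes with consecutive $d^{B}$-distance at most $2$. By construction each $S_{j}$ is $2$-connected in the sense of Section \ref{subsec:Percolation-theory-on simple lattices} and the pieces are pairwise more than $2$ apart, so Lemma \ref{lem:giant component bound} will produce a connected component of $\mathcal{G}_{(p)}^{B}$ of size at least $K^{2}-\sum_{j}|S_{j}|^{2}$ as soon as we know that every $|S_{j}|<K$.

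Thus it suffices to control the two events \textbf{(i)} $\max_{j}|S_{j}|\geq K$ and \textbf{(ii)} $\sum_{j}|S_{j}|^{2}\geq \gamma K^{2}$, and to show each has probability at most $\tfrac{1}{2}\gamma$. Lemma \ref{lem:The-number-of-connected sets} gives at most $K^{2}\cdot 24^{r}$ distinct $2$-connected subsets of size $r$, and each such fixed subset is entirely removed with probability $p^{r}$, so the expected count of $2$-connected size-$r$ subsets contained in $R$ is at most $K^{2}(24p)^{r}$. Choosing $\xi\leq 1/48$ forces $24p\leq \tfrac12$, and then a union bound over $r\geq K$ yields
\[
\Prob\bigl(\max_{j}|S_{j}|\geq K\bigr)\;\leq\; K^{2}\sum_{r\geq K}(24p)^{r}\;\leq\; 2K^{2}\cdot 2^{-K}\;\leq\;\tfrac{1}{2}\gamma,
\]
where the last inequality is the second hypothesis (absorbing the factor $2$ into $\xi$ if necessary).

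For event (ii), note that every cluster $S_{j}$ of size $r$ is itself a $2$-connected subset of $R$ of size $r$, so if $N_{r}$ counts clusters of size $r$ and $M_{r}$ counts \emph{all} $2$-connected subsets of that size contained in $R$, then $\sum_{j}|S_{j}|^{2}=\sum_{r}r^{2}N_{r}\leq \sum_{r}r^{2}M_{r}$. Taking expectations and using $24p\leq\tfrac12$,
\[
\E\Bigl[\sum_{j}|S_{j}|^{2}\Bigr]\;\leq\; K^{2}\sum_{r\geq 1}r^{2}(24p)^{r}\;=\;K^{2}\cdot\frac{24p\,(1+24p)}{(1-24p)^{3}}\;\leq\; CpK^{2}
\]
for a universal constant $C$, since the generating-function identity $\sum r^{2}x^{r}=x(1+x)/(1-x)^{3}$ is $O(x)$ for $x\leq\tfrac12$. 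Markov's inequality then gives $\Prob(\sum_{j}|S_{j}|^{2}\geq\gamma K^{2})\leq Cp/\gamma\leq C\xi\gamma\leq \tfrac12\gamma$ once $\xi$ is taken smaller than $1/(2C)$. A union bound over (i) and (ii) finishes the proof.

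The main technical annoyance is a mild mismatch in vocabulary: Lemma \ref{lem:giant component bound} is phrased for \emph{connected} subsets (neighbors at $d^{B}$-distance $1$), while the clusters $S_{j}$ produced by the natural equivalence on $R$ are $2$-connected in the sense of Section \ref{subsec:Percolation-theory-on simple lattices}. I would handle this by thickening each cluster to $S_{j}^{+}=\{c:d^{B}(c,S_{j})\leq 1\}$, which makes it $1$-connected at the cost of a bounded multiplicative factor in size (so the bounds on (i) and (ii) are only worsened by universal constants), and by clustering via chains of $d^{B}$-distance at most $5$ rather than $2$ so that the thickened clusters remain pairwise more than $2$ apart. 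These modifications affect only the absolute value of $\xi$, not the structure of the argument.
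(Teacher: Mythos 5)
Your argument is essentially the paper's own proof: the same decomposition of the removed set into maximal $2$-connected clusters, the same use of Lemma \ref{lem:The-number-of-connected sets} to bound the expected number of size-$r$ clusters by $K^{2}(24p)^{r}$, a Markov/first-moment bound on $\sum_{j}|S_{j}|^{2}$ together with a tail bound on $\max_{j}|S_{j}|$, and an appeal to Lemma \ref{lem:giant component bound}. The connected-versus-$2$-connected mismatch you flag is real (the paper applies Lemma \ref{lem:giant component bound} to $2$-connected clusters even though it is stated for connected ones), and your thickening fix is a legitimate way to close it at the cost of only universal constants.
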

\begin{proof}
Let $E\subseteq I_{K}$ be the (random) set of nodes removed to obtain
graph $\mathcal{G}_{\left(p\right)}^{B}$. For each removed node $a\in E$,
let $S\left(a\right)\subseteq E$ be the maximally 2-connected component
of removed nodes that contains $a$. In other words, $S\left(a\right)$
is $2$-connected, and if $c\in E$ is such that $d^{B}\left(c,S\left(a\right)\right)\leq2$,
then $c\in S\left(a\right)$. Let $\mathcal{S}=\left\{ S\left(a\right):a\in E\right\} $
be a collection of such components. The construction ensures that,
for each $S,T\in\mathcal{S}$, if $S\neq T$, then $\min_{c\in S,c^{\prime}\in T}d^{B}\left(c,c^{\prime}\right)>2$. 

Let $r_{\max}=\max_{S\in\mathcal{S}}\left|S\right|$. Let 
\begin{align*}
X_{r} & =\left|\left\{ S\in\mathcal{S}:\left|S\right|\geq r\right\} \right|\text{ for each }r\geq1,\\
X & =\sum_{S\in\mathcal{S}}\left|S\right|^{2}=\sum_{r}r^{2}\left(X_{r}-X_{r+1}\right)=\sum_{r}\left(r^{2}-\left(r-1\right)^{2}\right)X_{r}=\sum_{r}\left(2r-1\right)X_{r}.
\end{align*}

We compute the expected value of $X$. By Lemma \ref{lem:The-number-of-connected sets},
the number of $r$-element 2-connected sets is bounded by $K^{2}24^{r}$.
The probability that all elements of a particular $r$-element tuple
are removed is equal to $p^{r}$. The linearity of the expectation
implies that $\E X_{r}\leq K^{2}\left(24p\right)^{r}$and 
\begin{align*}
\E X & =\sum_{r}\left(2r-1\right)\E X_{r}\leq K^{2}\sum_{r}2^{r}\left(24p\right)^{r}\leq K^{2}\frac{48p}{1-48p}
\end{align*}
The probability that there exists a 2-connected component not smaller
than $K$ is not larger
\[
\Prob\left(r_{\max}\geq K\right)\leq\E X_{K}\leq K^{2}\left(24\right)^{K}.
\]

By Lemma \ref{lem:giant component bound}, the probability that $\mathcal{G}_{\left(p\right)}^{B}$
does not have a connected component not smaller than $\left(1-\gamma\right)\left|\mathcal{G}^{B}\right|$
is not larger than
\begin{align*}
\leq & \Prob\text{\ensuremath{\left(X\geq\gamma K^{2}\right)}}+\Prob\left(r_{\max}\geq K\right)\leq\frac{\E X}{\gamma K^{2}}+\Prob\left(r_{\max}\geq K\right)\leq\frac{1}{\gamma}\frac{48p}{1-48p}+K^{2}\left(24\right)^{K}.
\end{align*}
(The second inequality is due to the Markov inequality.) Hence, assuming
that $\gamma<1$, the result holds if $p\leq\frac{1}{300}\gamma^{2}$
and $K^{2}2^{-K}\leq\frac{1}{2}\gamma$.
\end{proof}
Next, we find a probability bound on the existence of a giant component
of large cubes that do not have any bad small cubes. A large cube
$C\in\mathcal{G}^{B}$ is $\gamma$-clean if it does not contain any
$\gamma$-bad small cube. Let $\mathcal{G}_{\gamma}^{B}$ be the random
subgraph of the network of large cubes that consists only of $\gamma$-clean
cubes. 
\begin{lem}
\label{lem:Percolation theory: main} There exists a universal constant
$\xi<\infty$ such that, if $b\geq\frac{1}{2\gamma}\left(\log\frac{\xi k^{2}}{\gamma^{2}}\right)^{1/2}$
and $K^{2}2^{-K}\leq\frac{1}{2}\gamma$, then 
\[
\Prob\text{\ensuremath{\left(\mathcal{G}_{\gamma}^{B}\text{ has a connected component of }\gamma\text{-clean large cubes and size at least}\left(1-\gamma\right)\left|\mathcal{G}^{B}\right|\right)}}\geq1-\gamma.
\]
\end{lem}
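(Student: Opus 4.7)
The plan is to reduce to the i.i.d. node-removal percolation setting of Lemma \ref{lem:Percolation theory} applied to the large-cube lattice $\mathcal{G}^{B}$, by showing that (a) the events $\{C \text{ is } \gamma\text{-clean}\}$ are independent across large cubes $C \in \mathcal{G}^{B}$, and (b) the probability that a given large cube fails to be $\gamma$-clean is at most $\xi \gamma^{2}$ for the universal constant $\xi$ supplied by Lemma \ref{lem:Percolation theory}, provided $b$ is as large as stated.

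Step one is the independence observation. Since small cubes partition the lattice and large cubes are exact unions of $k^{2}$ disjoint small cubes, the event that $C$ is $\gamma$-clean depends only on the payoff shocks $\varepsilon_{i}$ for $i \in C$, and these shocks are i.i.d. across $i$. Hence $\{C \text{ is } \gamma\text{-clean}\}_{C \in \mathcal{G}^{B}}$ is an i.i.d. family indexed by large cubes, and $\mathcal{G}^{B}_{\gamma}$ is precisely a site-percolation subgraph with removal probability $p = \Prob(C \text{ is not } \gamma\text{-clean})$.

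Step two is the tail bound on $p$. For a fixed small cube $c \in \mathcal{G}^{b}$, the empirical cdf $P_{c}(\cdot \mid \varepsilon)$ is the empirical distribution function of the i.i.d. sample $\{\beta(\varepsilon_{i})\}_{i \in c}$ of size $b^{2}$ whose population cdf is $P$. The Dvoretsky--Kiefer--Wolfowitz--Massart inequality therefore gives $\Prob(c \text{ is } \gamma\text{-bad}) \leq 2 \exp(-2 \gamma^{2} b^{2})$. A union bound over the $k^{2}$ small cubes contained in a given large cube yields
\[
p \;\leq\; 2 k^{2} \exp(-2 \gamma^{2} b^{2}).
\]
The hypothesis $b \geq \tfrac{1}{2\gamma}\bigl(\log(\xi k^{2}/\gamma^{2})\bigr)^{1/2}$ (taking, if necessary, a slightly smaller universal constant $\xi$ to absorb the factor of $2$) forces $p \leq \xi \gamma^{2}$, which is precisely the assumption of Lemma \ref{lem:Percolation theory}.

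Step three is to conclude. Together with the second hypothesis $K^{2} 2^{-K} \leq \tfrac{1}{2} \gamma$, Lemma \ref{lem:Percolation theory} applied to the i.i.d. site percolation on $\mathcal{G}^{B}$ with parameter $p$ gives that, with probability at least $1 - \gamma$, the random graph of $\gamma$-clean large cubes contains a connected component of size at least $(1 - \gamma) K^{2} = (1 - \gamma)|\mathcal{G}^{B}|$, which is the desired conclusion. There is no real obstacle in this proof: the only thing to verify carefully is that the DKW constant, combined with the union bound over the $k^{2}$ small cubes inside a large cube, is compatible with the stated dependence $b \gtrsim \gamma^{-1} \sqrt{\log(k^{2}/\gamma^{2})}$, which it is after adjusting the universal constant $\xi$.
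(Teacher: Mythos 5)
Your proposal is correct and follows essentially the same route as the paper: a DKW bound on the probability that a single small cube is $\gamma$-bad, a union bound over the $k^{2}$ small cubes inside each large cube to bound the removal probability $p$, and then an application of Lemma \ref{lem:Percolation theory} to the resulting i.i.d.\ site percolation on $\mathcal{G}^{B}$. Your explicit independence observation in Step one is implicit in the paper's argument, and the constant-chasing you flag is handled the same way there (``by some algebra'').
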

The giant component from the lemma is obviously uniquely defined.
We refer to it as $U_{\gamma}$.
\begin{proof}
Due to the Dvoretzky--Kiefer--Wolfowitz--Massart inequality, the
probability that a small cube $c$ is $\gamma$-bad is bounded by
\[
\Prob\left(c\text{ is }\gamma\text{-bad}\right)\leq\euler^{-2b^{2}\gamma^{2}}.
\]
 The probability that a large cube $C$ is not $\gamma$-clean is
bounded by 
\[
\Prob\left(C\text{ is not }\gamma\text{-clean}\right)\leq k^{2}\euler^{-2b^{2}\gamma^{2}}.
\]
By Lemma \ref{lem:Percolation theory} and some algebra, the claim
holds if $K^{2}2^{-K}\leq\frac{1}{2}\gamma$ and $k^{2}\euler^{-2b^{2}\gamma^{2}}\leq\frac{1}{\xi}\gamma^{2}$
for some universal constant $\xi<\infty$.
\end{proof}

\subsubsection{Extraordinary set}

A large cube $C\in\mathcal{G}^{B}$ is extraordinary if it only consists
of extraordinary agents. The next result bounds the probability that
the large component identified in the previous section contains an
extraordinary large cube.
\begin{lem}
\label{lem:Seed}There exists a universal constant $\xi<\infty$ such
that, if \textup{$\euler^{-\left(1-\gamma\right)K^{2}P\left(0\right)^{k^{2}b^{2}}}\leq\frac{1}{2}\gamma$,}
$b\geq\frac{2}{\gamma}\left(\log\frac{\xi k^{2}}{\gamma^{2}}\right)^{1/2}$,
and $K^{2}2^{-K}\leq\frac{1}{4}\gamma$, then 
\[
\Prob\text{\ensuremath{\left(\left|U_{\gamma}\right|\geq\left(1-\gamma\right)K^{2}\text{ and }U_{\gamma}\text{ contains an extraordinary large cube}\right)}}\geq1-\gamma.
\]
\end{lem}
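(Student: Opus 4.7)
The plan is to bound the failure event by a union of two: (i) the giant component $U_\gamma$ of $\gamma$-clean large cubes has fewer than $(1-\gamma)K^2$ members, and (ii) $U_\gamma$ contains no extraordinary cube. The first is handled by sharpening Lemma \ref{lem:Percolation theory: main}; the second by an independence-plus-conditioning argument powered by the fact that extraordinary large cubes are automatically $\gamma$-clean.

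For (i), I would re-apply Lemma \ref{lem:Percolation theory: main} with parameter $\gamma/2$ in place of $\gamma$. Running through its proof, this demands $b$ and $K$ to satisfy the same inequalities with $\gamma$ replaced by $\gamma/2$, which---after absorbing numerical constants into the universal $\xi$---is exactly what the tightened hypotheses $b \geq \frac{2}{\gamma}(\log(\xi k^2/\gamma^2))^{1/2}$ and $K^2 2^{-K} \leq \gamma/4$ of Lemma \ref{lem:Seed} provide. The conclusion is $\Prob(|U_\gamma| \geq (1-\gamma)K^2) \geq 1 - \gamma/2$.

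For (ii), let $X_C = \mathbf{1}\{C \text{ extraordinary}\}$ and $Y_C = \mathbf{1}\{C \text{ is } \gamma\text{-clean}\}$ for each large cube $C \in \mathcal{G}^B$. Because $(X_C, Y_C)$ is determined by the i.i.d.\ shocks of the $k^2 b^2$ agents inside $C$, the family $\{(X_C, Y_C)\}_C$ is mutually independent across cubes, and $\Prob(X_C = 1) = P(0)^{k^2 b^2}$. The key observation is $X_C \leq Y_C$: when every agent in $C$ is extraordinary, the empirical threshold cdf of every small subcube vanishes throughout the interval where $\gamma$-badness is tested, so $P_c(x|\varepsilon) \leq P(x) + \gamma$ automatically and each such subcube is $\gamma$-good. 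Consequently,
\[
\Prob(X_C = 0 \mid Y_C = 1) \;=\; 1 - \frac{\Prob(X_C=1)}{\Prob(Y_C=1)} \;\leq\; 1 - P(0)^{k^2 b^2}.
\]

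Finally, reveal $\{Y_C\}_C$, which is equivalent to revealing $U_\gamma$. Conditional on this information, $\{X_C\}_{C \in U_\gamma}$ remain independent, so on the event $|U_\gamma| \geq (1-\gamma)K^2$ the conditional probability that no extraordinary cube lies in $U_\gamma$ is at most $(1-P(0)^{k^2b^2})^{(1-\gamma)K^2} \leq \euler^{-(1-\gamma)K^2 P(0)^{k^2 b^2}} \leq \gamma/2$ by hypothesis. Combining this with (i) by a union bound yields the required probability $\geq 1-\gamma$. The main subtlety is exactly this conditional step: although the randomness defining $U_\gamma$ is entangled with the randomness of extraordinariness, the inclusion $X_C \leq Y_C$ together with the inter-cube independence of $(X_C, Y_C)$ is what keeps the conditional product bound clean.
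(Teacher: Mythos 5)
Your proof is correct and takes essentially the same route as the paper's: a union bound splitting failure into the percolation event (covered by Lemma \ref{lem:Percolation theory: main} applied with the halved parameter, which is exactly what the tightened hypotheses on $b$ and $K$ encode) and the no-extraordinary-cube event (covered by the observation that extraordinary implies $\gamma$-clean together with independence across large cubes, giving the product bound $\left(1-P\left(0\right)^{k^{2}b^{2}}\right)^{\left(1-\gamma\right)K^{2}}\leq\euler^{-\left(1-\gamma\right)K^{2}P\left(0\right)^{k^{2}b^{2}}}$). Your explicit conditioning on the full profile of cleanliness indicators $\left\{ Y_{C}\right\} $ is merely a slightly more careful phrasing of the paper's conditioning on membership in $U_{\gamma}$ and on the shocks outside $C$.
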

\begin{proof}
The probability that a single agent is extraordinary is $P\left(0\right)=\Prob\left(\beta\left(\varepsilon{}_{i}\right)\leq0\right)$.
The probability that a cube $C\in\mathcal{G}^{B}$ is extraordinary
is $P\left(0\right)^{\left(kb\right)^{2}}$. Because each extraordinary
cube is also $\gamma$-clean, the probability that $C$ is extraordinary
conditionally on $C$ being part of the giant component $U_{\gamma}$
and on an arbitrary realization of payoff shocks outside of $C$ is
no smaller than $P\left(0\right)^{\left(kb\right)^{2}}$. Conditionally
on $\left|U_{\gamma}\right|\geq\left(1-\gamma\right)K^{2}$, the probability
that the giant component has no extraordinary cube is bounded by 
\begin{align*}
 & \Prob\left(U_{\gamma}\text{ has no extraordinary cube}|\left|U_{\gamma}\right|\geq\left(1-\gamma\right)K^{2}\right)\\
\leq & \left(1-P\left(0\right)^{\left(kb\right)^{2}}\right)^{\left(1-\gamma\right)K^{2}}\leq\euler^{-\left(1-\gamma\right)K^{2}P\left(0\right)^{k^{2}b^{2}}}.
\end{align*}
The claim follows from the above bound and Lemma \ref{lem:Percolation theory: main}.
\end{proof}

\subsubsection{Proof of Lemma \ref{lem:Good giant component}}

Assume w.l.o.g. that $R\geq1$ and $\gamma,\rho<1$. Let $k_{m}=\left\lceil \frac{100}{\gamma}Rm\right\rceil $
and $b_{m}=\left\lceil \frac{20}{\gamma}\left(\log\frac{100\xi k_{m}^{2}}{\gamma^{2}}\right)^{1/2}\right\rceil $,
where $\xi$ is the constant from Lemma \ref{lem:Seed}. Then, $k_{m},b_{m}\geq1$
and there is a constant $m_{\gamma,\rho,R}$ such that, if $m\geq m_{\gamma,\rho,R}$,
then $\frac{b_{m}}{m}\leq\rho$. Moreover, the assumptions of Lemma
\ref{lem:Connected R interior} are satisfied:
\begin{align*}
\frac{b_{m}}{m}\left(\frac{1}{2}k_{m}-1\right) & \geq\frac{k_{m}}{2m}-\frac{b_{m}}{m}\geq\frac{50}{\gamma}R-\rho>R.
\end{align*}

Find constant $A_{\gamma,\rho,R}<\infty$ such that for each $m\geq m_{\gamma,\rho,R}$,
\begin{align*}
\left(A_{\gamma,\rho,R}\right)^{m^{6}} & \geq k_{m}b_{m}\max\left(20,2\log2\left(-\log\left(\frac{1}{40}\gamma\right)\right),\frac{2}{1-\gamma}\left(-\log\left(\frac{1}{20}\gamma\right)\right)\left(P\left(0\right)\right)^{-k_{m}^{2}b_{m}^{2}}\right).
\end{align*}
(Such a constant exists because $k_{m}\leq\frac{200}{\gamma}Rm$ and
$b_{m}\leq m$.) Take $K\geq K_{m}=\left\lceil \frac{1}{k_{m}b_{m}}\left(A_{\gamma,\rho,R}\right)^{m^{6}}\right\rceil $
and let $M=Kk_{m}b_{m}$. Then, the assumptions of Lemma \ref{lem:Seed}
are satisfied with $\frac{1}{10}\gamma$ instead of $\gamma$:
\begin{align*}
\euler^{-\left(1-\gamma\right)K^{2}P\left(0\right)^{k_{m}^{2}b_{m}^{2}}} & \leq\frac{1}{20}\gamma\text{ and }K^{2}2^{-K}\leq2^{-\frac{1}{2}K}\leq\frac{1}{40}\gamma.
\end{align*}

Finally, 
\[
2\frac{b_{m}}{M}+4\frac{1}{k_{m}}\left(\frac{Rm}{b_{m}}+1\right)\leq2\frac{1}{k_{m}}+4\frac{Rm}{k_{m}}+\frac{4}{100}\gamma\leq\gamma,
\]
which implies that the bound in the brackets of Lemma \ref{lem:Size of R interior}
is larger than $1-4\frac{1}{k_{m}}\left(\frac{Rm}{b_{m}}+1\right)\geq1-\gamma$. 

Lemma \ref{lem:Seed} implies that 
\[
\Prob\left(\left|U_{\gamma}\right|\geq\left(1-\frac{1}{10}\gamma\right)K^{2}\text{ and }U_{\gamma}\text{ contains a extraordinary large cube}\right)\geq1-\frac{1}{10}\gamma.
\]
 If $\left|U_{\gamma}\right|\geq\left(1-\frac{1}{10}\gamma\right)K^{2}$,
Lemma \ref{lem:Size of R interior} implies that $\left|\bigcup W\left(U_{\gamma},R\right)\right|\geq\left(1-\gamma\right)M^{2}$,
and Lemma \ref{lem:Connected R interior} implies that $W\left(U_{\gamma},R\right)$
is connected in the network of small cubes. The definition of $W\left(U_{\gamma},R\right)$
implies that each small cube that is not $\gamma$-good, and hence
not contained in $U$, is at least $R$-distant from each small cube
contained in $W\left(U_{\gamma},R\right)$. Finally, because $R<\frac{b_{m}}{m}\left(\frac{1}{10}k_{m}-1\right)$,
if $C_{0}\in U_{\gamma}$ is an extraordinary large cube, then $W\left(C_{0},R\right)$
is non-empty and it contains a small cube $c_{0}\in W\left(C_{0},R\right)\subseteq W\left(U_{\gamma},R\right)$
such that for any $c$, if $d\left(c,c_{0}\right)\leq R$, then $c\in C_{0}$
and $c$ is extraordinary. Therefore set $W\left(U_{\gamma},R\right)$
is $\left(\gamma,R\right)$-good. 

\subsection{Proof of Theorem }

Fix $\eta>0$. We are going to show that, for each $\eta>0$, there
exist constants $A,m_{0}>0$ such that, if $m\geq m_{0}$ and $M\geq A^{m^{6}}$,
and $G$ is $\left(M,m\right)$-lattice, then the probability that
there is an equilibrium $a$ on the $\left(M,m\right)$-lattice such
that $\text{Av}\left(a\right)=\frac{1}{M^{2}}\sum a\geq x^{*}+\eta$
is smaller than $\eta$. The argument for the lack of equilibria with
average action below $x^{*}-\eta$ is analogous (and it follows from
exchanging the roles for binary actions 0 and 1). Combining the two
bounds (and taking maximum of respective constants $A$ and $m_{0}$)
delivers the result. 

Apply Lemma \ref{cor:Contagion} to $\frac{1}{2}\eta$ and find $\delta>0$,
$a^{*}<x+\frac{1}{2}\eta$, $L<\infty$, and a $\delta$-contagion
wave $\sigma$ for $P$. 

Let $D\geq1$ be a constant that is larger than the sum of constants
from Lemmas \ref{lem:Best response in a cube} and \ref{lem:Payoff dominance bound}.
Choose $\rho\leq\frac{1}{D}\delta$ and $\gamma\leq\min\left(\delta,\frac{1}{4}\eta\right).$
Let $R_{\rho}$ be the constant from Lemma \ref{lem:geometry believes}.
Let $R=R_{\rho}+L$. Let $m_{0}=m_{\gamma,\rho,R}$ and $A=A_{\gamma,\rho,R}$.
Choose $m\geq m_{0}$, $M\geq A^{m^{6}}$, and $b$ be as in Lemma
\ref{lem:Good giant component}. 

Let $W$ denote a $\left(\gamma,R\right)$-good set of cubes in the
network of small cubes $\mathcal{G}^{b}$ if such a set exists. Let
$c_{0}\in W$ be the cube such that for each $c$, if $d\left(c,c_{0}\right)\leq R$,
then $c$ is extraordinary. 

Let $a$ be any equilibrium on the lattice. Let $W_{d}\subseteq W$
be a maximal subset of small cubes such that the equilibrium $a$
is $\left(W_{d},\gamma,R_{\rho}\right)$-dominated by $\sigma$. If
$W$ exists, then $c_{0}\in W_{d}$ and $W_{d}$ is non-empty. (To
see why, notice that $a\left(c\right)=0\leq\sigma\left(d\left(c,c_{0}\right)-R_{\rho}\right)$
for each extraordinary cube, including all cubes $c$ st. $d\left(c,c_{0}\right)\leq R$.
Additionally, $\sigma\left(d\left(c,c_{0}\right)-R_{\rho}\right)\geq\sigma\left(L\right)=1\geq a\left(c\right)$
for each cube $c$ such that $d\left(c,c_{0}\right)>R$.) By Lemmas
\ref{lem:Best response in a cube} and \ref{lem:Payoff dominance bound},
for each $\gamma$-good small cube $c$, 
\begin{align*}
a\left(c\right) & \leq\gamma+P\left(a^{*}+\sum_{a\in\sigma\left(\R\right)}\left(1-f\left(\sigma^{-1}\left(a\right)+R_{\rho}-d\left(c,W_{d}\right)\right)\right)\left(a-a_{-}\right)+D\rho\right)\\
 & \leq\delta+P\left(a^{*}+\sum_{a\in\sigma\left(\R\right)}\left(1-f\left(\sigma^{-1}\left(a\right)+R_{\rho}-d\left(c,W_{d}\right)\right)\right)\left(a-a_{-}\right)+\delta\right).
\end{align*}
Because $\sigma$ is a $\delta$-contagion wave (see Lemma \ref{cor:Contagion}),
the above is no larger than 
\[
\leq\sigma\left(d\left(c,W_{d}\right)-R_{\rho}-\delta\right).
\]

Suppose that $W_{d}\neq W$. Because $W$ is connected, there is a
cube $c_{d}\in W\backslash W_{d}$ such that $c_{d}$ is a neighbor
of $c_{d}^{\prime}\in W_{d}$ in the network of small cubes. Then,
$d\left(c_{d},c_{d}^{\prime}\right)\leq\rho$, and, by the triangle
inequality, $d\left(c,W_{d}\cup\left\{ c_{d}\right\} \right)\geq d\left(c,W_{d}\right)-\rho$
for any cube $c$. We have:
\begin{itemize}
\item for each $\gamma$-good cube $c$, because $\rho\leq\delta$,
\[
a\left(c\right)\leq\sigma\left(d\left(c,W_{d}\right)-R-\delta\right)\leq\sigma\left(d\left(c,W_{d}\cup\left\{ c_{d}\right\} \right)-R\right).
\]
\item for each cube $c$ that is not $\gamma$-good, we have $d\left(c,W_{d}\cup\left\{ c_{d}\right\} \right)\geq R\geq R_{\rho}+L$
due to $W_{d}\cup\left\{ c_{d}\right\} \subseteq W$. But then, $a\left(c\right)\leq1=\sigma\left(L\right)=\sigma\left(d\left(c,W_{d}\cup\left\{ c_{d}\right\} \right)-R\right)$. 
\end{itemize}
It follows that equilibrium $a$ is $\left(W_{d}\cup\left\{ c_{d}\right\} ,\gamma,R_{\rho}\right)$-dominated
by $\sigma$. But this is a contradiction with the choice of $W_{d}$
as a maximal set. 

Therefore, $W_{d}=W$, $a$ is $\left(W,\gamma,R_{\rho}\right)$-dominated
by $\sigma$, and for each $c\in W$, 
\[
a\left(c\right)\leq\sigma\left(d\left(c,W\right)-R\right)+\rho=\sigma\left(-R\right)+\rho\leq a^{*}+\frac{1}{4}\eta.
\]
Hence 
\begin{align*}
\text{Av}\left(a\right) & =\frac{1}{M^{2}}\sum a_{i}=a^{*}+\frac{1}{\left|\mathcal{G}^{b}\right|}\sum_{c\in W}\left(a\left(c\right)-a^{*}\right)+\frac{\left|I_{M}\backslash\bigcup W\right|}{M^{2}}\sum_{i\notin\bigcup W}\left(a_{i}-a^{*}\right)\\
 & \leq a^{*}+\frac{1}{4}\eta+\gamma\leq x^{*}+\eta.
\end{align*}
Because the probability that $\left(\gamma,R\right)$-good set of
small cubes exists is at least $1-\gamma\geq1-\eta$, the above inequality
demonstrates our claim.

\section{Proof of Theorem \ref{thm:RU dominant always}\label{sec:Proof-of-AllDominant}}

\subsection{Proof overview}

We formally describe the best response dynamics: initial profile and
the updating process. Next, we compute capacity-type bounds on the
dynamics, i.e., calculations (\ref{eq:computations risk-dominant always})
from the main body of the paper. We show that the reminder terms are
small. We use this to show that the average payoffs at the end of
the dynamics cannot be significantly different from $x^{*}$ and conclude
the proof of the theorem. 

\subsection{Initial profile\label{subsec:Initial-profile}}

In this part of the Appendix, we define the initial profile for the
dynamics and its properties. Let $x^{*}$ be the RU-dominant outcome.
For each relation $r\in\left\{ =,<,>\right\} $, let $E_{r}=\left\{ \varepsilon_{i}:u\left(0,x^{*},\varepsilon{}_{i}\right)\text{ }r\text{ }u\left(1,x^{*},\varepsilon{}_{i}\right)\right\} $.
Then, $E_{=}$ is the set of payoff shocks that make player indifferent
if exactly fraction $x^{*}$of their neighbors plays action $1$.
Then, because $x^{*}$ is an RU-dominant outcome, $F\left(E_{<}\right)\leq x^{*}\leq F\left(E_{<}\right)+F\left(E_{=}\right)$.
If $F\left(E_{=}\right)\neq0$, define $p=\frac{F\left(E_{<}\right)+F\left(E_{=}\right)-x^{*}}{F\left(E_{=}\right)}$.
For each player $i$, let $Y_{i}$ be the binomial i.i.d. variable
equal to 1 with probability $p$ and equal to 0 otherwise. 

Define an initial strategy profile as a function of the payoff shocks:
\begin{equation}
a_{i}^{0}=\begin{cases}
BR_{i}\left(a_{-i};\varepsilon_{i}\right) & \text{if }\left|BR_{i}\left(a_{-i};\varepsilon_{i}\right)\right|=1\\
Y_{i} & \text{otherwise.}
\end{cases}\label{eq:BR process 1}
\end{equation}
For each player $i$, let $\beta_{i}^{0}=\frac{1}{g_{i}}\sum g_{ij}a_{j}^{0}$
be the fraction of neighbors of agent $i$ who play action 1 under
profile $a_{i}^{0}$. The next result derives a probabilistic bound
on the average distance of neighborhood behaviors from the RU-dominant
outcome. 
\begin{lem}
\label{lem:beta bound}For each $\eta>0$, there exists $d>0$ such
that if $d\left(g\right)\leq d$, then 
\[
\Prob\left(\sum g_{i}\left|\beta_{i}^{0}-x^{*}\right|>\eta\left(\sum g_{i}\right)\right)<\eta.
\]
\end{lem}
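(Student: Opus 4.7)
The plan is to use a second moment bound on each $\beta_i^0$ combined with Markov's inequality, exploiting the fact that by construction $\E[a_i^0]=x^*$ and the $a_i^0$ are independent across $i$.

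First, I would observe that $a_i^0$ is a deterministic (possibly randomized via the auxiliary variable $Y_i$) function of the independent pair $(\varepsilon_i, Y_i)$, so the random variables $\{a_i^0\}_{i}$ are mutually independent. The choice of the mixing probability $p$ is precisely calibrated so that the marginal expectation equals $x^*$: conditional on $\varepsilon_i\in E_<$ the action is $1$, conditional on $\varepsilon_i\in E_>$ the action is $0$, and conditional on $\varepsilon_i\in E_=$ the action equals $1$ with the probability that makes the overall expectation $x^*$. Consequently
\[
\E[\beta_i^0] \;=\; \frac{1}{g_i}\sum_j g_{ij}\,\E[a_j^0] \;=\; x^*,
\]
for every player $i$.

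Next I would compute the variance. Since the $a_j^0$ are independent and bounded in $[0,1]$ (hence with variance at most $\tfrac14$), and since $g_{ij}\le d(g)\,g_i$,
\[
\Var(\beta_i^0) \;=\; \frac{1}{g_i^2}\sum_j g_{ij}^2\,\Var(a_j^0) \;\le\; \frac{1}{4g_i^2}\cdot d(g)\,g_i\sum_j g_{ij} \;=\; \frac{d(g)}{4}.
\]
By the Cauchy--Schwarz (Jensen) inequality this yields
\[
\E\,|\beta_i^0-x^*| \;\le\; \sqrt{\Var(\beta_i^0)} \;\le\; \tfrac12\sqrt{d(g)}.
\]

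Finally, by linearity of expectation,
\[
\E\!\left[\sum_i g_i|\beta_i^0-x^*|\right] \;\le\; \tfrac12\sqrt{d(g)}\sum_i g_i,
\]
and Markov's inequality gives
\[
\Prob\!\left(\sum_i g_i|\beta_i^0-x^*|>\eta\sum_i g_i\right) \;\le\; \frac{\sqrt{d(g)}}{2\eta}.
\]
Choosing $d\le 4\eta^4$ makes the right-hand side strictly less than $\eta$, which proves the lemma.

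There is essentially no hard step here; the argument is a one-shot variance computation, and the only ingredient beyond second moments is the fineness assumption $\max_j g_{ij}/g_i \le d(g)$, which is exactly what lets $\sum_j g_{ij}^2$ be controlled by $d(g)\,g_i^2$. The only thing worth double-checking while writing is that the construction of $a_i^0$ indeed delivers $\E[a_i^0]=x^*$ with independence across $i$, which it does because $\varepsilon_i$ are i.i.d. and the auxiliary tie-breakers $Y_i$ are i.i.d. and independent of $\varepsilon$.
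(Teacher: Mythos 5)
Your proof is correct and follows essentially the same route as the paper's: independence of the $a_j^0$ with mean $x^*$, a variance bound for $\beta_i^0$ via $g_{ij}\le d(g)g_i$, then Jensen/Cauchy--Schwarz and Markov's inequality (your constants are in fact slightly sharper). The only nit is that at $d=4\eta^4$ the Markov bound equals $\eta$ rather than being strictly below it, so take $d$ strictly smaller (the paper's own choice has the same cosmetic issue).
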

\begin{proof}
Variables $a_{j}^{0}$ are independent of each other and $\E a_{j}^{0}=x^{*}$.
Hence, for each $i$,
\[
\E\left(\beta_{i}^{0}-x^{*}\right)^{2}=\sum_{j}\frac{g_{ij}^{2}}{g_{i}^{2}}\E\left(a_{j}^{0}-x^{*}\right)^{2}\leq\sum_{j}d\left(g\right)\frac{g_{ij}}{g_{i}}=d\left(g\right).
\]
By the Cauchy-Schwartz inequality, we get $\E\left|\beta_{i}^{0}-x^{*}\right|\leq2\sqrt{d\left(g\right)}.$
Let $d\left(g\right)\leq d=\frac{1}{4}\eta^{4}$. Then, by the Markov's
equality, for each $\eta$,
\[
\Prob\left(\sum g_{i}\left|\beta_{i}^{0}-x^{*}\right|>\eta\left(\sum g_{i}\right)\right)\leq\frac{\E\left(\sum g_{i}\left|\beta_{i}^{0}-x^{*}\right|\right)}{\eta\left(\sum g_{i}\right)}\leq\frac{2\sqrt{d\left(g\right)}}{\eta}\leq\eta.
\]
\end{proof}

\subsection{Best response process\label{subsec:Best-response-process}}

In this subsection, we formally define best response dynamics: starting
from the initial profile $a^{0}$, agents who play 0 but have 1 as
a best response revise their actions to 1, in an arbitrary (but fixed)
order. Assume that all players are labeled with numbers $i\in\left\{ 1,...,N\right\} $.
For all $t\geq0$, and for each $i$, let 
\begin{align}
\beta_{i}^{t} & =\frac{1}{g_{i}}\sum g_{ij}a_{j}^{t},\label{eq:BR process 2}\\
p_{i}^{t} & =P\left(\beta_{i}^{t}\right),\nonumber \\
i_{t} & =\min\left\{ i:a_{i}^{t}=0\text{ and }u\left(1,\beta_{i}^{t},\varepsilon{}_{i}\right)\geq u\left(0,\beta_{i}^{t},\varepsilon{}_{i}\right)\right\} ,\nonumber \\
a_{i}^{t+1} & =\begin{cases}
1 & \text{if }i=i_{t}\\
a_{i}^{t} & \text{otherwise}.
\end{cases}\nonumber 
\end{align}
We refer to $p_{i}^{t}$ as the expected action of agent $i$ in period
$t$. Because at most one player changes actions at each step, we
have $\left|\beta_{i}^{t}-\beta_{i}^{t+1}\right|\leq d\left(g\right)$
for each $i$. The stochastic process $\left(a^{t},\beta^{t},p^{t}\right)_{t}$
depends on the realization of payoff shocks $\varepsilon$.

If the set in the third line is empty, the process stops. Because
there are finitely many players, the dynamics must stop in a finite
time. We denote the final outcome of the process as $\left(a_{i}^{U},\beta_{i}^{U},p_{i}^{U}\right)$. 

\subsection{Main step}

For each profile of expected actions $p$, define the functional
\[
\mathcal{F}\left(p\right)=\frac{1}{2}\sum_{i,j}g_{ij}\left(p_{i}-p_{j}\right)^{2}.
\]
Clearly, $\mathcal{F}\left(p^{t}\right)\geq0$ for each $t$. Also,
define function 
\[
L\left(x\right)=\intop_{x^{*}}^{x}\left(P^{-1}\left(y\right)-y\right)dy.
\]
Because $x^{*}$ is RU-dominant, it is the unique minimizer of $L\left(x\right)$.
Hence $L\left(x^{*}\right)=0$ and $L\left(x\right)>0$ for each $x\neq x^{*}$. 

The next lemma fills calculations behind formula (\ref{eq:computations risk-dominant always})
in the main body of the paper. 
\begin{lem}
For each $t$,
\begin{align}
2\sum_{i}g_{i}L\left(p_{i}^{T+1}\right)\leq & \mathcal{F}\left(p^{0}\right)+A+2\sum_{i}g_{i}\left|\beta_{i}^{0}-x^{*}\right|+2d\left(g\right)\sum g_{i},\label{eq:main bound}
\end{align}
where $A$ is defined as 
\begin{align*}
A= & \sum_{t\leq T}\sum_{i}\left(p_{i}^{t+1}-p_{i}^{t}\right)\sum_{j}g_{ij}\sum_{s=t,t+1}\left(a_{j}^{s}-p_{j}^{s}\right).
\end{align*}
\end{lem}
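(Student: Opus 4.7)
The plan is to establish the decomposition sketched in the paper via one-step accounting of $\mathcal{F}$, telescope over $t$, interpret the dominant term as a Stieltjes integral equal to $-2\sum_i g_i[L(p_i^{T+1})-L(p_i^0)]$, and close with the nonnegativity $\mathcal{F}(p^{T+1})\ge 0$. First I would use $g_{ij}=g_{ji}$ and $\sum_j g_{ij}=g_i$ to rewrite $\mathcal{F}(p)=\sum_i g_i p_i^2-\sum_{i,j}g_{ij}p_i p_j$. Applying the identity $ab-cd=\tfrac12(a-c)(b+d)+\tfrac12(a+c)(b-d)$ to $p_i^{t+1}p_j^{t+1}-p_i^tp_j^t$ and the edge symmetry gives
\begin{equation*}
\mathcal{F}(p^{t+1})-\mathcal{F}(p^t)=\sum_i g_i\bigl[(p_i^{t+1})^2-(p_i^t)^2\bigr]-\sum_i(p_i^{t+1}-p_i^t)\sum_j g_{ij}\sum_{s=t,t+1}p_j^s.
\end{equation*}
Splitting $p_j^s=a_j^s-(a_j^s-p_j^s)$ in the inner sum, the $a_j^s$ contributions collapse via $\sum_j g_{ij}a_j^s=g_i\beta_i^s$ and combine with the square-difference piece to yield $\sum_i g_i(p_i^{t+1}-p_i^t)\bigl[(p_i^t-\beta_i^t)+(p_i^{t+1}-\beta_i^{t+1})\bigr]$; the residual is precisely the $t$-summand of $A$.

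Next I would telescope in $t$. Because the dynamics is upward only, $t\mapsto\beta_i^t$ and $t\mapsto p_i^t=P(\beta_i^t)$ are nondecreasing, and since only one coordinate of $a$ changes per stage, $\beta_i^{t+1}-\beta_i^t=g_{i,i_t}/g_i\le d(g)$. The symmetric pairing $(p_i^{t+1}-p_i^t)\bigl[(p_i^t-\beta_i^t)+(p_i^{t+1}-\beta_i^{t+1})\bigr]$ is the trapezoidal Stieltjes sum for $2\int_{\beta_i^0}^{\beta_i^{T+1}}(P(\beta)-\beta)\,dP(\beta)$, and the change of variable $y=P(\beta)$ identifies this integral with $-2\int_{p_i^0}^{p_i^{T+1}}(P^{-1}(y)-y)\,dy=-2[L(p_i^{T+1})-L(p_i^0)]$. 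Because the integrand satisfies $|y-P^{-1}(y)|\le 1$ and $\sum_t(p_i^{t+1}-p_i^t)\le 1$, a trapezoidal-rule error bound driven by the $\beta$-step $d(g)$ gives a per-agent error of at most $2d(g)$. Weighting by $g_i$ and invoking $\mathcal{F}(p^{T+1})\ge 0$ produces
\begin{equation*}
2\sum_i g_i L(p_i^{T+1})\le \mathcal{F}(p^0)+A+2\sum_i g_i L(p_i^0)+2d(g)\sum_i g_i.
\end{equation*}

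The final task is to bound $L(p_i^0)\le|\beta_i^0-x^*|$. The RU-dominance first-order condition $x^*=P^{-1}(x^*)$ gives $P(z)<x^*$ for $z<x^*$ and $P(x^*)\ge x^*$, so $\beta_i^0\ge x^*$ forces $p_i^0\ge x^*$ and $\beta_i^0<x^*$ forces $p_i^0\le x^*$. In the first case, for $y\in[x^*,p_i^0]$ monotonicity of $P$ gives $P^{-1}(y)\le\beta_i^0$, hence $P^{-1}(y)-y\le\beta_i^0-x^*$, and $p_i^0-x^*\le 1$ yields $L(p_i^0)\le\beta_i^0-x^*$. In the second case, for $y>p_i^0$ we have $P^{-1}(y)>\beta_i^0$, so on $[p_i^0,x^*]$ the integrand $y-P^{-1}(y)$ is at most $x^*-\beta_i^0$ and $L(p_i^0)=\int_{p_i^0}^{x^*}(y-P^{-1}(y))\,dy\le x^*-\beta_i^0$. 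Substituting gives the stated bound. The main obstacle is the Stieltjes comparison in the middle step: because $P$ can jump when $F$ has atoms, $p_i^t$ may move by much more than $d(g)$ in a single step, so a naive Riemann bound in $y$ fails. The crucial point is that the \emph{symmetric} pairing produced by the algebraic identity automatically straddles every jump of $P$, and the resulting error is controlled by the $\beta$-step $d(g)$ rather than the $p$-step.
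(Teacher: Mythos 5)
Your proof is correct and follows essentially the same route as the paper: the same one-step expansion of $\mathcal{F}$ using the symmetry $g_{ij}=g_{ji}$, the same split $p_j^s=a_j^s-(a_j^s-p_j^s)$ isolating $A$, the same telescoping identification of the main term with $-2\sum_i g_i[L(p_i^{T+1})-L(p_i^0)]$ up to an error $2d(g)\sum_i g_i$ controlled by the $\beta$-step, and the same final bound $L(p_i^0)\le|\beta_i^0-x^*|$ (which the paper states as Lipschitzness of $L\circ P$ and you derive explicitly by the two-sided case analysis at $x^*$). Your remark that the symmetric (trapezoidal) pairing is what makes the quadrature error depend on the $\beta$-increment $d(g)$ rather than on the possibly large jumps of $P$ is exactly the point of the paper's Lebesgue-measure computation $\lambda\left(y:\beta_i^t\le P^{-1}(y)<\beta_i^{t+1}\right)=p_i^{t+1}-p_i^t$.
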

\begin{proof}
Observe that for each $t$, 
\begin{align*}
 & \mathcal{F}\left(p^{t+1}\right)-\mathcal{F}\left(p^{t}\right)\\
= & \sum_{i}g_{i}\left(p_{i}^{t+1}\right)^{2}-\sum_{i}g_{i}\left(p_{i}^{t}\right)^{2}-\sum_{i,j}g_{ij}\left(p_{i}^{t+1}p_{j}^{t+1}-p_{i}^{t}p_{j}^{t}\right)\\
= & \sum_{i}g_{i}\left(p_{i}^{t+1}\right)^{2}-\sum_{i}g_{i}\left(p_{i}^{t}\right)^{2}-\sum_{i,j}g_{ij}\left(\left(p_{i}^{t+1}-p_{i}^{t}\right)p_{j}^{t+1}+p_{i}^{t}\left(p_{j}^{t+1}-p_{j}^{t}\right)\right)\\
= & \sum_{i}g_{i}\left(p_{i}^{t+1}\right)^{2}-\sum_{i}g_{i}\left(p_{i}^{t}\right)^{2}-\sum_{i}\left(p_{i}^{t+1}-p_{i}^{t}\right)\sum_{j}g_{ij}\sum_{s=t,t+1}p_{j}^{s}\\
= & \sum_{i}g_{i}\left(p_{i}^{t+1}\right)^{2}-\sum_{i}g_{i}\left(p_{i}^{t}\right)^{2}-\sum_{i}g_{i}\left(p_{i}^{t+1}-p_{i}^{t}\right)\sum_{s=t,t+1}\beta_{i}^{s}+\sum_{i}\left(p_{i}^{t+1}-p_{i}^{t}\right)\sum_{j}g_{ij}\sum_{s=t,t+1}\left(a_{j}^{s}-p_{j}^{s}\right),
\end{align*}
where, in the last line, we used $g_{i}\beta_{i}^{s}=\sum_{j}g_{ij}a_{j}^{s}$.
Summing up across $t\leq T$, we obtain
\begin{align*}
 & \mathcal{F}\left(p^{T+1}\right)-\mathcal{F}\left(p^{0}\right)=\sum_{t\leq T}\left(\mathcal{F}\left(p^{t+1}\right)-\mathcal{F}\left(p^{t}\right)\right)\\
= & \sum_{i}g_{i}\left(p_{i}^{T+1}\right)^{2}-\sum_{i}g_{i}\left(p_{i}^{0}\right)^{2}-\sum_{t\leq T}\sum_{i}g_{i}\left(p_{i}^{t+1}-p_{i}^{t}\right)\sum_{s=t,t+1}\beta_{i}^{s}+A\\
= & A+\sum_{i}g_{i}\left[\left(p_{i}^{T+1}\right)^{2}-\left(p_{i}^{0}\right)^{2}-2\intop_{p_{i}^{0}}^{p_{i}^{T+1}}P^{-1}\left(y\right)dy\right]\\
 & +\sum_{t\leq T}\left[2\intop_{p_{i}^{t}}^{p_{i}^{t+1}}P^{-1}\left(y\right)dy-\left(p_{i}^{t+1}-p_{i}^{t}\right)\sum_{s=t,t+1}\beta_{i}^{s}\right].
\end{align*}

The second term of the above is equal to 
\begin{align*}
 & \sum_{i}g_{i}\left[\left(p_{i}^{T+1}\right)^{2}-\left(p_{i}^{0}\right)^{2}-2\intop_{p_{i}^{0}}^{p_{i}^{T+1}}P^{-1}\left(y\right)dy\right]\\
= & 2\sum_{i}g_{i}\left[\intop_{p_{i}^{0}}^{p_{i}^{T+1}}ydy-\intop_{p_{i}^{0}}^{p_{i}^{T+1}}P^{-1}\left(y\right)dy\right]=2\sum_{i}g_{i}\left(L\left(p_{i}^{0}\right)-L\left(p_{i}^{T+1}\right)\right).
\end{align*}
Notice that $L\left(x^{*}\right)=L\left(P\left(x^{*}\right)\right)=0$
and $L\left(P\left(\beta_{i}^{0}\right)\right)$ is Lipschitz with
constant 1. Hence the above is no larger than 
\[
\leq-2\sum_{i}g_{i}L\left(p_{i}^{T+1}\right)+\sum g_{i}\left|\beta_{i}^{0}-x^{*}\right|.
\]

Recall that $\sup_{t\leq T}\left(\beta_{i}^{t+1}-\beta_{i}^{t}\right)\leq d\left(g\right)$.
By definition of the Lebesgue integral, 
\begin{align*}
\sum_{t\leq T}\beta_{i}^{t}\lambda\left(y:\beta_{i}^{t}\leq P^{-1}\left(y\right)<\beta_{i}^{t+1}\right) & \leq\intop_{p_{i}^{0}}^{p_{i}^{T+1}}P^{-1}\left(y\right)dy\\
 & \leq\sum_{t\leq T}\left(\beta_{i}^{t}+d\left(g\right)\right)\lambda\left(y:\beta_{i}^{t}\leq P^{-1}\left(y\right)<\beta_{i}^{t+1}\right),
\end{align*}
where $\lambda$ is the Lebesgue measure on the interval $\left[0,1\right]$.
The definition of inverse function $P^{-1}$ as well as $p_{i}^{t}=P\left(\beta_{i}^{t}\right)$
for each $t$ imply that 
\[
\lambda\left(y:\beta_{i}^{t}\leq P^{-1}\left(y\right)<\beta_{i}^{t+1}\right)=p_{i}^{t+1}-p_{i}^{t}.
\]
Hence
\[
\sum_{i}g_{i}\sum_{t\leq T}\left[2\intop_{p_{i}^{t}}^{p_{i}^{t+1}}P^{-1}\left(y\right)dy-\left(p_{i}^{t+1}-p_{i}^{t}\right)\sum_{s=t,t+1}\beta_{i}^{s}\right]\leq2d\left(g\right)\sum g_{i}.
\]

The result follows from putting the estimates together and the fact
that $\mathcal{F}\left(p_{i}^{T+1}\right)\geq0$. 
\end{proof}

\subsection{Estimates}

In this section, we provide estimates of the terms on the right-hand
side of (\ref{eq:main bound}). 
\begin{lem}
\label{lem:F(p) bound}For each $\eta>0$, there exists $d_{\eta}^{F}>0$
such that, if $d\left(g\right)\leq d_{\eta}^{F}$, then 
\[
\Prob\left(\mathcal{F}\left(p^{0}\right)>\eta\left(\sum g_{i}\right)\right)<\eta.
\]
\end{lem}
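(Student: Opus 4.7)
I will bound $\mathcal{F}(p^{0})$ by the weighted $\ell^{1}$ quantity $\sum_{i}g_{i}|p_{i}^{0}-x^{*}|$ and then transfer this to the bound on $\sum_{i}g_{i}|\beta_{i}^{0}-x^{*}|$ already supplied by Lemma \ref{lem:beta bound}. Since $\mathcal{F}$ is invariant under adding a common constant to every $p_{i}$, the inequality $(a-b)^{2}\le 2a^{2}+2b^{2}$ together with the symmetry $g_{ij}=g_{ji}$ and the identity $\sum_{j}g_{ij}=g_{i}$ gives
\[
\mathcal{F}(p^{0}) \;=\; \frac{1}{2}\sum_{i,j}g_{ij}\bigl((p_{i}^{0}-x^{*})-(p_{j}^{0}-x^{*})\bigr)^{2} \;\le\; 2\sum_{i}g_{i}(p_{i}^{0}-x^{*})^{2} \;\le\; 2\sum_{i}g_{i}|p_{i}^{0}-x^{*}|,
\]
where the last step uses $|p_{i}^{0}-x^{*}|\le 1$.

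The key observation is that strict RU-dominance forces $P$ to be continuous at $x^{*}$ with $P(x^{*})=x^{*}$. Indeed, if $x^{*}$ uniquely maximizes $\int_{0}^{x}(y-P^{-1}(y))\,dy$, then $y-P^{-1}(y)$ is non-negative for $y<x^{*}$ (equivalently $P(y)\ge y$) and non-positive for $y>x^{*}$ (equivalently $P(y-)\le y$). Passing to one-sided limits at $x^{*}$ gives $P(x^{*-})\ge x^{*}\ge P(x^{*})$, and monotonicity $P(x^{*-})\le P(x^{*})$ forces $P(x^{*-})=P(x^{*})=x^{*}$. In particular, for each $\epsilon>0$ there exists $\delta>0$ such that $|P(\beta)-x^{*}|\le\epsilon$ whenever $|\beta-x^{*}|\le\delta$.

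To conclude, I split the indices by whether $|\beta_{i}^{0}-x^{*}|$ exceeds $\delta$ and apply Markov,
\[
\sum_{i}g_{i}|p_{i}^{0}-x^{*}| \;\le\; \epsilon\sum_{i}g_{i} + \sum_{i:\,|\beta_{i}^{0}-x^{*}|>\delta}g_{i} \;\le\; \epsilon\sum_{i}g_{i} + \delta^{-1}\sum_{i}g_{i}|\beta_{i}^{0}-x^{*}|.
\]
Choose $\epsilon=\eta/8$ and apply Lemma \ref{lem:beta bound} with parameter $\eta\delta/8$ to obtain $d_{\eta}^{F}>0$ such that, for $d(g)\le d_{\eta}^{F}$, the event $\sum_{i}g_{i}|\beta_{i}^{0}-x^{*}|\le(\eta\delta/8)\sum_{i}g_{i}$ holds with probability at least $1-\eta$. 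On this event the two displays combine to yield $\mathcal{F}(p^{0})\le(\eta/2)\sum_{i}g_{i}<\eta\sum_{i}g_{i}$, as required. The only delicate point in the argument is the continuity of $P$ at $x^{*}$, but as sketched it follows from a short first-order-condition calculation, so I do not expect a serious obstacle.
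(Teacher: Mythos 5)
Your approach is essentially the paper's: bound $\mathcal{F}(p^{0})$ above by $\sum_{i}g_{i}\phi(\beta_{i}^{0})$ for a function $\phi$ vanishing at $x^{*}$, then split according to whether $\beta_{i}^{0}$ is near $x^{*}$ and invoke Lemma~\ref{lem:beta bound} with Markov. (The paper uses $\phi(\beta)=(P(\beta)-x^{*})^{2}$; you pass through $|P(\beta)-x^{*}|$ instead. You also correctly keep the factor $2$ in $\mathcal{F}(p)\le 2\sum g_{i}(p_{i}-x^{*})^{2}$, which the paper's displayed bound silently drops; the factor is of course immaterial.)

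There is, however, a genuine error in your justification that $P$ is continuous at $x^{*}$ with $P(x^{*})=x^{*}$. You assert that strict RU-dominance forces $y-P^{-1}(y)\ge 0$ for \emph{all} $y<x^{*}$, i.e.\ $P(y)\ge y$ on $[0,x^{*})$. This is false in general. In the canonical S-shaped case of Figure~\ref{fig:Continuum-best-response}, $P$ crosses the diagonal from above at $x_{\min}$, dips below it on a middle interval, and crosses again from above near $x_{\max}$; if $x^{*}=x_{\max}$ is the RU-dominant point, then $P(y)<y$ and $y-P^{-1}(y)<0$ on that middle interval, all of which lies below $x^{*}$. Unique maximization of $\int_{0}^{x}(y-P^{-1}(y))\,dy$ controls signed \emph{areas}, not pointwise signs away from $x^{*}$. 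The conclusion you need is nonetheless correct and the argument localizes: if $P(x^{*-})<x^{*}$, there is an $\epsilon>0$ with $P(y)\le P(x^{*-})<x^{*}-\epsilon<y$ for every $y\in(x^{*}-\epsilon,x^{*})$, so $y-P^{-1}(y)<0$ on that interval and $\int_{x^{*}-\epsilon}^{x^{*}}(y-P^{-1}(y))\,dy<0$, contradicting maximality at $x^{*}$; the symmetric argument on the right rules out $P(x^{*})>x^{*}$, and monotonicity then gives $P(x^{*-})=x^{*}=P(x^{*})$. With this local replacement the rest of your proof goes through. (Note that the paper assumes $P(x^{*})=x^{*}$ without comment when it writes $\delta(x^{*})=0$; making this explicit, as you try to, is worthwhile, but the justification must be local, not global.)
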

\begin{proof}
Note that 
\[
\mathcal{F}\left(p^{0}\right)\leq\sum_{i}g_{i}\left(p_{i}-x^{*}\right)^{2}\leq\sum_{i}g_{i}\delta\left(\beta_{i}^{0}\right),
\]
where $\delta\left(x\right)=\left(P\left(x\right)-x^{*}\right)^{2}$.
Note that $\delta\left(x^{*}\right)=0$. Choose $\xi>0$ such that
$\delta\left(\sqrt{\xi}\right)+\sqrt{\xi}<\frac{1}{2}\eta$. Let $d<\eta$
be small enough so that Lemma \ref{lem:beta bound} holds for $\xi$.
Then, 
\[
\Prob\left(\sum_{i:\beta_{i}^{0}\geq\sqrt{\xi}}g_{i}>\sqrt{\xi}\sum g_{i}\right)\leq\xi,
\]
and, if the event in the brackets does not hold, we have 
\[
\sum_{i}g_{i}\delta\left(\beta_{i}^{0}\right)\leq\sum_{i}g_{i}\left(\delta\left(\sqrt{\xi}\right)+\sqrt{\xi}\right)\leq\eta\left(\sum_{i}g_{i}\right).
\]
\end{proof}
To gain estimates on term $A$, we need a preliminary lemma:
\begin{lem}
\label{lem:expectation bound}For each $j$ and $s$, 
\[
\E\left(a_{j}^{s}-\max\left(x^{*},p_{j}^{s}\right)|\varepsilon{}_{-j}\right)\leq0.
\]
\end{lem}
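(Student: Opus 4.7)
The plan is to condition on $\varepsilon_{-j}$ (and the auxiliary randomization $Y_{-j}$) and exploit a monotonicity structure of the upward best-response process. Introduce two hypothetical trajectories that differ from the actual one only by the rule governing player $j$: let $\bar\beta_j^s$ denote the value of $\beta_j^s$ in the dynamics with $a_j$ pinned to $1$ from step $0$, and $\underline\beta_j^s$ the value of $\beta_j^s$ in the dynamics with $a_j$ pinned to $0$ and $j$ forbidden from ever revising. Both are measurable with respect to $\varepsilon_{-j}$, and upward best-response monotonicity gives $\underline\beta_j^s\le\beta_j^s\le\bar\beta_j^s$ pathwise; moreover, up to the step at which $j$ activates in the actual dynamics (if ever), $\beta_j^t$ coincides with $\underline\beta_j^t$. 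For each realization of $\varepsilon_{-j}$ I will split on the sign of $\bar\beta_j^s-x^*$.

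If $\bar\beta_j^s<x^*$, then $\beta_j^s<x^*$ in every realization. A player with $a_j^0=0$ has $\beta(\varepsilon_j)\ge x^*$ (either $\varepsilon_j\in E_>$, or $\varepsilon_j\in E_=$ with $Y_j=0$), so $j$ is never eligible to revise through step $s$. Hence $a_j^s=a_j^0$, and $\E(a_j^s\mid\varepsilon_{-j})=\E a_j^0=x^*\le\E(\max(x^*,p_j^s)\mid\varepsilon_{-j})$, using the calibration of $p$ in Section \ref{subsec:Initial-profile} that makes $\E a_j^0=x^*$.

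If $\bar\beta_j^s\ge x^*$, the key pathwise inequality is $a_j^s\le\mathbf{1}\{\beta(\varepsilon_j)\le\beta_j^s\}$: when $a_j^0=1$ we have $\beta(\varepsilon_j)\le x^*\le\bar\beta_j^s=\beta_j^s$, and if $j$ activates at some step $t\le s$ we have $\beta(\varepsilon_j)\le\beta_j^t\le\beta_j^s$. On the subevent $\{a_j^0=0\}$, the monotonicity observation above yields $\{\beta(\varepsilon_j)\le\beta_j^s\}=\{\beta(\varepsilon_j)\le\underline\beta_j^s\}$ pathwise. A direct probability computation splitting on $a_j^0$ and on the $E_<,E_=,E_>$ decomposition, together with the calibration of $p$, then gives
\[
\E\bigl(\mathbf{1}\{\beta(\varepsilon_j)\le\beta_j^s\}\,\big|\,\varepsilon_{-j}\bigr)=\max\bigl(x^*,\,P(\underline\beta_j^s)\bigr),
\]
where the threshold $\underline\beta_j^s=x^*$ is matched by the first-order condition $P^{-1}(x^*)=x^*$ coming from strict RU-dominance. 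Since $p_j^s=P(\beta_j^s)\ge P(\underline\beta_j^s)$ pathwise, $\max(x^*,p_j^s)\ge\max(x^*,P(\underline\beta_j^s))$, and taking conditional expectation yields the desired bound.

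The principal obstacle is the activation-order rule baked into Section \ref{subsec:Best-response-process}: a player $j$ can become eligible at some step $t\le s$ but be skipped because a smaller-indexed eligible player is processed first. This makes the event ``$j$ is actually activated by step $s$'' strictly smaller than $\{\beta(\varepsilon_j)\le\underline\beta_j^s\}$, so the intermediate quantity $\max(x^*,P(\underline\beta_j^s))$ is only an \emph{upper} bound on $\Prob(a_j^s=1\mid\varepsilon_{-j})$. Fortunately the lemma only asks for a one-sided inequality, which is why the proof can afford to discard this slack; but it is also why one cannot hope to replace $\underline\beta_j^s$ by the actual $\beta_j^s$ and obtain an exact formula.
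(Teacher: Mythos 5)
Your proof is correct to the same standard as the paper's and follows essentially the same strategy: compare the realized dynamics with auxiliary dynamics in which player $j$'s action is frozen, use monotonicity of the upward process to relate $j$'s neighborhood average across the runs, and compute the conditional probability that $j$ ends up playing $1$ given $\varepsilon_{-j}$. The differences are in execution, and on one point yours is the more careful version: the paper freezes $a_j$ at the \emph{random} value $a_j^0$ and then writes $\Prob\left(a_j^{*t+1}=1\mid\varepsilon_{-j}\right)=P\left(\max\left(x^{*},\beta_j^{\prime t}\right)\right)$, which silently treats $\beta_j^{\prime t}$ as $\varepsilon_{-j}$-measurable even though the frozen process depends on $a_j^0$ and hence on $\varepsilon_j$; your device of freezing at the constants $0$ and $1$ separately, splitting on $a_j^0$, and doing the explicit $E_<,E_=,E_>$ accounting with the calibration $\E a_j^0=x^{*}$ repairs exactly this, and your closing remark about the activation order correctly identifies why only a one-sided bound is available. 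The one step you share with the paper rather than improve on is the pathwise, fixed-$s$ monotone coupling $\underline\beta_j^{s}\le\beta_j^{s}\le\bar\beta_j^{s}$ (the paper's analogue is the assertion $\beta^{\prime s}\le\beta^{s}$): with the minimum-index activation rule, two runs that differ only in the treatment of $j$ can temporarily have \emph{incomparable} sets of activated players, because the run in which $j$ is frozen skips $j$ and activates the next eligible index in its place, so the inequality at a fixed step count is not immediate and neither you nor the paper proves it. You have not introduced any new gap, but that coupling is the genuinely delicate point in both arguments and deserves an explicit induction or a reindexing of the two runs if you want the proof to be airtight.
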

\begin{proof}
Fix player $j$. The stochastic process $\left(a^{t},\beta^{t},p^{t}\right)_{t}$
can be defined on the probability space $\Omega=E^{N}$ composed of
the realizations of the payoff shock for each individual. Consider
an auxiliary stochastic processes $\left(a^{\prime t},\beta^{\prime t},p^{\prime t}\right)_{t}$
defined on the same probability space with the same equations (\ref{eq:BR process 1})-(\ref{eq:BR process 2})
as the original process, but with setting $a_{j}^{\prime t}\equiv a_{j}^{0}$
for each $t$. Additionally, define 
\[
a_{j}^{*t+1}=1\text{ iff }u\left(1,\max\left(x^{*},\beta_{j}^{\prime t}\right),\varepsilon{}_{j}\right)\geq u\left(0,\max\left(x^{*},\beta_{j}^{\prime t}\right),\varepsilon{}_{j}\right).
\]
So defined $a_{j}^{*t}$ depends on $\varepsilon_{-j}$ only through
process $\beta^{\prime}$. Hence, for each $\varepsilon_{-j}$,
\begin{align*}
\Prob\left(a_{j}^{*t+1}=1|\varepsilon{}_{-j}\right) & =\Prob\left(u\left(1,\max\left(x^{*},\beta_{j}^{\prime t}\right),\varepsilon_{j}\right)\geq u\left(0,\max\left(x^{*},\beta_{j}^{\prime t}\right),\varepsilon_{j}\right)|\varepsilon{}_{-j}\right)\\
 & =P\left(\max\left(x^{*},\beta_{j}^{\prime t}\right)\right).
\end{align*}

Notice that $a_{j}^{*t}\geq a_{j}^{t}$ for each $t$. Indeed, let
$t_{0}=\inf\left\{ t:a_{j}^{t}=1\right\} $ and equal $\infty$ if
the set is empty. Then, $\beta_{i}^{t}=\beta_{i}^{\prime t}$ for
each $i$ and $t<t_{0}$ . Moreover, $a_{j}^{t_{0}}=1$ implies $u\left(1,\beta_{j}^{t_{0}-1},\varepsilon{}_{j}\right)\geq u\left(0,\beta_{j}^{t_{0}-1},\varepsilon{}_{j}\right)$
, which implies that $a_{j}^{*t_{0}}=1$. 

Further, payoff complementarities imply that, for each $s$, $\beta^{\prime s}\leq\beta^{s}$,
and hence $p^{\prime s}\leq p^{s}$. Additionally, $p^{\prime s-1}\leq p^{\prime s}$.
Thus,
\begin{align*}
\E\left(a_{j}^{s}-\max\left(x^{*},p_{j}^{s}\right)|\varepsilon_{-j}\right) & =\Prob\left(a_{j}^{s}=1|\varepsilon{}_{-j}\right)-\max\left(x^{*},p_{j}^{\prime s}\right)\\
 & \leq\Prob\left(a_{j}^{*s}=1|\varepsilon{}_{-j}\right)-\max\left(x^{*},p_{j}^{\prime s}\right)\\
 & =P\left(\max\left(x^{*},\beta_{j}^{\prime s-1}\right)\right)-\max\left(x^{*},p_{j}^{\prime s}\right)\\
 & =\max\left(x^{*},p_{j}^{\prime s-1}\right)-\max\left(x^{*},p_{j}^{\prime s}\right)\leq0,
\end{align*}
where the first equality is due to the fact that $p_{j}^{\prime s-1}$
and $\beta_{j}^{\prime s-1}$ are measurable wrt. $\varepsilon{}_{-i}$.
\end{proof}
\begin{lem}
\label{lem:Probability bound 1}For each $\eta>0$, there exists $d_{\eta}>0$
such that, if $d\left(g\right)\leq d_{\eta}^{1}$, then 
\[
\Prob\left(\frac{1}{g_{i}}\sum g_{ij}\left(a_{j}^{s}-\max\left(x^{*},p_{j}^{s}\right)\right)\geq\eta\right)\leq\eta.
\]
\end{lem}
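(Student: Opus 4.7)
The plan is to combine Lemma \ref{lem:expectation bound} with Chebyshev, using the bound $g_{ij}/g_i \leq d(g)$ to force the variance to vanish with $d(g)$. Write $w_j := g_{ij}/g_i$ (so $\sum_j w_j = 1$ and $w_j \leq d(g)$), $Z_j := a_j^s - \max(x^*, p_j^s)$, and $S_i := \sum_j w_j Z_j$. Lemma \ref{lem:expectation bound} gives $\E[Z_j \mid \varepsilon_{-j}] \leq 0$ and hence $\E[S_i] \leq 0$, so
\[
\Prob(S_i \geq \eta) \leq \Prob\bigl(S_i - \E[S_i] \geq \eta\bigr) \leq \mathrm{Var}(S_i)/\eta^2.
\]
The task reduces to showing that $\mathrm{Var}(S_i) \to 0$ as $d(g) \to 0$, and then choosing $d_\eta^1$ small enough.

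To expose cleaner variance structure, I first replace $\max(x^*, p_j^s)$ by the $\varepsilon_{-j}$-measurable surrogate $\tilde Y_j := \max(x^*, p_j^{\prime s-1})$ coming from the auxiliary process used in the proof of Lemma \ref{lem:expectation bound}, in which $a_j$ is frozen at $a_j^0$. Monotonicity of the upward dynamics yields $p_j^{\prime s-1} \leq p_j^s$ pointwise, so $\tilde Z_j := a_j^s - \tilde Y_j$ satisfies $Z_j \leq \tilde Z_j$, $\tilde Y_j$ is $\varepsilon_{-j}$-measurable, and $\E[\tilde Z_j \mid \varepsilon_{-j}] \leq 0$. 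Decomposing $\tilde Z_j = \bar Z_j + \mu_j$ with $\bar Z_j := \tilde Z_j - \E[\tilde Z_j \mid \varepsilon_{-j}]$ (conditionally centered) and $\mu_j \leq 0$, I get $S_i \leq \sum_j w_j \bar Z_j$, so it suffices to bound the variance of this conditionally centered sum. The diagonal is immediate: $\sum_j w_j^2 \E[\bar Z_j^2] \leq d(g)$, since $|\bar Z_j| \leq 1$ and $w_j \leq d(g)$.

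The main obstacle is the off-diagonal part. Conditioning on $\varepsilon_{-j}$ and using $\E[\bar Z_j \mid \varepsilon_{-j}] = 0$ turns $\E[\bar Z_j \bar Z_k \mid \varepsilon_{-j}]$ into the conditional covariance $\mathrm{Cov}_{\varepsilon_j}(\bar Z_j, \bar Z_k)$, which by Cauchy--Schwarz is bounded by $\sqrt{\mathrm{Var}_{\varepsilon_j}(\bar Z_k)}$ (since $\mathrm{Var}_{\varepsilon_j}(\bar Z_j) \leq 1$). The finite law-of-large-numbers step---the ``each agent is small in the neighborhood of $i$'' observation mentioned in the proof intuition---is to show that $\E_{\varepsilon_{-j}}[\mathrm{Var}_{\varepsilon_j}(\bar Z_k)]$ is itself $o(1)$ as $d(g) \to 0$: a resampling of $\varepsilon_j$ affects $\bar Z_k$ only through the cascade of revisions in the best-response dynamics, each link of which contributes a weight bounded by $d(g)$, and monotonicity of the upward dynamics controls the probability that such a cascade actually flips $a_k^s$. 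Once this cascade bound is in hand, summing over $j \ne k$ with $\sum_j w_j = 1$ produces $\mathrm{Var}(\sum_j w_j \bar Z_j) \to 0$ as $d(g) \to 0$, and Chebyshev closes the argument. The delicate point is exactly the cascade estimate, where the interplay between first-moment cancellation (from the conditional-mean bound) and the second-moment propagation factor $d(g)$ must be tracked carefully.
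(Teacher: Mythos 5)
Your reduction to a variance bound via Chebyshev runs into a genuine obstruction that the paper's proof deliberately avoids, and your proposal leaves that obstruction unresolved. The paper does not attempt to control $\mathrm{Var}(S_i)$ at all. Instead it orders the players $j=1,\dots,N$, observes that the conditional-mean bound of Lemma \ref{lem:expectation bound} is taken with respect to \emph{all} other shocks $\varepsilon_{-j}$, and hence by the tower property the partial sums $X_{j}=\frac{1}{g_{i}}\sum_{j^{\prime}\leq j}g_{ij^{\prime}}\left(a_{j^{\prime}}^{s}-\max\left(x^{*},p_{j^{\prime}}^{s}\right)\right)$ form a supermartingale with increments bounded by $g_{ij}/g_{i}\leq d\left(g\right)$. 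Azuma--Hoeffding then gives an exponential bound of order $\exp\left(-\eta/d\left(g\right)\right)$ with \emph{no} cross-term analysis whatsoever: the martingale ordering makes the off-diagonal covariances irrelevant. Your route, by contrast, forces you to bound $\sum_{j\neq k}w_{j}w_{k}\,\E\left[\bar{Z}_{j}\bar{Z}_{k}\right]$, and since $\sum_{j}w_{j}=1$ this sum has $\Theta(1)$ total weight; you therefore need the conditional covariances to be uniformly $o(1)$, which is exactly the ``cascade estimate'' you defer.

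That deferred estimate is not a routine detail --- it is the entire difficulty, and it is far from clear it is even true in the form you need. Resampling $\varepsilon_{j}$ changes $a_{j}^{0}$ and all of $j$'s subsequent best responses, and the whole thrust of Theorem \ref{thm:RUdominant selection on lattice} is that a single local perturbation \emph{can} trigger a contagion wave that flips a macroscopic fraction of the population; so an argument that ``each link contributes a weight bounded by $d(g)$'' does not by itself prevent long cascades from making $\mathrm{Var}_{\varepsilon_{j}}(\bar{Z}_{k})$ order one for many pairs $(j,k)$. Without a quantitative bound on the propagation of a single-shock perturbation through the upward dynamics --- something the paper never needs and never proves --- your variance bound does not close. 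I would recommend abandoning the second-moment route and using the supermartingale structure directly: the conditioning on $\varepsilon_{-j}$ in Lemma \ref{lem:expectation bound} is precisely what licenses the Azuma--Hoeffding argument, and it converts the correlation problem you are fighting into a non-issue.
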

\begin{proof}
By Lemma \ref{lem:expectation bound}, finite stochastic process $X_{j}=\frac{1}{g_{i}}\sum_{j^{\prime}\leq j}g_{ij^{\prime}}a_{j^{\prime}}^{s}$
is a supermartingale. Take $d_{\eta}=-\frac{\eta}{\ln\eta}$. Then,
the Azuma-Hoeffding's Inequality implies that 
\[
\Prob\left(\frac{1}{g_{i}}\sum g_{ij}a_{j}^{s}-p_{j}^{s}\geq\eta\right)\leq\exp\left(-\frac{\eta}{\sum\frac{g_{ij}^{2}}{g_{i}^{2}}}\right)\leq\exp\left(-\frac{1}{d\left(g\right)}\eta\right)\leq\exp\left(\text{ln}\eta\right)=\eta.
\]
\end{proof}
\begin{lem}
\label{lem: A bound}For each $\eta>0$, there exists $d_{\eta}^{A}>0$
such that if $d\left(g\right)\leq d_{\eta}^{A}$, then for each $i$
and $s$, 
\[
\Prob\left(A\geq\eta\sum_{i}g_{i}\right)\leq\eta.
\]
\end{lem}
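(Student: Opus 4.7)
The plan is to decompose $A$ into two pieces, one handled deterministically using the closeness of the initial profile to $x^*$ from Lemma~\ref{lem:beta bound}, and one handled probabilistically using the supermartingale structure of Lemma~\ref{lem:expectation bound} together with the concentration behind Lemma~\ref{lem:Probability bound 1}. Using the identity
\[
a_j^s - p_j^s \;=\; \bigl(a_j^s-\max(x^*,p_j^s)\bigr) \;+\; \max\bigl(0,\,x^*-p_j^s\bigr),
\]
I write $A = A_M + A_+$, where $A_M$ collects the supermartingale-like term $M_j^s:=a_j^s-\max(x^*,p_j^s)$ and $A_+$ collects the nonnegative shortfall $\max(0,x^*-p_j^s)$.

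For $A_+$, monotonicity of the upward best-response dynamics forces $p_j^s$ to be non-decreasing in $s$, so $\max(0,x^*-p_j^s)\le \max(0,x^*-p_j^0)$. Exchanging summations and using $\sum_t(p_i^{t+1}-p_i^t)\le 1$ together with $\sum_i g_{ij}=g_j$ yields the deterministic estimate $A_+ \le 2\sum_j g_j\, \max(0,x^*-p_j^0)$. Because the initial profile is defined so that $\E a_j^0 = x^*$ (Section~\ref{subsec:Initial-profile}), Lemma~\ref{lem:beta bound} together with a Markov-type step converts the $L^1$ closeness of $\beta_j^0$ to $x^*$ into the bound $A_+ \le \tfrac12\eta\sum g_j$ with probability at least $1-\tfrac12\eta$ when $d(g)$ is small; any jump of $P$ at $x^*$ is absorbed by the Lipschitz-envelope reduction of Lemma~\ref{lem:Dominance}.

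For $A_M$, I rearrange to $A_M = \sum_{j,s}\phi_j^s M_j^s$ with nonnegative predictable weights $\phi_j^s := \sum_i g_{ij}\bigl[(p_i^{s+1}-p_i^s)+(p_i^s-p_i^{s-1})\bigr]$ (taking $p^{-1}:=p^0$). These satisfy the telescoping aggregate bound $\sum_s \phi_j^s \le 2g_j$ and the per-step bound $\phi_j^s \le 2d(g)g_j$, because at each stage only one agent revises and its weight in any neighborhood is at most $d(g)$. Lemma~\ref{lem:expectation bound} shows that $\{M_j^s\}$ has nonpositive conditional mean along the natural filtration, so $A_M$ is a weighted sum of bounded supermartingale increments. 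Applying Freedman's inequality, which is the same Azuma--Hoeffding mechanism used in Lemma~\ref{lem:Probability bound 1} but now applied globally to $A_M$ rather than to each $(i,s)$ slice, gives a tail bound of the form $\Prob(A_M\ge \tfrac12\eta\sum g_j)\le \exp(-c\eta^2/d(g))$, which is below $\tfrac12\eta$ once $d(g)\le d_\eta^A$ for some $d_\eta^A>0$ depending only on $\eta$.

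The main obstacle is ensuring that the concentration in the $A_M$ step has a bound depending only on $\eta$ and not on the network size $N$ or the horizon $T$, even though $A_M$ contains up to $\Theta(N^2)$ summands. The key leverage is that each increment $\phi_j^s M_j^s$ has magnitude $O(d(g))$, while the telescoping estimate $\sum_s \phi_j^s \le 2g_j$ controls the total quadratic variation by $O(d(g)\,(\sum g_j)^2)$ rather than by anything growing with $N$. Freedman's bound then yields the $\exp(-c\eta^2/d(g))$ tail directly, without any union bound over $i$, $s$, or $t$, which is precisely what allows $d_\eta^A$ to be network-independent.
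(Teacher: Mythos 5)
Your decomposition of $A$ is exactly the one the paper uses: write $a_j^s - p_j^s = \bigl(a_j^s - \max(x^*,p_j^s)\bigr) + \bigl(\max(x^*,p_j^s)-p_j^s\bigr)$ and split $A=A_1+A_2$. Your treatment of $A_+$ (the paper's $A_2$) via monotonicity of $p_j^s$ in $s$ and Lemma~\ref{lem:beta bound} is also the paper's route. Your handling of $A_M$ (the paper's $A_1$), however, has two genuine gaps.

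First, predictability. You rearrange $A_M=\sum_{j,s}\phi_j^s M_j^s$ with $\phi_j^s=\sum_i g_{ij}(p_i^{s+1}-p_i^{s-1})$ and invoke Freedman's inequality treating $\phi_j^s$ as ``nonnegative predictable weights.'' The martingale structure that Lemma~\ref{lem:expectation bound} actually provides is along the \emph{player} index $j$: $\E\bigl[M_j^s\mid\varepsilon_{-j}\bigr]\le 0$, so a supermartingale is obtained by revealing $\varepsilon_1,\varepsilon_2,\dots$ one at a time. For Freedman/Azuma you would need $\phi_j^s$ to be $\sigma(\varepsilon_1,\dots,\varepsilon_{j-1})$-measurable, but the entire best-response dynamics $(p^t)_t$ — and therefore $\phi_j^s$ — depends on all shocks including $\varepsilon_j$. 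The paper avoids this precisely by applying Azuma--Hoeffding in Lemma~\ref{lem:Probability bound 1} to a \emph{deterministic} weight vector $(g_{ij}/g_i)_j$ for each fixed $(i,s)$, then aggregating over $t$ using the telescoping $\sum_t(p_i^{t+1}-p_i^t)\le 1$ and a Markov step; the random and telescoping parts never enter a martingale increment. Your global one-shot application does not decouple these, and there is no filtration with respect to which your increments are simultaneously centered and your weights predictable.

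Second, the per-step bound $\phi_j^s\le 2d(g)g_j$ fails in general. You have $|\beta_i^{s+1}-\beta_i^{s-1}|\le 2d(g)$ because at most one player revises per step, but $p_i^s=P(\beta_i^s)$ and the paper does \emph{not} assume $P$ is Lipschitz — $P$ is only right-continuous and may jump (it explicitly allows atoms in $F$). So $p_i^{s+1}-p_i^{s-1}$ can be order one even when $\beta_i^{s+1}-\beta_i^{s-1}\le 2d(g)$, and the quadratic-variation control $O\bigl(d(g)(\sum g_j)^2\bigr)$ you rely on for the $\exp(-c\eta^2/d(g))$ tail does not follow. Only the telescoping aggregate bound $\sum_s\phi_j^s\le 2g_j$ survives. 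Relatedly, your suggestion that ``any jump of $P$ at $x^*$ is absorbed by the Lipschitz-envelope reduction of Lemma~\ref{lem:Dominance}'' does not apply here: that reduction produces a one-sided stochastic dominance for a \emph{modified} game and is used for the one-sided Theorem~\ref{thm:Largest equilibrium}; invoking it for Theorem~\ref{thm:RU dominant always} would change $P$ and hence $x^*$ and the $RU$-dominance structure the whole argument is built on. The paper instead absorbs the jump directly by exploiting right-continuity of $P$ at $x^*$ when converting $L^1$-closeness of $\beta^0$ to $L^1$-closeness of $p^0$.
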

\begin{proof}
Because $p_{i}^{t+1}>p_{i}^{t}$ for each $i$, 
\begin{align*}
A= & \sum_{t\leq T}\sum_{i}\left(p_{i}^{t+1}-p_{i}^{t}\right)\sum_{j}g_{ij}\sum_{s=t,t+1}\left(a_{j}^{s}-\max\left(x^{*},p_{j}^{s}\right)\right)\\
 & +\sum_{t\leq T}\sum_{i}\left(p_{i}^{t+1}-p_{i}^{t}\right)\sum_{j}g_{ij}\sum_{s=t,t+1}\left(\max\left(x^{*},p_{j}^{s}\right)-p_{j}^{s}\right)\\
\leq & \sum_{t\leq T}\sum_{i}\left(p_{i}^{t+1}-p_{i}^{t}\right)\sum_{j}g_{ij}\sum_{s=t,t+1}\left(a_{j}^{s}-\max\left(x^{*},p_{j}^{s}\right)\right)+2\sum_{j}g_{j}\left|p_{j}^{0}-x^{*}\right|\\
= & A_{1}+A_{2}.
\end{align*}
We are going to bound each of the two terms separately. 

Let $d_{\eta}^{2}$ be the constant from Lemma \ref{lem: A bound}.
Then, if $d\left(g\right)\leq d_{\eta}^{A1}=d_{\frac{1}{8}\sqrt{\eta}}^{2}$,
\[
\E\left(\sum_{t\leq T}\sum_{i}\left(p_{i}^{t+1}-p_{i}^{t}\right)\sum_{j}g_{ij}\sum_{s=t,t+1}\left(a_{j}^{s}-\max\left(x^{*},p_{j}^{s}\right)\right)\right)\leq\frac{1}{4}\sqrt{\eta}\sum_{i}g_{i}.
\]
By Markov's inequality, 
\[
\Prob\left(A_{1}\geq\frac{1}{2}\eta\sum g_{i}\right)\leq\frac{\frac{1}{4}\sqrt{\eta}\sum_{i}g_{i}}{\frac{1}{2}\eta\sum g_{i}}\leq\frac{1}{2}\eta.
\]

Take $\delta\left(x\right)=\left|P\left(x\right)-x^{*}\right|$. Note
that $\delta\left(x^{*}\right)=0$. Choose $\xi>0$ such that $\max\left(\xi,4\left(\delta\left(\sqrt{\xi}\right)+\sqrt{\xi}\right)\right)<\frac{1}{2}\eta$.
Let $d_{\eta}^{A2}<\eta$ be sufficiently small so that Lemma \ref{lem:beta bound}
holds for $\xi$. Then, 
\[
\Prob\left(\sum_{i:\beta_{i}^{0}\geq\sqrt{\xi}}g_{i}>\sqrt{\xi}\sum g_{i}\right)\leq\xi,
\]
and, if the event in the brackets does not hold, we have 
\[
2\sum_{j}g_{j}\left|P\left(\beta_{j}^{0}\right)-x^{*}\right|\leq2\left(\delta\left(\sqrt{\xi}\right)+\sqrt{\xi}\right)\sum_{i}g_{i}\leq\frac{1}{2}\eta\left(\sum_{i}g_{i}\right).
\]

Take $d_{\eta}^{A}=\min\left(d_{\eta}^{A1},d_{\eta}^{A2}\right)$.
Then, 
\[
\Prob\left(A\geq\eta\sum_{i}g_{i}\right)\leq\Prob\left(A_{1}\geq\frac{1}{2}\eta\sum_{i}g_{i}\right)+\Prob\left(A_{2}\geq\frac{1}{2}\eta\sum_{i}g_{i}\right)\geq\eta.
\]
\end{proof}

\subsection{Average payoffs at the end of dynamics}

We show that the average payoffs when the upper best response dynamics
stop are not much higher than $x^{*}$. 
\begin{lem}
\label{lem:upper equilbrium bound}For each $\eta>0$, there exists
$d_{\eta}^{U}>0$ such that, if $d\left(g\right)\leq d_{\eta}^{U}$,
then
\[
\Prob\left(\text{Av}\left(a^{U}\right)\geq\left(\eta+x^{*}\right)\sum_{i}g_{i}\right)\leq\eta.
\]
\end{lem}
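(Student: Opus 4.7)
The plan is to apply the main inequality (\ref{eq:main bound}) at any $T$ large enough that the best-response dynamics has stopped (so $a^{T+1}=a^{U}$), absorb every term on its right-hand side using the probabilistic estimates already proved, and then translate the resulting bound on $\sum_{i}g_{i}L(p_{i}^{U})$ into a bound on $\text{Av}(a^{U})$. Given $\eta>0$, I will introduce a small auxiliary parameter $\eta'>0$ (to be tuned at the end) and take $d_{\eta}^{U}$ smaller than each of $\eta'$, $d_{\eta'}^{F}$ from Lemma \ref{lem:F(p) bound}, $d_{\eta'}^{A}$ from Lemma \ref{lem: A bound}, and the constant associated with $\eta'$ in Lemma \ref{lem:beta bound}. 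A union bound ensures that, with probability at least $1-3\eta'$, the three events
\[
\mathcal{F}(p^{0})\le\eta'\sum g_{i},\qquad \sum g_{i}|\beta_{i}^{0}-x^{*}|\le\eta'\sum g_{i},\qquad A\le\eta'\sum g_{i}
\]
hold simultaneously; combined with $2d(g)\sum g_{i}\le2\eta'\sum g_{i}$, inequality (\ref{eq:main bound}) gives $\sum_{i}g_{i}L(p_{i}^{U})\le 3\eta'\sum g_{i}$ on this good event.

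Next I will translate this into a bound on $\text{Av}(\max(x^{*},p^{U}))$. Strict $RU$-dominance forces $L(x)>0$ for $x\ne x^{*}$, and by continuity together with compactness there is, for every $\delta>0$, a constant $c_{\delta}>0$ with $L(x)\ge c_{\delta}$ whenever $|x-x^{*}|\ge\delta$. Markov's inequality applied to the weighted probability measure $g_{i}/\sum g_{j}$ then bounds the weighted fraction of agents for which $|p_{i}^{U}-x^{*}|\ge\delta$ by $3\eta'/c_{\delta}$, and therefore
\[
\text{Av}(\max(x^{*},p^{U}))-x^{*}\le\delta+3\eta'/c_{\delta}.
\]

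The third step bridges $p^{U}$ and $a^{U}$. Define $Z_{i}=\frac{1}{g_{i}}\sum_{j}g_{ij}(a_{j}^{U}-\max(x^{*},p_{j}^{U}))$; Lemma \ref{lem:Probability bound 1} gives, for each $i$, $\Prob(Z_{i}\ge \eta'^{2}/2)\le \eta'^{2}/2$ (for $d(g)$ sufficiently small), and since $Z_{i}\le 1$ this yields $\E Z_{i}^{+}\le \eta'^{2}$. Because $(\sum_{i}g_{i}Z_{i})^{+}\le\sum_{i}g_{i}Z_{i}^{+}$, a single Markov inequality yields $\Prob(\sum_{i}g_{i}Z_{i}\ge\eta'\sum g_{i})\le\eta'$. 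The symmetry $g_{ij}=g_{ji}$ rewrites this as $\text{Av}(a^{U})-\text{Av}(\max(x^{*},p^{U}))\le\eta'$ with probability at least $1-\eta'$. Choosing $\delta=\eta/3$ and then $\eta'$ small enough (depending on $c_{\delta}$) that $\delta+3\eta'/c_{\delta}+\eta'\le\eta$ and $4\eta'\le\eta$ closes the argument.

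The main obstacle is the third step: Lemma \ref{lem:Probability bound 1} is a per-agent tail bound and the $Z_{i}$ are highly dependent, so one cannot union-bound over $i$ nor invoke a concentration inequality directly. The workaround is to pay a quadratic price in the tail threshold, convert the pointwise tail bound into an $L^{1}$ bound on $Z_{i}^{+}$, and then exploit the subadditivity of the positive-part operator together with the graph symmetry to convert an $i$-indexed average into the $j$-indexed one that actually appears in $\text{Av}(a^{U})$.
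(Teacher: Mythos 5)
Your proposal is correct, and its first two steps (summing the main bound over the whole dynamics, absorbing $\mathcal{F}(p^{0})$, $A$, the $\beta^{0}$-term and the $d(g)$-term via Lemmas \ref{lem:F(p) bound}, \ref{lem: A bound} and \ref{lem:beta bound}, then using Markov's inequality on the weighted measure $g_{i}/\sum g_{j}$ together with the fact that $L>0$ away from $x^{*}$) coincide with the paper's argument. Where you genuinely diverge is the closing step. The paper passes from the $L$-bound to $\text{Av}(a^{U})$ purely deterministically: it invokes the identity $\sum_{i}g_{i}a_{i}^{U}=\sum_{i}g_{i}\beta_{i}^{U}$ and then reads the conclusion ``$L(p_{i}^{U})$ small $\Rightarrow$ $p_{i}^{U}\leq x^{*}+\eta/2$'' as a statement about $\beta_{i}^{U}$ itself; this silently identifies $\beta_{i}^{U}$ with $p_{i}^{U}=P(\beta_{i}^{U})$ and would need an extra argument wherever $P$ is flat above $x^{*}$. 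You instead bridge $p^{U}$ and $a^{U}$ probabilistically through Lemma \ref{lem:Probability bound 1}, converting the per-agent tail bound into an $L^{1}$ bound on $Z_{i}^{+}$ (paying the quadratic price in the threshold), summing with weights $g_{i}$, applying a single Markov inequality, and only then using the symmetry $g_{ij}=g_{ji}$ to turn $\sum_{i}g_{i}Z_{i}$ into $\sum_{j}g_{j}\bigl(a_{j}^{U}-\max(x^{*},p_{j}^{U})\bigr)$. This is longer but self-contained and avoids any invertibility assumption on $P$ near $x^{*}$; it also puts Lemma \ref{lem:Probability bound 1} to direct use (in the paper it only enters through the bound on $A$). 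The one technicality you should make explicit is that Lemma \ref{lem:Probability bound 1} and the main bound are stated for a fixed deterministic period $s$, whereas you apply them at the (random) terminal profile; this is harmless because the dynamics stops after at most $N$ revisions, so one can extend the process to be constant after stopping and take $s=N$, but a sentence to that effect is needed.
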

\begin{proof}
By definition, $x^{*}$ is the unique maximizer of $L\left(x\right)$.
Fix $\eta>0$ and find $\xi>0$ such that $\sqrt{\xi}\leq\eta$ and
if $L\left(x\right)\leq\sqrt{\xi}$, then $x\leq x^{*}+\frac{1}{2}\eta$.

Let$\left(a^{t},\beta^{t},p^{t}\right)_{t}$ be the upper best response
dynamics defined in Section \ref{subsec:Best-response-process}. By
Lemmas \ref{eq:main bound}, \ref{lem:F(p) bound}, and \ref{lem: A bound},
if $d\leq d_{\xi}^{U}=\max\left(d_{\xi}^{F},d_{\xi}^{A}\right)$,
then
\[
\sum_{i}g_{i}L\left(p_{i}^{U}\right)\leq\xi\sum_{i}g_{i}
\]
with a probability of at least $1-\xi$. It follows that $\sum_{i:L\left(p_{i}^{U}\right)\geq\sqrt{\xi}}g_{i}\leq\sqrt{\xi},$
which implies that $\sum_{i:\beta_{i}^{U}\geq x^{*}+\frac{1}{2}\eta}g_{i}\leq\sqrt{\xi}$.
Hence,
\[
\sum g_{i}\beta_{i}^{U}\leq\sum_{i:\beta_{i}^{U}\leq x^{*}+\frac{1}{2}\eta}g_{i}\left(x^{*}+\frac{1}{2}\eta\right)+\sqrt{\xi}\sum g_{i}\leq\left(x^{*}+\eta\right)\sum_{i}g_{i}.
\]
Finally, notice that 
\[
\text{Av}\left(a^{U}\right)=\sum_{i}g_{i}a_{i}^{U}=\sum_{i}\sum_{j}g_{ij}a_{i}^{U}=\sum_{i}\sum_{j}g_{ij}a_{i}^{U}=\sum_{i}g_{i}\beta_{i}^{U}.
\]
The result follows from the above inequality.
\end{proof}

\subsection{Proof of Theorem \ref{thm:RU dominant always}}

Lemma \ref{lem:upper equilbrium bound} shows that the best response
dynamics, where players only revise their actions upwards, stop with
a profile $a^{U}$ with average payoffs close to $x^{*}$. An analoguous
result shows that a lower version of the best response dynamics, initiated
from the same profile $a^{0}$ and where players only revise their
actions downwards, stop with a profile $a^{L}$ with average payoffs
also close to $x^{*}.$ 

Due to payoff complementarities, the lower best response dynamics
initiated from profile $a^{U}$ will stop at equilibrium profile $a^{UL}$
that lies in between $a^{U}$ and $a^{L}$. The latter implies that
the average payoffs must lie in between the average payoffs $\text{Av}\left(a^{U}\right)$
and $\text{Av}\left(a^{L}\right)$. The claim follows. 

\subsection{Extension to unweighted average\label{sec:Extension-to-unweighted}}

The argument remains identical except for the following modification
of Lemma \ref{lem:upper equilbrium bound}: For each $\eta>0$ and
$w<\infty$, there exists $d_{\eta}^{U}>0$ such that, if $d\left(g\right)\leq d_{\eta}^{U}$,
and $w\left(g\right)\leq w$ then
\[
\Prob\left(\text{Av}_{\text{unweighted}}\left(Ua^{0}\right)\geq\left(\eta+x^{*}\right)\right)\leq\eta.
\]

To see the above claim, recall that $a_{i}^{U}\geq a_{i}^{0}$ . Hence
\begin{align*}
 & \text{Av}_{\text{unweighted}}\left(a^{U}\right)-\text{Av}_{\text{unweighted}}\left(a\right)\\
= & \frac{1}{N}\sum_{i}\left(a_{i}^{U}-a_{i}^{0}\right)=\frac{1}{\min_{i}g_{i}}\frac{1}{N}\sum_{i}\left(\min_{j}g_{j}\right)\left(a_{i}^{U}-a_{i}^{0}\right)\\
\leq & \frac{1}{\min_{i}g_{i}}\frac{1}{N}\sum_{i}g_{i}\left(a_{i}^{U}-a_{i}^{0}\right)\leq\frac{1}{\min_{i}g_{i}}\frac{\sum g_{i}}{N}\frac{1}{\sum g_{i}}\sum_{i}g_{i}\left(a_{i}^{U}-a_{i}^{0}\right)\\
\leq & \frac{\max_{i}g_{i}}{\min_{i}g_{i}}\left(\text{Av}\left(Ua\right)-\text{Av}\left(a^{0}\right)\right)=w\left(g\right)\left(\text{Av}\left(Ua\right)-\text{Av}\left(a^{0}\right)\right).
\end{align*}
An application of Lemma \ref{lem:upper equilbrium bound} established
the claim. 

\bibliographystyle{ecca}
\bibliography{\string"C:/Users/Marcin/Documents/A My work/A pisanina/bibliography/RUNetworks\string"}

\end{document}